\newtheorem{theorem}{Theorem}[section]
\newtheorem{proposition}{Proposition}
\numberwithin{equation}{section}
\newtheorem{remark}{Remark}[section]
\begin{document}

\date{}
\title{Gompertz - Power Series Distributions}
\author{A. A. Jafari$^{1,}$\thanks{E-mail: aajafari@yazd.ac.ir},  S. Tahmasebi$^2$ \\
{\small $^{1}$Department of Statistics, Yazd University, Yazd, Iran}\\
{\small $^{2}$Department of Statistics, Persian Gulf University, Bushehr, Iran}\\
}
\date{}
\maketitle
\begin{abstract}
In this paper, we introduce the Gompertz power series (GPS) class of distributions which is obtained by compounding Gompertz and power series distributions. This distribution contains several lifetime models such as Gompertz-geometric (GG), Gompertz-Poisson (GP), Gompertz-binomial (GB), and
 Gompertz-logarithmic (GL) distributions as special cases. Sub-models of the GPS distribution are studied in details. The hazard rate function of the GPS distribution can be increasing, decreasing, and bathtub-shaped. We obtain several properties of the GPS distribution such as its probability density function, and failure rate function, Shannon entropy, mean residual life function, quantiles and moments. The maximum likelihood estimation procedure via a EM-algorithm is presented, and simulation studies are performed for evaluation of this estimation for complete data, and the MLE of parameters for censored data. At the end, a real example is given.

\end{abstract}

{\bf Keywords}: EM algorithm; Gompertz distribution; Maximum likelihood estimation; Power series distributions.

\section{Introduction}

The  exponential distribution is commonly used in many applied problems, particularly in lifetime data analysis. A generalization of this distribution is the Gompertz distribution. It is a lifetime distribution and is often applied to describe the distribution of adult life spans by actuaries and demographers. In some sciences such as  biology, gerontology, computer, and marketing science,  the Gompertz distribution is considered for the analysis of survival.

A random variable $X$ is said to have a Gompertz distribution, denoted by $X\sim G(\beta,\gamma)$, if its
cumulative distribution function (cdf) is
\begin{eqnarray}\label{1e}
G(x)=1-e^{-\frac{\beta}{\gamma}(e^{\gamma x}-1)}, \;\;x\geq0, \;\; \beta>0, \;\;\gamma>0,
\end{eqnarray}
and the probability density function (pdf) is
\begin{eqnarray}\label{2e}
g(x)= \beta e^{\gamma x}e^{-\frac{\beta}{\gamma}(e^{\gamma x}-1)}.
\end{eqnarray}

The Gompertz distribution is a flexible distribution that can be skewed to the right and to the left. The hazard rate function of Gompertz distribution is $h_g(x)=\beta e^{\gamma x}$ which is a increasing function. The  exponential distribution can be derived from the  Gompertz distribution  when $\gamma\rightarrow0^+$.

Also, a discrete random variable, $N$  is a member of power series distributions (truncated at zero) if its probability mass function is given by
\begin{eqnarray}
P(N=n)=\frac{a_{n}\theta^{n}}{C(\theta)},\;\; n=1,2,...,
\end{eqnarray}
where $a_{n}\geq0$, $C(\theta)=\sum\limits_{n=1}^{\infty}a_{n}\theta^{n}$, and $\theta\in(0,s)$ is chosen such that $C(\theta)$ is finite and its first, second and third derivatives are defined and shown by $C'(.)$, $C''(.)$ and $C'''(.)$. The term "power series distribution" is generally credited to
\cite{noack-50}.
 This family of distributions includes many of the most common distributions, including the binomial, Poisson, geometric, negative binomial, logarithmic distributions. For more details of power series distributions, see \cite{jo-ke-ko-05},
page 75.

In this paper, we compound the Gompertz  and power series distributions and  introduce a new class of distribution. This procedure follows similar way that was previously carried out by some authors: The exponential-power series distribution is introduced by \cite{ch-ga-09},
which is included the exponential-geometric \citep{ad-lo-98,ad-di-05},
exponential-Poisson \citep{kus-07},
and exponential-logarithmic \citep{ta-re-08}
 distributions; the Weibull-power series distributions is introduced by \cite{mo-ba-11}
   which is a generalization of the exponential-power series distribution;
the generalized exponential-power series distribution is introduced by \cite{ma-ja-12}
which is included the Poisson-exponential \citep{ca-lo-fr-ba-11},
 complementary exponential-geometric \citep{lo-ro-ca-11},
 and the complementary exponential-power series \citep{fl-bo-ca-11}
  distributions.

 The remainder of our paper is organized as follows: in Section \ref{sec.GPS}, we give the density and failure rate functions of the GPS distribution. Some properties
such as quantiles, moments, order statistics, Shannon entropy and mean residual life  are given in Section \ref{sec.pro}. Special cases of GPS distribution are given in Section \ref{sec.spe}.
We discuss estimation by maximum likelihood and provide an expression for Fisher's information matrix in Section \ref{sec.est}. In this Section, we present the estimation based on EM-algorithm, and
Section \ref{sec.sim} contains Monte Carlo simulation results on the finite sample behavior of these estimators. In this Section, we also investigate the properties of MLE of parameters when the data are censored. An application of GPS distribution is given in the Section \ref{sec.ex}.

\section{The Gompertz-power series model}

\label{sec.GPS}
The GPS model is derived as follows. Let $N$ be a random variable denoting the number of failure causes which it is a member of power series distributions (truncated at zero). For given $N$, let $X_{1},X_{2},...,X_{N}$ be independent identically distributed random variables from Gompertz distribution. If we consider
$X_{(1)}=\min(X_1,...,X_N)$, then $X_{(1)}\mid N=n$  has Gompertz distribution with parameters  $n\beta$ and $\gamma$. Therefore,
the GPS class of distributions, denoted by $GPS(\beta,\gamma,\theta)$, is defined by
 \begin{eqnarray}\label{FGP}
F(x)=1-\frac{C(\theta-\theta G(x))}{C(\theta)}=1-\frac{C(\theta e^{-\frac{\beta}{\gamma}(e^{\gamma x}-1)})}{C(\theta)}, \ \ \ \ \ x>0.
\end{eqnarray}
The pdf of $GPS(\beta,\gamma,\theta)$ is given by
\begin{eqnarray}\label{fGP}
f(x)=\theta g(x) \frac{{C'}(\theta-\theta G(x))}{C(\theta)}=\theta \beta e^{\gamma x}e^{-\frac{\beta}{\gamma}(e^{\gamma x}-1)} \frac{C'(\theta e^{-\frac{\beta}{\gamma}(e^{\gamma x}-1)})}{C(\theta)}.
\end{eqnarray}

\bigskip
\begin{proposition}
If $C(\theta)=\theta$, then the Gompertz distribution function  concludes from the GPS distribution function in \eqref{FGP}. Therefore, the Gompertz distribution is a special case of GPS distribution.
\end{proposition}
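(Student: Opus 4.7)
The plan is essentially a one-line substitution into equation~\eqref{FGP}, so there is no real obstacle. First I would note that if $C(\theta) = \theta$, then $C$ is the identity function, so $C(\theta - \theta G(x)) = \theta - \theta G(x) = \theta(1-G(x))$ and $C(\theta) = \theta$. Substituting these into the GPS cdf yields
\begin{equation*}
F(x) = 1 - \frac{\theta(1 - G(x))}{\theta} = G(x),
\end{equation*}
which is exactly the Gompertz cdf from~\eqref{1e}.

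As a sanity check I would verify the same identification at the density level via~\eqref{fGP}: with $C(\theta)=\theta$ we have $C'(\cdot)\equiv 1$, so $f(x) = \theta g(x) \cdot 1 / \theta = g(x)$, matching~\eqref{2e}. I would also briefly remark that taking $C(\theta)=\theta$ corresponds to the degenerate power series distribution with $a_1 = 1$ and $a_n = 0$ for $n \ge 2$, i.e.\ $N \equiv 1$ almost surely, which is consistent with the compounding construction: when there is only a single failure cause, $X_{(1)} = X_1 \sim G(\beta,\gamma)$. No further computation is required.
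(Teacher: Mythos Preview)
Your proposal is correct; the paper in fact states this proposition without proof, as the claim is immediate by direct substitution, and your argument supplies exactly that substitution. Your additional remarks (the density-level check and the interpretation $a_1=1$, $a_n=0$ for $n\ge 2$, i.e.\ $N\equiv 1$) are valid and helpful, though not strictly necessary.
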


\begin{proposition}
The limiting distribution of $GPS(\beta,\gamma,\theta)$ when $\theta\rightarrow 0^{+}$ is
\[ {\mathop{\lim }_{\theta \rightarrow 0^{+}}  F(x)}=1-e^{\frac{-c\beta}{\gamma}(e^{\gamma x}-1)},\]
which is a $G(c\beta,\gamma)$, where $c=\min\{n\in N: a_{n}>0\}$.
\end{proposition}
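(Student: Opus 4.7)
The plan is to exploit the power-series structure of $C$ near the origin by factoring out its leading monomial. Since $a_n \ge 0$ for all $n$ and $c = \min\{n\in\mathbb{N} : a_n>0\}$, I would write
\[
C(\theta) \;=\; a_c \theta^c + \sum_{n>c} a_n \theta^n \;=\; a_c \theta^c\bigl(1 + R(\theta)\bigr),
\qquad R(\theta) := \sum_{n>c}\tfrac{a_n}{a_c}\theta^{\,n-c},
\]
and note that $R(\theta)\to 0$ as $\theta\to 0^+$ because the series defining $C$ is convergent on $(0,s)$.

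Next, set $u(x) := e^{-\beta(e^{\gamma x}-1)/\gamma} = 1 - G(x)$, so that $u(x)\in(0,1]$ for $x\ge 0$. Then the argument $\theta u(x)$ appearing inside $C$ in \eqref{FGP} also tends to $0^+$ as $\theta\to 0^+$, so the same factorization applies and
\[
1-F(x) \;=\; \frac{C(\theta u(x))}{C(\theta)} \;=\; \frac{a_c\bigl(\theta u(x)\bigr)^c\bigl(1+R(\theta u(x))\bigr)}{a_c\theta^c\bigl(1+R(\theta)\bigr)} \;=\; u(x)^c\cdot\frac{1+R(\theta u(x))}{1+R(\theta)}.
\]
Letting $\theta\to 0^+$, both $R(\theta)$ and $R(\theta u(x))$ vanish, so $1-F(x) \to u(x)^c = e^{-c\beta(e^{\gamma x}-1)/\gamma}$, and comparison with \eqref{1e} identifies the limit as the cdf of $G(c\beta,\gamma)$.

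There is no real obstacle here; the only point that requires a sentence of justification is why $R(\theta)\to 0$, which follows from the fact that the series $\sum_{n>c} a_n\theta^{n-c}$ converges at some positive $\theta$ (since $\theta\in(0,s)$ and $s>0$) and hence is continuous at $0$ with value $0$. Everything else is a direct algebraic manipulation, so the proof is short and essentially consists of the display above together with the recognition step against \eqref{1e}.
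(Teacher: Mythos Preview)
Your argument is correct and is the natural one: factor out the leading monomial $a_c\theta^c$ from $C(\theta)$ and from $C(\theta u(x))$, cancel, and let the remainder $R(\theta)\to 0$. The paper itself states this proposition without proof, so there is no alternative argument to compare against; your justification of $R(\theta)\to 0$ via continuity of the power series at the origin is exactly what is needed to make the display rigorous.
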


\begin{proposition}
 The limiting distribution of $GPS(\beta,\gamma,\theta)$ when $\gamma\longrightarrow0^+$ is
\[ {\mathop{\lim }_{\gamma \rightarrow 0^{+}}  F(x)}=1-\frac{C(\theta e^{-\beta x})}{C(\theta)}. \]
In fact, it  is the cdf of the exponential-power series (EPS) distribution and is introduced by
\cite{ch-ga-09}.
This distribution contains several distributions; geometric-exponential distribution
\citep{ad-lo-98,ad-di-05},
 Poisson-exponential distribution
 \citep{kus-07},
  and logarithmic-exponential distribution
  \citep{ta-re-08}.
  Therefore, the GPS distribution is a generalization of EPS distribution.
   Note that EPS distribution is a distribution family with decreasing failure rate (hazard rate).
\end{proposition}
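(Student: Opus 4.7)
The plan is to take the closed form of the cdf given in \eqref{FGP} and simply pass to the limit, since everything reduces to a single elementary limit inside the argument of $C$. Write
\[
F(x) \;=\; 1 - \frac{C\!\left(\theta\, e^{-\frac{\beta}{\gamma}(e^{\gamma x}-1)}\right)}{C(\theta)},
\]
so the only $\gamma$-dependent quantity is the exponent $\frac{\beta}{\gamma}(e^{\gamma x}-1)$.

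First, I would establish the elementary limit
\[
\lim_{\gamma\to 0^+}\frac{e^{\gamma x}-1}{\gamma} \;=\; x,
\]
either by L'H\^opital's rule or by the Taylor expansion $e^{\gamma x}=1+\gamma x + \tfrac{(\gamma x)^2}{2}+\cdots$, which gives $\frac{e^{\gamma x}-1}{\gamma}=x+O(\gamma)$. Multiplying by $\beta$ and using continuity of the exponential then yields
\[
\lim_{\gamma\to 0^+}\theta\, e^{-\frac{\beta}{\gamma}(e^{\gamma x}-1)} \;=\; \theta\, e^{-\beta x}.
\]

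Second, I would invoke continuity of $C$ to move the limit inside. Since $C(\theta)=\sum_{n\ge1}a_n\theta^n$ is a power series with radius of convergence at least $s$, it is continuous on $(0,s)$. For $\theta\in(0,s)$ and $x>0$ the argument $\theta e^{-\frac{\beta}{\gamma}(e^{\gamma x}-1)}$ stays in $(0,\theta]\subset(0,s)$ for all $\gamma>0$, and the limiting value $\theta e^{-\beta x}$ lies in the same interval, so continuity applies and gives
\[
\lim_{\gamma\to 0^+} C\!\left(\theta\, e^{-\frac{\beta}{\gamma}(e^{\gamma x}-1)}\right) \;=\; C\!\left(\theta\, e^{-\beta x}\right),
\]
whence $\lim_{\gamma\to 0^+} F(x)=1-C(\theta e^{-\beta x})/C(\theta)$, matching the EPS cdf of Chahkandi and Ganjali.

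There is essentially no hard step here; the entire argument is a one-line calculation plus a continuity remark. The only point that deserves explicit mention is checking that the argument of $C$ stays within the convergence interval $(0,s)$ so that swapping limit and $C$ is legitimate; I would state this explicitly to keep the proof self-contained. The final sentence of the proposition, identifying the limit as the EPS family and listing its sub-models, requires no proof beyond pointing back to the cited references.
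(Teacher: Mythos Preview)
Your argument is correct and is exactly the natural justification: the limit $\lim_{\gamma\to 0^+}\frac{e^{\gamma x}-1}{\gamma}=x$ together with continuity of the power series $C$ on $(0,s)$ immediately gives the result, and your remark that the argument $\theta e^{-\frac{\beta}{\gamma}(e^{\gamma x}-1)}$ remains in $(0,\theta]\subset(0,s)$ is the right way to make the continuity step rigorous. Note, however, that the paper itself does not supply a proof for this proposition; it is stated as an immediate observation and left to the reader, so your write-up is not so much matching the paper's proof as filling in what the authors considered evident.
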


\begin{proposition}
 The densities of GPS class can be expressed as infinite linear combination of density of order distribution, i.e. it can be written as
\begin{eqnarray}
f(x)=\sum\limits_{n=1}^{\infty} P(N=n) \ g_{(1)}(x;n),
\end{eqnarray}
where $g_{(1)}(x;n)$ is the pdf of $Y_{(1)}=\min(Y_{1},Y_{2},...,Y_{n})$, given by
 $$g_{(1)}(x;n)=n g(x)[1-G(x)]^{n-1}=n \beta e^{\gamma x}e^{\frac{-n\beta}{\gamma}(e^{\gamma x}-1)},$$
 i.e. Gompertz distribution with parameters $n\beta$ and $\gamma$.
 \end{proposition}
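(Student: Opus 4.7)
My plan is to prove the proposition by a direct term-by-term expansion of the representation of $f$ given in \eqref{fGP}, using nothing more than the power series definition of $C$ and the standard formula for the density of the minimum of $n$ i.i.d.\ random variables.

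First I would recall that $C(\theta)=\sum_{n=1}^{\infty}a_n\theta^n$, so by termwise differentiation one obtains
\[
C'(u)=\sum_{n=1}^{\infty}n\,a_n\,u^{n-1}
\]
for $u\in(0,s)$. Setting $u=\theta-\theta G(x)=\theta(1-G(x))$ (which lies in $(0,s)$ for $x>0$) gives
\[
C'\bigl(\theta(1-G(x))\bigr)=\sum_{n=1}^{\infty}n\,a_n\,\theta^{\,n-1}\bigl(1-G(x)\bigr)^{n-1}.
\]
Next I would substitute this expansion into \eqref{fGP}, factor out $\theta$ to combine with $\theta^{n-1}$ into $\theta^n$, and reorganize to isolate the power series coefficient $a_n\theta^n/C(\theta)=P(N=n)$. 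This produces
\[
f(x)=\sum_{n=1}^{\infty}\frac{a_n\theta^n}{C(\theta)}\cdot n\,g(x)\bigl(1-G(x)\bigr)^{n-1}
=\sum_{n=1}^{\infty}P(N=n)\,n\,g(x)\bigl(1-G(x)\bigr)^{n-1}.
\]

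To identify the inner factor with $g_{(1)}(x;n)$, I would invoke the standard order-statistics identity that the pdf of the minimum of $n$ i.i.d.\ variables with pdf $g$ and cdf $G$ is $n\,g(x)[1-G(x)]^{n-1}$. Plugging in the Gompertz forms \eqref{1e} and \eqref{2e}, so that $1-G(x)=e^{-\beta(e^{\gamma x}-1)/\gamma}$, collapses the exponentials and yields the explicit expression $n\beta e^{\gamma x}e^{-n\beta(e^{\gamma x}-1)/\gamma}$, which is the $G(n\beta,\gamma)$ density as claimed.

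There is no real obstacle here; the only point requiring a brief justification is termwise differentiation of $C$, which is legitimate inside the radius of convergence $s$, and the interchange of summation with the factor $\theta g(x)/C(\theta)$, which is immediate since the latter does not depend on $n$. Once the expansion of $C'$ is in hand, the rest is algebraic bookkeeping.
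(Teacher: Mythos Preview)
Your proof is correct. The paper states this proposition without proof, relying implicitly on the compound construction in Section~\ref{sec.GPS} (where $X_{(1)}=\min(X_1,\dots,X_N)$ with $N$ power-series distributed), and your direct expansion of $C'$ term by term is exactly the natural way to verify the identity from the closed form \eqref{fGP}.
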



\begin{proposition}
 The survival function and the hazard rate function of the GPS class of distributions, are given respectively by
\begin{eqnarray}\label{hGP}
S(x)=\frac{C(\theta e^{-\frac{\beta}{\gamma}(e^{\gamma x}-1)})}{C(\theta)}, \ \  \ \ \ \  \ \
 h(x)=\theta \beta e^{\gamma x}e^{-\frac{\beta}{\gamma}(e^{\gamma x}-1)} \frac{C'(\theta e^{-\frac{\beta}{\gamma}(e^{\gamma x}-1)})}{C(\theta e^{-\frac{\beta}{\gamma}(e^{\gamma x}-1)})}.
 \end{eqnarray}
 \end{proposition}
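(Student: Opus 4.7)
The plan is entirely routine: both formulas follow by direct substitution from the cdf \eqref{FGP} and pdf \eqref{fGP} that were derived earlier in this section, using only the definitional identities $S(x)=1-F(x)$ and $h(x)=f(x)/S(x)$.

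First I would write
\[
S(x)=1-F(x)=1-\left(1-\frac{C(\theta-\theta G(x))}{C(\theta)}\right)=\frac{C(\theta-\theta G(x))}{C(\theta)},
\]
and then insert $1-G(x)=e^{-\frac{\beta}{\gamma}(e^{\gamma x}-1)}$ from \eqref{1e} so that $\theta-\theta G(x)=\theta e^{-\frac{\beta}{\gamma}(e^{\gamma x}-1)}$. This immediately gives the claimed closed form for $S(x)$.

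Next, for the hazard rate I would compute $h(x)=f(x)/S(x)$ by taking the pdf expression from \eqref{fGP}, namely $f(x)=\theta\beta e^{\gamma x}e^{-\frac{\beta}{\gamma}(e^{\gamma x}-1)}\,C'(\theta e^{-\frac{\beta}{\gamma}(e^{\gamma x}-1)})/C(\theta)$, and dividing by the $S(x)$ just obtained. The common factor $C(\theta)$ in numerator and denominator cancels, leaving exactly the stated formula.

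There is no real obstacle: the only "content" is algebraic simplification and recognizing that the arguments appearing in $C$ and $C'$ are the same, so cancellation is clean. I would keep the write-up to just a few lines, since all the modelling work (the compounding construction producing \eqref{FGP} and \eqref{fGP}) has already been done above.
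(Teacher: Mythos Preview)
Your proposal is correct and is exactly the intended derivation: the paper states this proposition without proof precisely because it follows immediately from $S(x)=1-F(x)$ and $h(x)=f(x)/S(x)$ applied to \eqref{FGP} and \eqref{fGP}, which is what you do.
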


\begin{proposition}
 For the pdf in \eqref{fGP} we have
 \begin{eqnarray*}
\lim_{x\rightarrow 0^+}f(x)= \dfrac{\beta\theta C'(\theta)}{ C(\theta)}=\beta E(N), \ \ \ \  \lim_{x\rightarrow +\infty}f(x)= 0.
\end{eqnarray*}
\end{proposition}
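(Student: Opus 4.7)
The plan is to evaluate both one-sided limits by direct substitution, exploiting continuity of the power series $C$ and its derivative $C'$ on the interval of convergence, together with a standard identity relating $E(N)$ to $C$.

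For the limit as $x\to 0^+$, I would start from the explicit formula
\[
f(x)=\frac{\theta\beta}{C(\theta)}\,e^{\gamma x}\,e^{-\frac{\beta}{\gamma}(e^{\gamma x}-1)}\,C'\!\bigl(\theta e^{-\frac{\beta}{\gamma}(e^{\gamma x}-1)}\bigr),
\]
and observe that as $x\to 0^+$ we have $e^{\gamma x}\to 1$ and hence $e^{-\frac{\beta}{\gamma}(e^{\gamma x}-1)}\to 1$. Since $C'$ is continuous on $(0,s)$, the last factor tends to $C'(\theta)$, and the displayed expression reduces to $\beta\theta C'(\theta)/C(\theta)$. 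To identify this with $\beta E(N)$, I would differentiate $C(\theta)=\sum_{n\ge 1} a_n\theta^n$ term by term to get $\theta C'(\theta)=\sum_{n\ge 1} n a_n\theta^n$, so dividing by $C(\theta)$ gives $E(N)=\theta C'(\theta)/C(\theta)$, and the first limit follows.

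For the limit as $x\to+\infty$, I would set $y=e^{\gamma x}$ and rewrite the pdf as
\[
f(x)=\frac{\theta\beta e^{\beta/\gamma}}{C(\theta)}\,y\,e^{-\beta y/\gamma}\,C'\!\bigl(\theta e^{-\beta(y-1)/\gamma}\bigr).
\]
The argument of $C'$ tends to $0$ as $y\to\infty$, and since $C'$ is a convergent power series near $0$ (with $C'(0)=a_1$), the factor $C'(\theta e^{-\beta(y-1)/\gamma})$ remains bounded in a neighborhood of infinity. The remaining factor $y\,e^{-\beta y/\gamma}$ tends to $0$ by the usual dominance of exponential decay over linear growth. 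Multiplying a bounded quantity by a vanishing one gives the second limit.

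I do not anticipate a real obstacle here; the only point requiring a line of justification is the continuity of $C'$ at the boundary points in question (at $\theta$ from the left, and at $0$ from the right), which is immediate from the fact that a power series is analytic inside its disk of convergence and $C'(0)=a_1$ is finite. Everything else is substitution and the elementary limit $\lim_{y\to\infty} y e^{-\beta y/\gamma}=0$.
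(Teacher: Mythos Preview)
Your argument is correct. The paper states this proposition without proof, treating the two limits as immediate; your direct-substitution argument (continuity of $C'$ at $\theta$ for the first limit, boundedness of $C'$ near $0$ combined with $y e^{-\beta y/\gamma}\to 0$ for the second, and the identity $E(N)=\theta C'(\theta)/C(\theta)$) is exactly the natural justification and there is nothing to add.
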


\begin{proposition}
 For the hazard rate function, $h(x)$,  in \eqref{hGP} we have
 \begin{eqnarray*}
\lim_{x\rightarrow 0^+}h(x)=\lim_{x\rightarrow 0^+}f(x)=\dfrac{\beta\theta C'(\theta)}{ C(\theta)} , \ \ \ \  \lim_{x\rightarrow +\infty}h(x)=+\infty.
\end{eqnarray*}
\end{proposition}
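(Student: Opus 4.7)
\bigskip
\noindent\textbf{Proof proposal.} The plan is to evaluate each limit directly from the closed form
$$h(x)=\theta\beta e^{\gamma x}e^{-\frac{\beta}{\gamma}(e^{\gamma x}-1)}\,\frac{C'\!\bigl(\theta e^{-\frac{\beta}{\gamma}(e^{\gamma x}-1)}\bigr)}{C\!\bigl(\theta e^{-\frac{\beta}{\gamma}(e^{\gamma x}-1)}\bigr)},$$
introducing the auxiliary variable $u(x)=\theta e^{-\frac{\beta}{\gamma}(e^{\gamma x}-1)}$, which ranges monotonically from $\theta$ down to $0$ as $x$ runs from $0$ to $+\infty$. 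This reduces both limits to understanding the behaviour of $C'(u)/C(u)$ at the two endpoints $u=\theta$ and $u=0^+$.

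For $x\to 0^+$, we simply plug in: $e^{\gamma x}\to 1$ so $u(x)\to \theta$, and the prefactor $\theta\beta e^{\gamma x}e^{-\frac{\beta}{\gamma}(e^{\gamma x}-1)}\to\theta\beta$. Since $C$ and $C'$ are continuous on the interval of convergence (both are power series with nonnegative coefficients that are differentiable on $(0,s)$), the quotient $C'(u)/C(u)$ tends to $C'(\theta)/C(\theta)$. This yields $\lim_{x\to 0^+}h(x)=\beta\theta C'(\theta)/C(\theta)$. To identify this with $\lim_{x\to 0^+}f(x)$, I would invoke the preceding proposition and also observe that $S(0^+)=C(\theta)/C(\theta)=1$, so $h(0^+)=f(0^+)/S(0^+)=f(0^+)$, which is the simplest bookkeeping check.

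The $x\to +\infty$ limit is the main obstacle: both $f(x)\to 0$ and $S(x)\to C(0)/C(\theta)=0$, so it is an indeterminate form, and the prefactor $e^{\gamma x}$ blows up while the Gompertz factor $e^{-\frac{\beta}{\gamma}(e^{\gamma x}-1)}$ collapses to $0$ super-exponentially. To resolve this, I would use the power-series structure of $C$. Letting $c=\min\{n\in\mathbb{N}:a_n>0\}$, one has $C(u)=a_c u^c+a_{c+1}u^{c+1}+\cdots$ and $C'(u)=c a_c u^{c-1}+(c+1)a_{c+1}u^c+\cdots$, so as $u\to 0^+$,
$$\frac{C'(u)}{C(u)}=\frac{c a_c u^{c-1}(1+O(u))}{a_c u^c(1+O(u))}=\frac{c}{u}\bigl(1+O(u)\bigr).$$
Substituting $u=\theta e^{-\frac{\beta}{\gamma}(e^{\gamma x}-1)}$ into $h(x)$, the factor $\theta e^{-\frac{\beta}{\gamma}(e^{\gamma x}-1)}$ cancels precisely with the leading $c/u$, leaving
$$h(x)=c\beta e^{\gamma x}\bigl(1+o(1)\bigr)\longrightarrow +\infty.$$

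The only subtlety worth being careful about is justifying the $O(u)$ remainder uniformly as $x\to\infty$, but since $u(x)\to 0$ and the series $\sum a_n u^n$ converges absolutely on a neighbourhood of $0$, the tail bounds are routine. As a sanity check, the limiting rate $c\beta e^{\gamma x}$ matches the hazard rate of the Gompertz law $G(c\beta,\gamma)$ that appears in the earlier limiting-distribution proposition, which is consistent with the tail being governed by the smallest-order component in the power-series mixture.
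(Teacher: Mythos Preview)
Your argument is correct. The paper states this proposition without proof, so there is no authorial proof to compare against; your direct evaluation via the substitution $u(x)=\theta e^{-\frac{\beta}{\gamma}(e^{\gamma x}-1)}$ together with the leading-order expansion $C'(u)/C(u)\sim c/u$ as $u\to 0^+$ is exactly the right way to resolve the indeterminate form at infinity, and the cancellation yielding $h(x)\sim c\beta e^{\gamma x}$ is clean. The consistency check with the $G(c\beta,\gamma)$ limiting hazard from Proposition~2 is a nice touch.
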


\bigskip

Consider $C\left(\theta \right)=\theta +{\theta }^{20}$. Therefore, the pdf of GPS distribution is given as
 $$
 f\left(x\right)= \beta e^{\gamma x}e^{-\frac{\beta}{\gamma}(e^{\gamma x}-1)}(1+20\theta^{19} e^{-\frac{19\beta}{\gamma}(e^{\gamma x}-1)}) (1 +{\theta }^{19})^{-1}.
$$
The plots of this density and its hazard rate function, for some parameters are given in Figure \ref{dhbim}. For $ \beta=0.1, \gamma=3, \theta=1.0$, this density is bimodal, and the values  of modes are 0.1582 and 1.1505.

\begin{figure}[]
\centering
\includegraphics[scale=0.5]{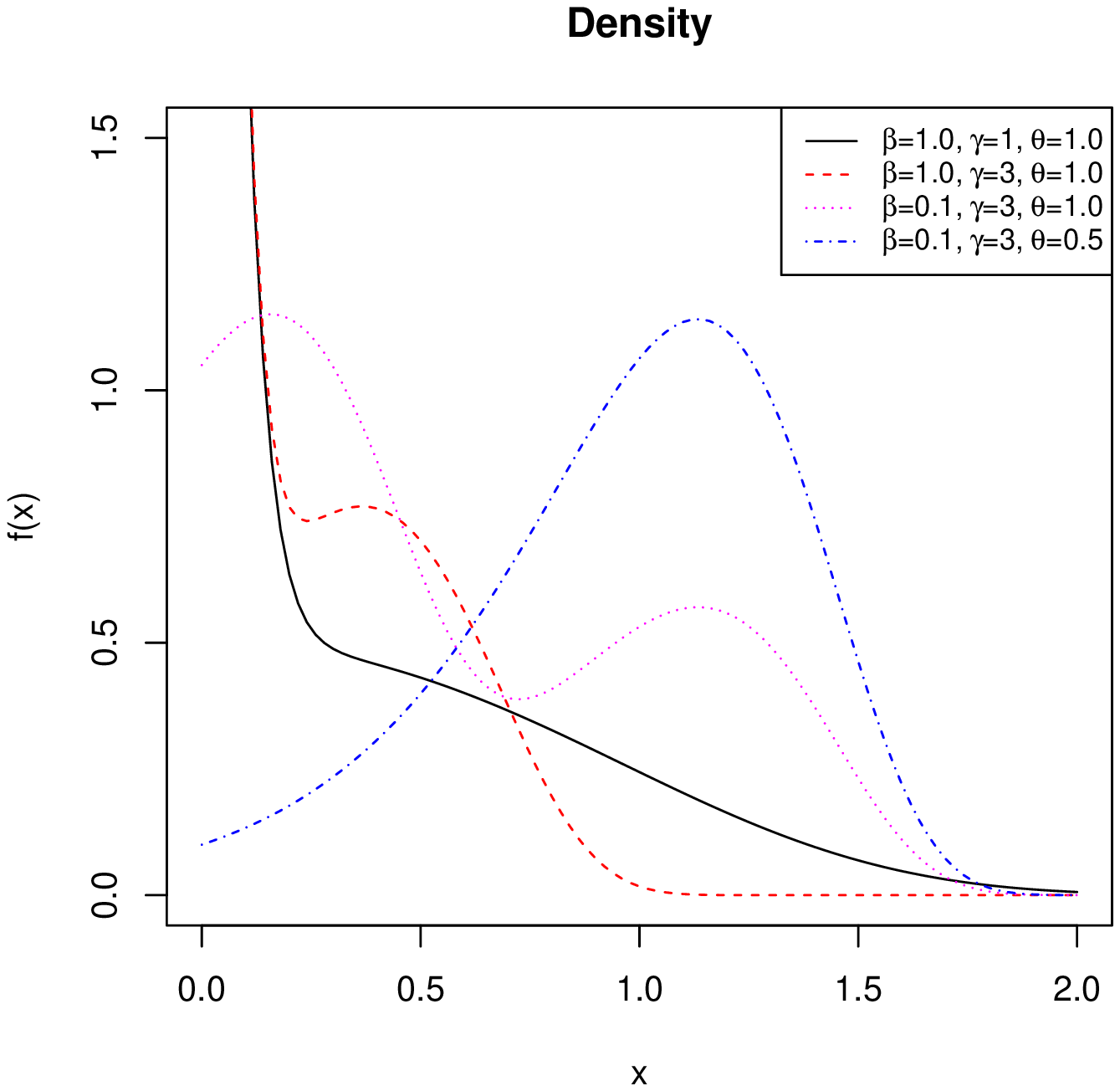}
\includegraphics[scale=0.5]{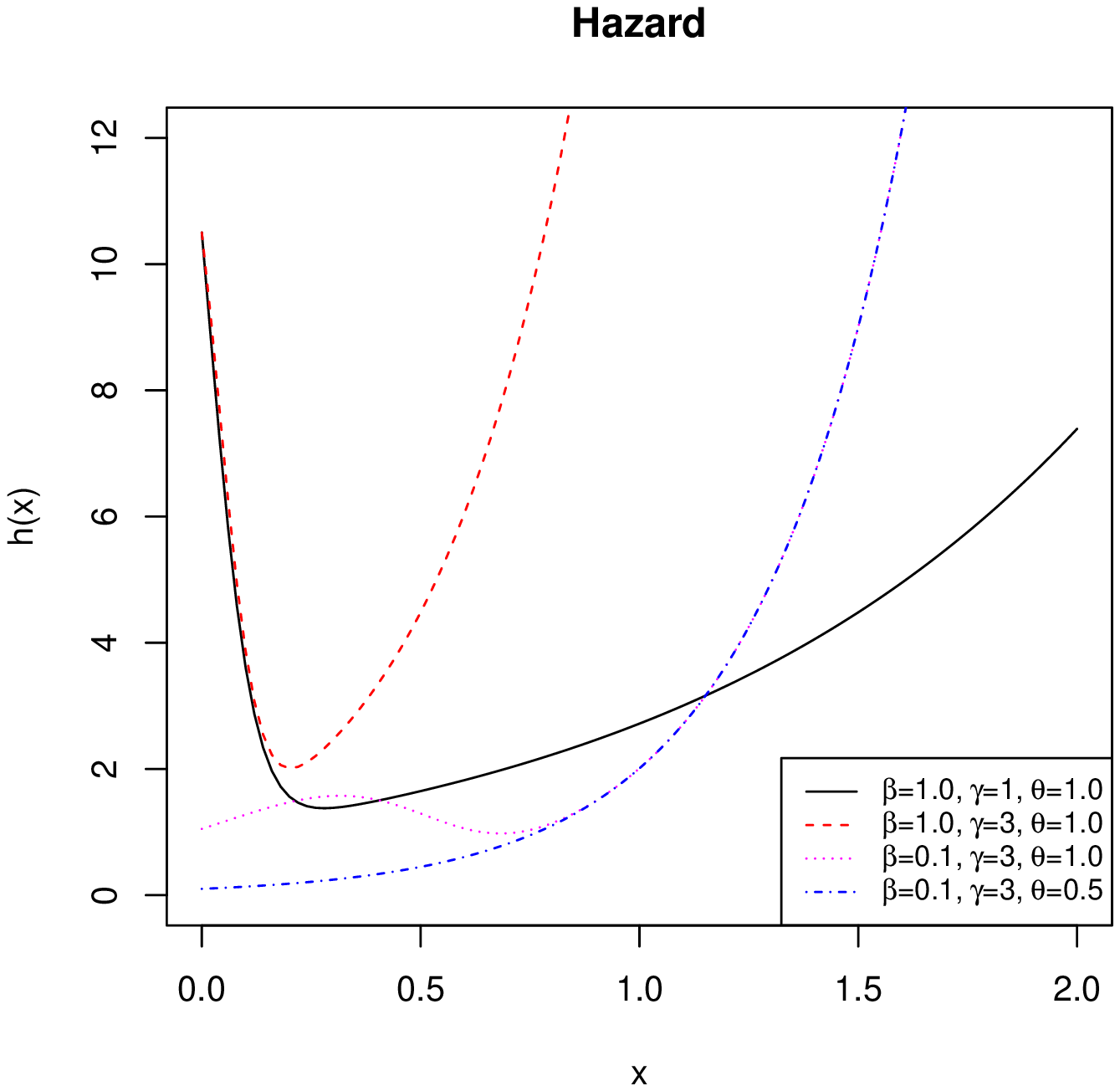}
\vspace{-0.8cm}
\caption[]{Plots of pdf and hazard rate functions of GPS with $C\left(\theta \right)=\theta +{\theta }^{20}$. } \label{dhbim}
\end{figure}

\section{Statistical properties}
\label{sec.pro}
In this section, some properties of the GPS distribution,
such as quantiles, moments, order statistics, Shannon entropy and mean residual life are obtained.
\subsection{Quantiles and Moments}
The quantile $q$ of GPS distribution is given by
$$
x_{q}=G^{-1}\left(1-\frac{1}{\theta}C^{-1}\left((1-q)C(\theta)\right)\right),\;\;\;\;\;0<q<1,
$$
where $G^{-1}(y)=\frac{1}{\gamma}\log\left(1-\frac{\gamma}{\beta} \log(1-y)\right)$ and $C^{-1}(.)$ is the inverse function of $C(.)$.
This  result helps in simulating data from the GPS distribution with generating  uniform distribution data.

For checking the consistency of the simulating data set form GPS distribution, the histogram for a generated data set with size 100
and the exact GPS density with $C\left(\theta \right)=\theta +{\theta }^{20}$, and  parameters $\beta=0.1$, $\gamma=3$, $\theta=1.0$,  are displayed in Figure \ref{Fig.gd}
(left). Also, the empirical distribution function and the exact distribution function are given in Figure \ref{Fig.gd} (right).

\begin{figure}[ht]
\centering
\includegraphics[scale=0.5]{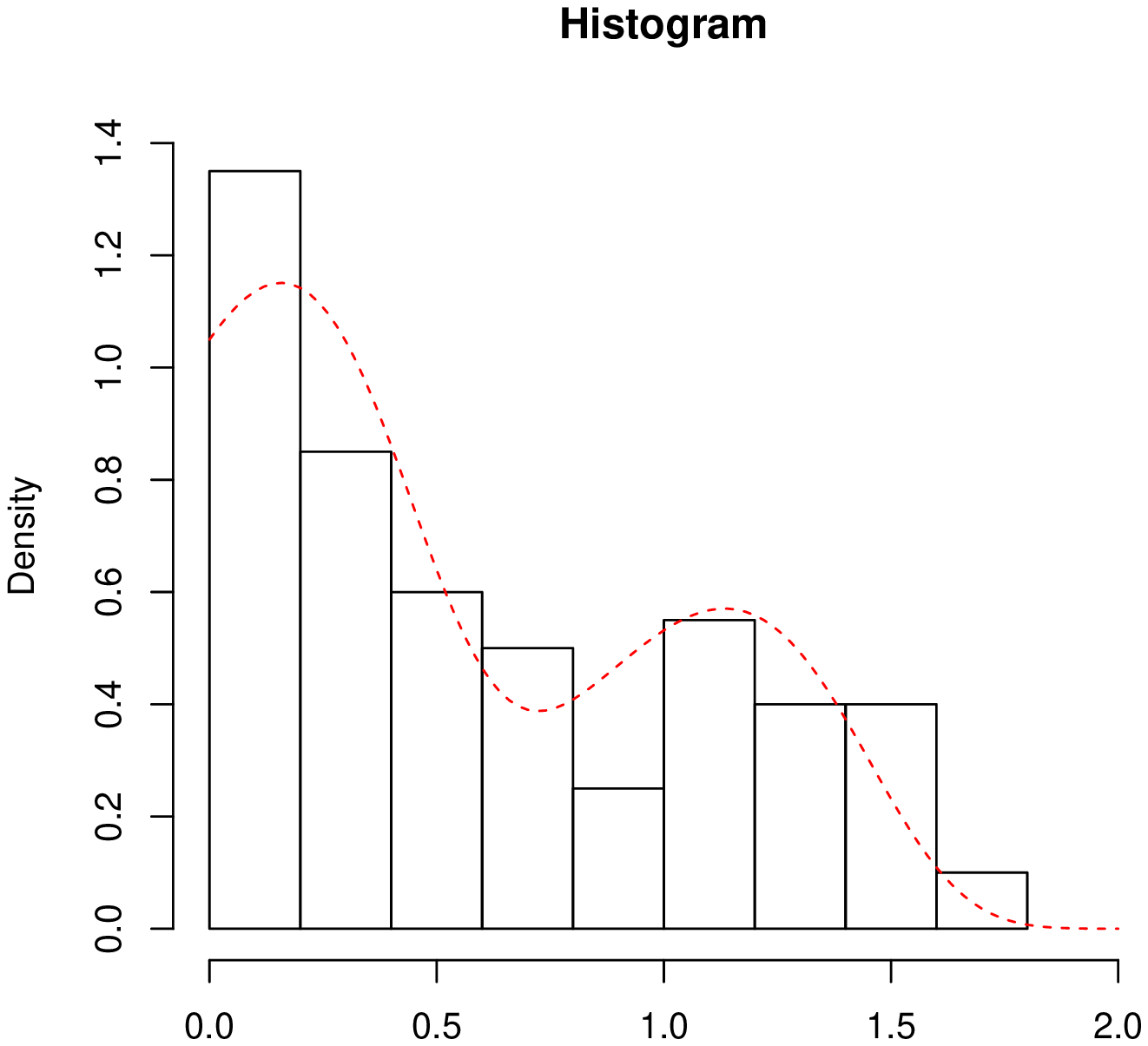}
\includegraphics[scale=0.5]{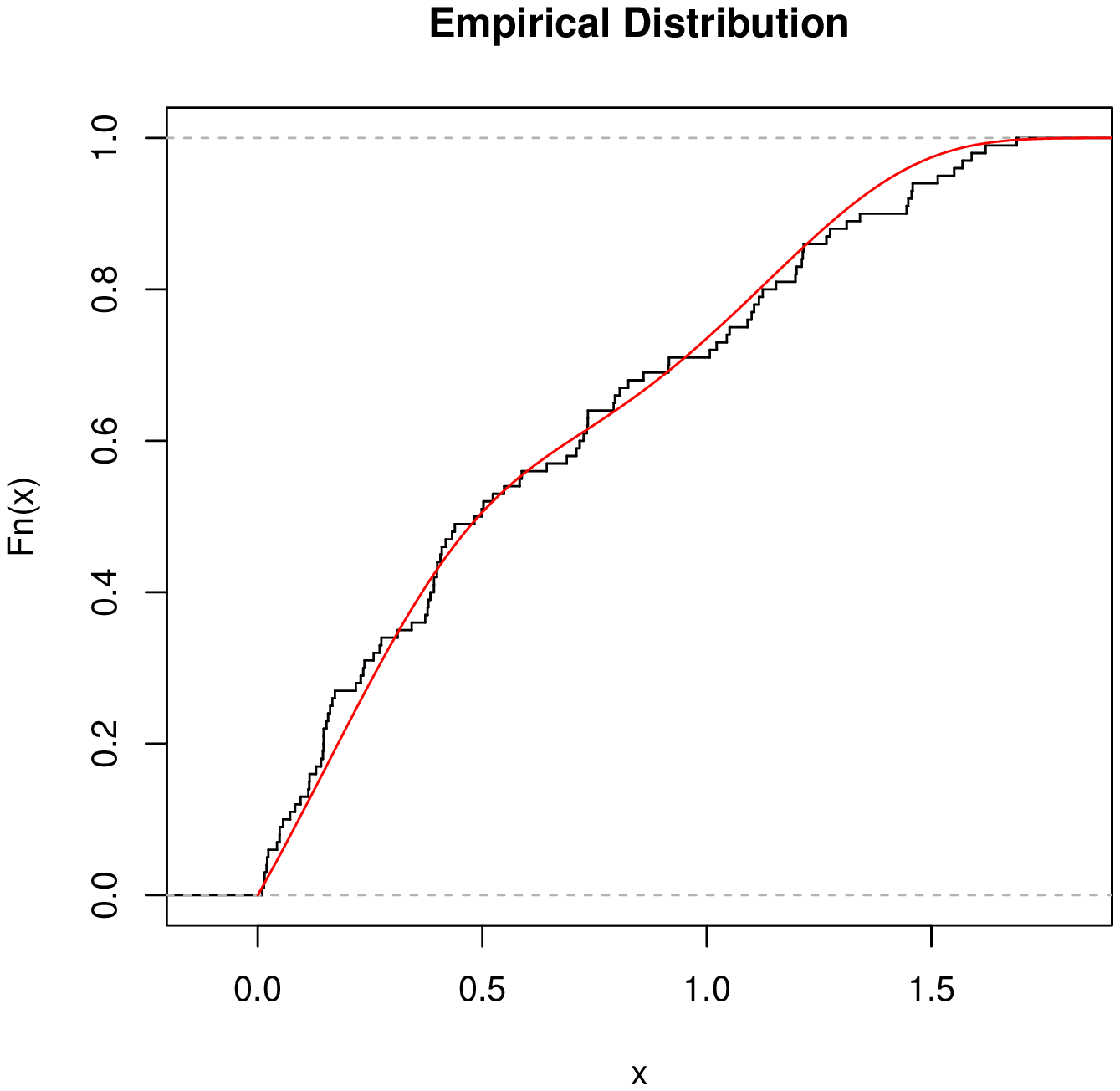}
\vspace{-0.8cm}
\caption[]{The histogram of a generated data set with size 100 and the exact GPS  density (left) and the empirical distribution
function and exact distribution function (right). } \label{Fig.gd}
\end{figure}

Now, we obtain the moment generating function of the GPS distribution by its Laplace transform. Consider $X\sim GPS(\beta,\gamma,\theta)$. Then,
the Laplace transform of the GPS class can be expressed as
\begin{eqnarray}
L(s)=E(e^{-sX})=\sum\limits_{n=1}^{\infty} P(N=n)L_{1}(s),
\end{eqnarray}
where
$$ L_{1}(s)=\frac{n\beta}{\gamma}e^{\frac{n\beta}{\gamma}}W_{\frac{s}{\gamma}}(\frac{n\beta}{\gamma}), $$
is the Laplace transform of Gompertz distribution with parameters $n\beta$ and $\gamma$, and
 $W_{f}(z)=\int_{1}^{\infty}\frac{e^{-zu}}{u^{f}}du$.  \citep[see][]{lenart-12}.
Therefore, the moment generating function of the GPS distribution is
\begin{eqnarray}
M_{X}(t)
=\sum\limits_{n=1}^{\infty}P(N=n)L_{1}(-t)
=\frac{\beta }{\gamma}\sum\limits_{n=1}^{\infty} \frac{ a_{n}\theta^{n}}{C(\theta)}ne^{\frac{n\beta}{\gamma}}W_{\frac{-t}{\gamma}}(\frac{n\beta}{\gamma})
=\frac{\beta}{\gamma} E[Ne^{\frac{N\beta}{\gamma}}W_{\frac{-t}{\gamma}}(\frac{N\beta}{\gamma})].
\end{eqnarray}

We can use $M_{X}(t)$ to obtain the central moment functions, $\mu_{r}=E[X^{r} ]$. But from the direct calculation, we have
\begin{eqnarray}
\mu_{r}=\int_{0}^{+\infty} x^{r}f(x) dx=\sum\limits_{n=1}^{\infty}P(N=n)E[Y_{(1)}^{r}],
\end{eqnarray}
where $ E[Y_{(1)}^{r}]$ is the $r$th moment of $Y_{(1)}$, the Gompertz distribution with parameters $n\beta$ and $\gamma$, given by
\cite{lenart-12}
 as
 \begin{eqnarray}
E[Y_{(1)}^{r}]=\frac{r!}{\gamma^{r}}e^{\frac{n\beta}{\gamma}}W_{1}^{r-1}(\frac{n\beta}{\gamma}),
\end{eqnarray}
where $W_{1}^{r-1}(z)=\frac{1}{(r-1)!}\int_{1}^{\infty}(\ln x)^{r-1}\frac{e^{-zx}}{x}dx$ is the generalised integro-exponential function. 
  See \cite{lenart-12}, for some expressions and approximations about the expected value and variance of Gompertz distribution. For example,
 when $\beta$ is close to $0$, an approximate result for $E[Y_{(1)}]$ 
 is 
\begin{eqnarray}
E[Y_{(1)}]\approx \frac{1}{\gamma}e^{\frac{n\beta}{\gamma}}(\frac{n\beta}{\gamma}-\ln(\frac{n\beta}{\gamma})- 0.57722).
\end{eqnarray}
\subsection{Order statistic}

Let $X_{1},X_{2},...,X_{n}$  be a random sample of size $n$ from $GPS(\beta,\gamma,\theta)$, then the pdf of the $i$th order statistic, say $X_{i:n}$, is given by
$$
f_{i:n}(x)=\frac{n!}{(i-1)!(n-i)!}f(x)[1-\frac{C(\theta e^{-\frac{\beta}{\gamma}(e^{\gamma x}-1)})}{C(\theta)}]^{i-1}[\frac{C(\theta e^{-\frac{\beta}{\gamma}(e^{\gamma x}-1)})}{C(\theta)}]^{n-i},
$$
where $f(.)$ is the pdf given by \eqref{fGP}. Also, the cdf of $X_{i:n}$ is given by
$$
F_{i:n}(x)=\frac{n!}{(i-1)!(n-i)!}\sum\limits_{k=0}^{n-i}\frac{(-1)^{k} \dbinom{n-i}{k}}{k+1}[1-\frac{C(\theta e^{-\frac{\beta}{\gamma}(e^{\gamma x}-1)})}{C(\theta)}]^{k+i},
$$
An analytical expression for  $r$th moment of  order statistics $X_{i:n}$ is obtained as
\begin{eqnarray}
 E[X_{i:n}^{r}]&=& \sum\limits_{k=n-i+1}^{n}r(-1)^{k-n+i-1}\dbinom{k-1}{n-i}\dbinom{n}{k}\int_{0}^{+\infty}x^{r-1}S(x)^{k}dx \nonumber\\
 &=& \sum\limits_{k=n-i+1}^{n}\frac{r(-1)^{k-n+i-1}}{[C(\theta)]^{k}}\dbinom{k-1}{n-i}\dbinom{n}{k}\int_{0}^{+\infty}x^{r-1}[C(\theta e^{-\frac{\beta}{\gamma}(e^{\gamma x}-1)})]^{k}dx.
  \end{eqnarray}
\subsection{Shannon entropy and mean residual life }
 If $X$ is a none-negative continuous random variable with pdf $f(x)$, then Shannon's entropy of $X$ is defined  by
\cite{shan-48}
as
\begin{equation*}
H(f)=E[-\log f(X)]=-\int_{0}^{+\infty} f(x)\ln (f(x))dx,
\end{equation*}
and this  is usually referred to as the continuous entropy (or differential entropy). An explicit expression of
Shannon entropy for GPS  distribution is obtained as
\begin{eqnarray}
H(f)=-\log(\theta\beta)-\gamma\mu_{1}-\frac{\beta}{\gamma}+\frac{\beta}{\gamma}M_{X}(\gamma)+\log(C(\theta))-E_{N}[A(N,\theta)],
\end{eqnarray}
where $A(N,\theta)=\int_{0}^{1}Nu^{N-1}\log(C'(\theta u))du$. Also,  the mean residual life function of $X$ is given by
$$m(t)=E[X-t|X>t]=\frac{\int_{t}^{+\infty}(x-t)f(x)dx}{S(t)}=\frac{C(\theta)E_N[B(t,N,\beta,\gamma)]}{C(\theta e^{\frac{-\beta}{\gamma}(e^{\gamma x}-1)})}-t,$$
where $B(t,N,\beta,\gamma)=\int_{t}^{+\infty}N\beta xe^{\gamma x}e^{-\frac{N\beta}{\gamma}(e^{\gamma x}-1)}dx$.

\section{Special cases of the GPS distributions}
\label{sec.spe}
In this Section, we consider four special cases of the GPS distribution.

\subsection{ Gompertz - geometric distribution}

\begin{figure}[]
\centering
\includegraphics[scale=0.35]{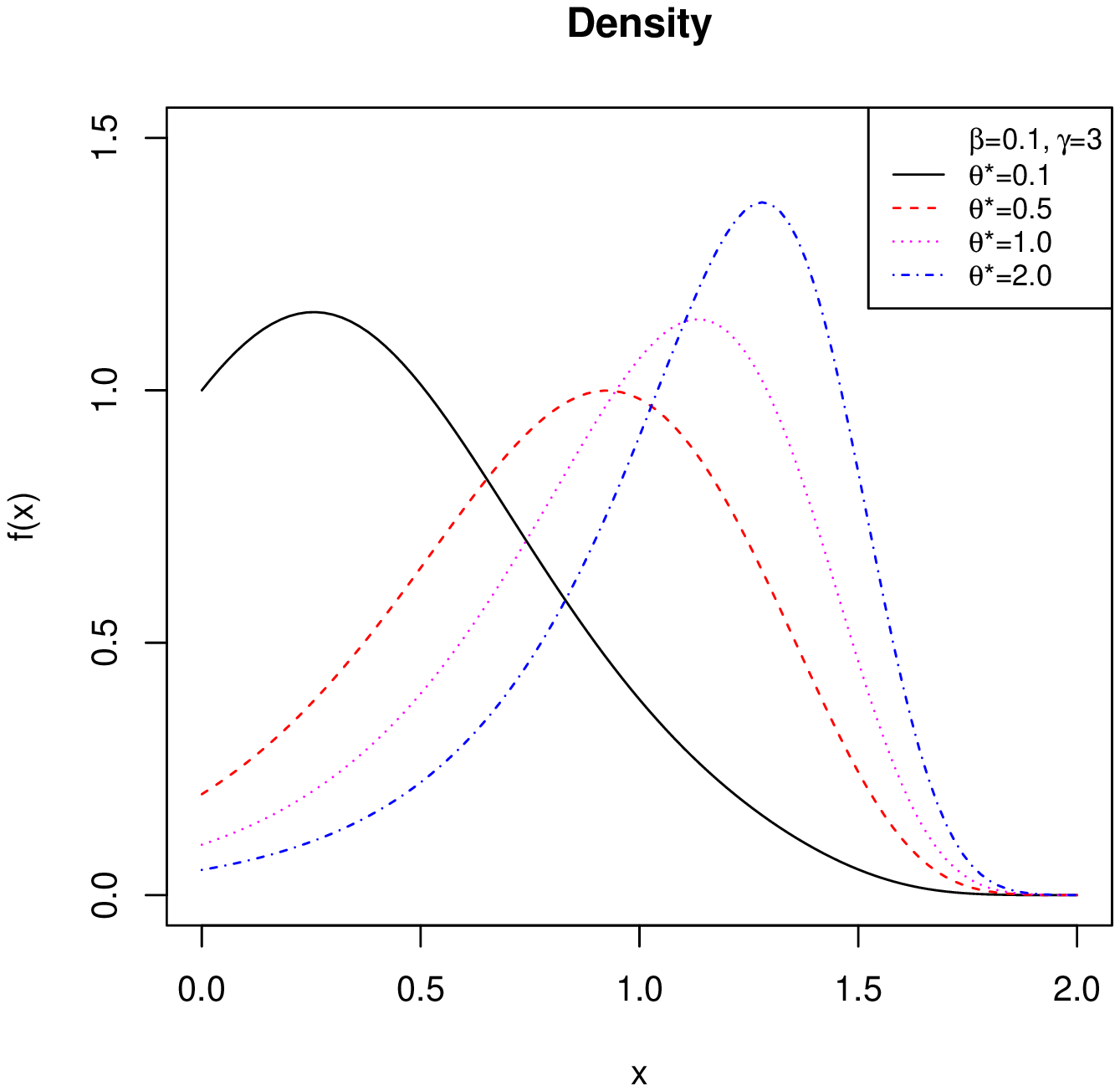}
\includegraphics[scale=0.35]{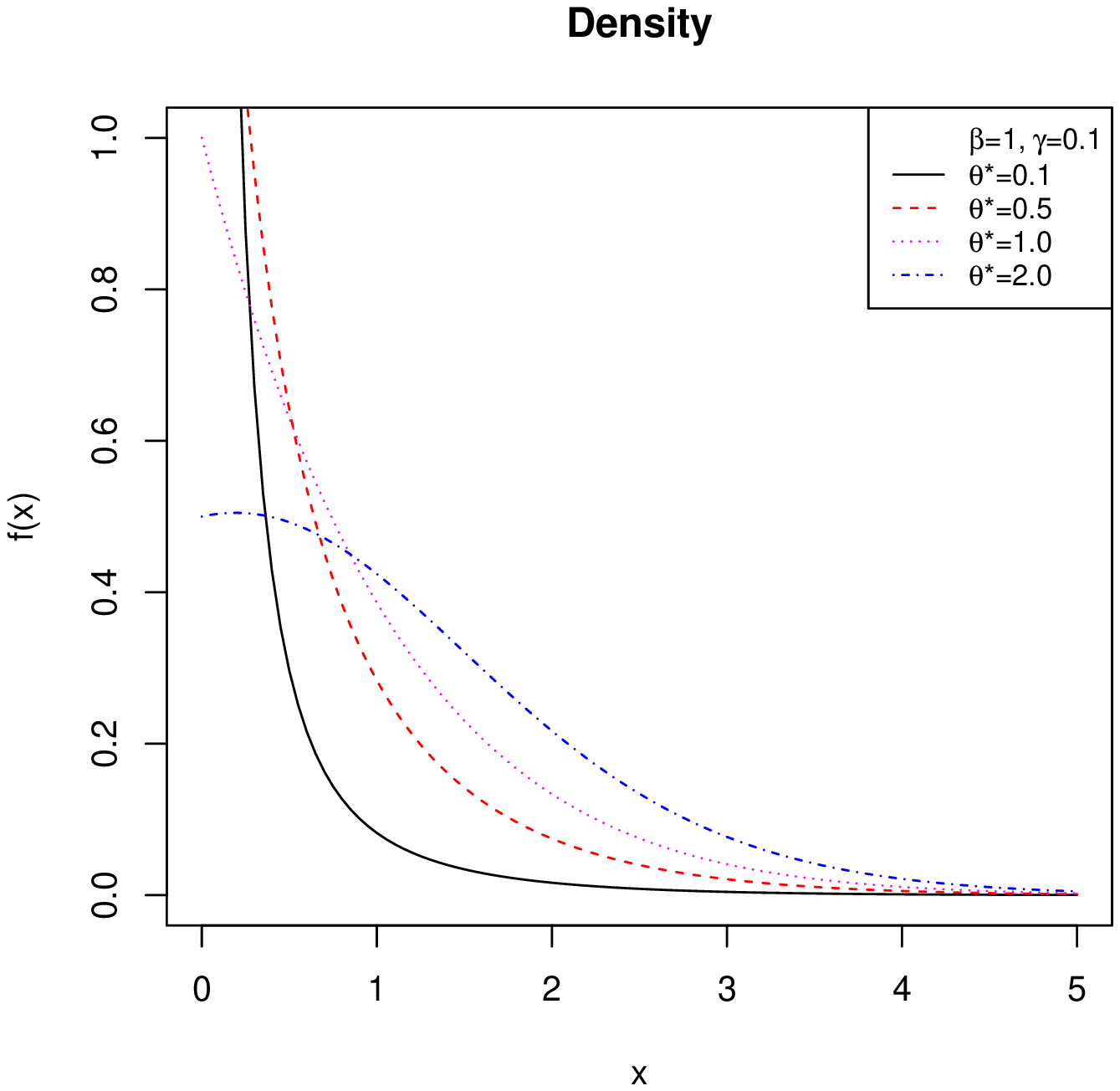}
\includegraphics[scale=0.35]{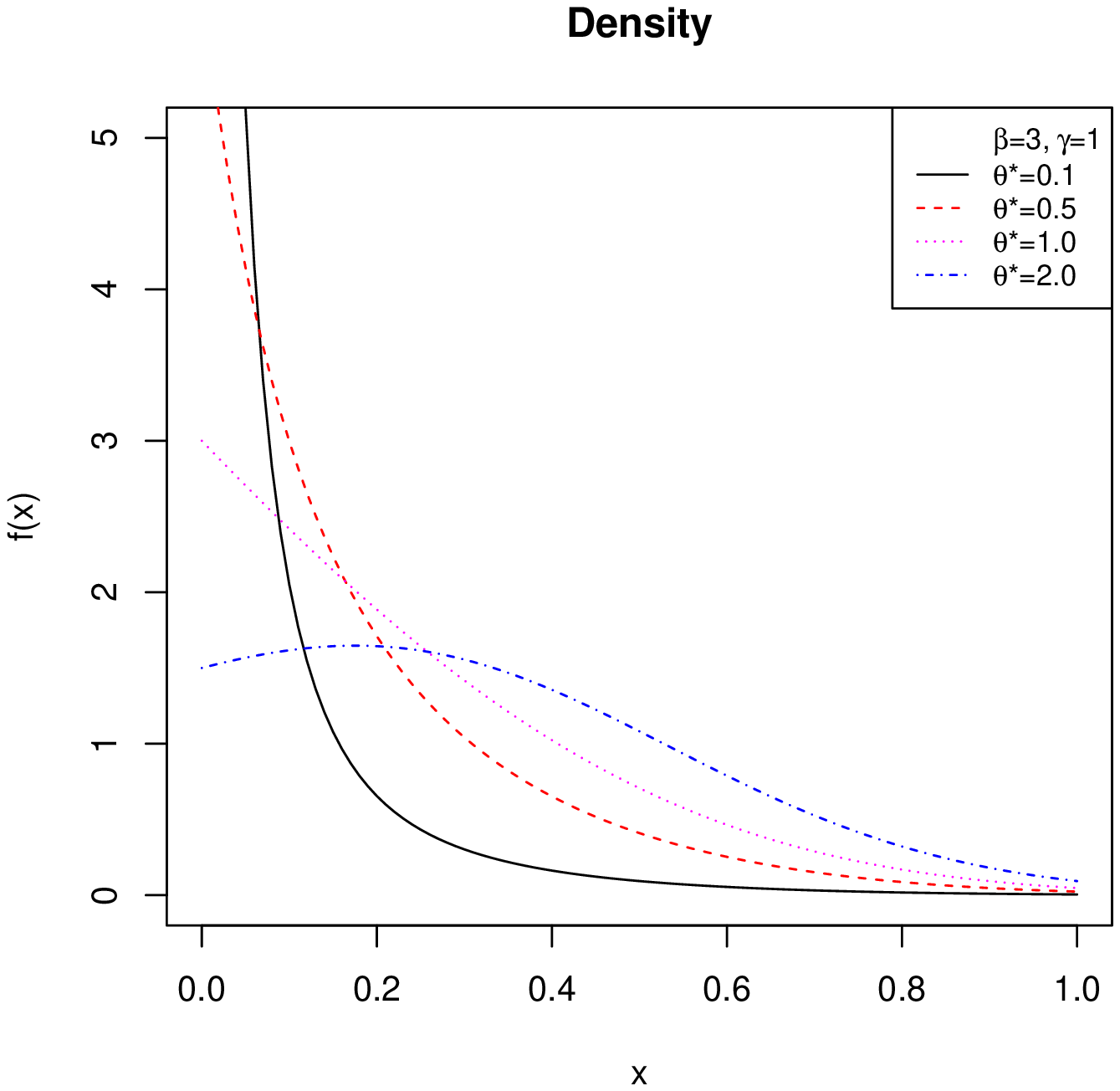}
%
%
\includegraphics[scale=0.35]{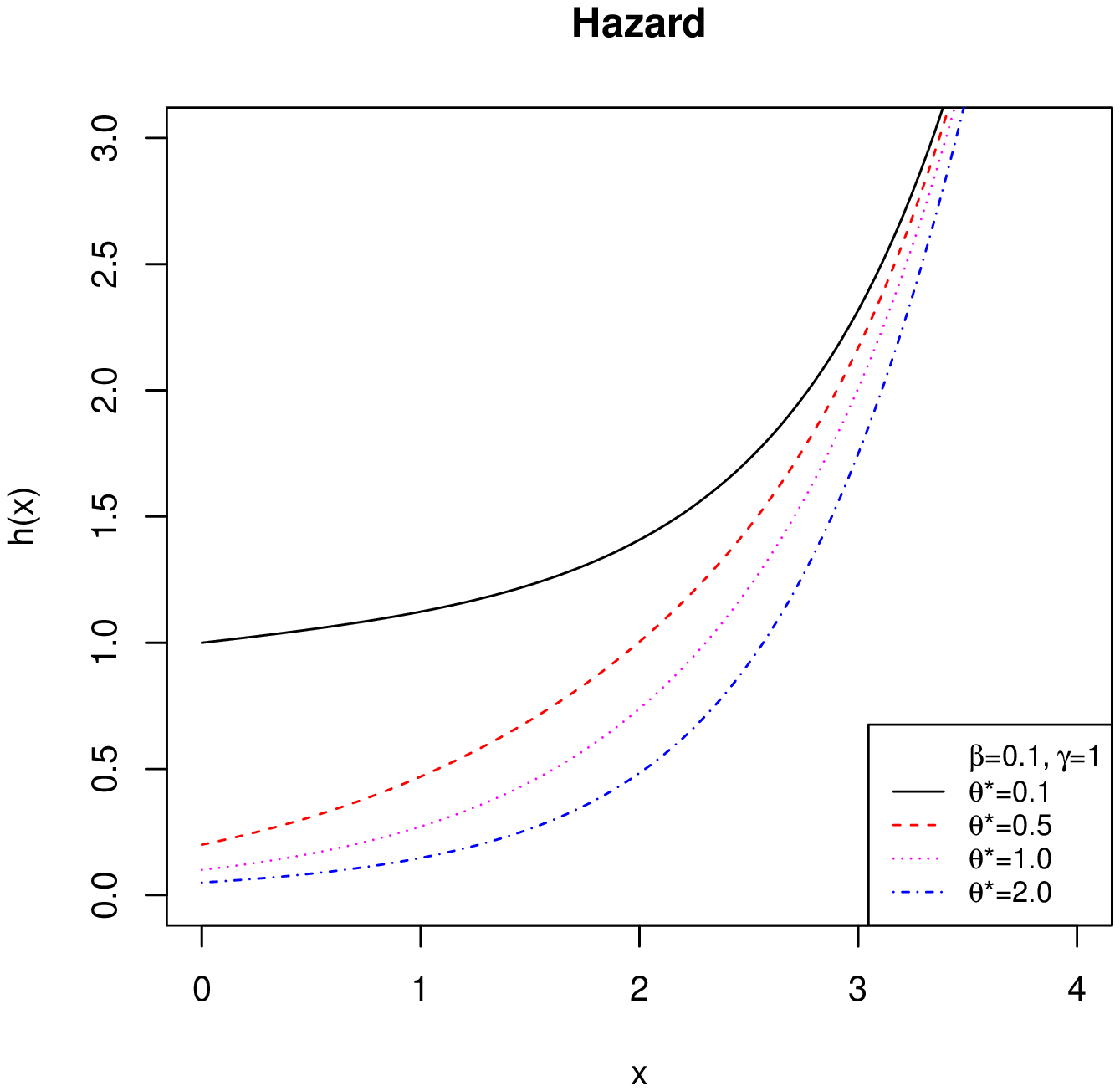}
\includegraphics[scale=0.35]{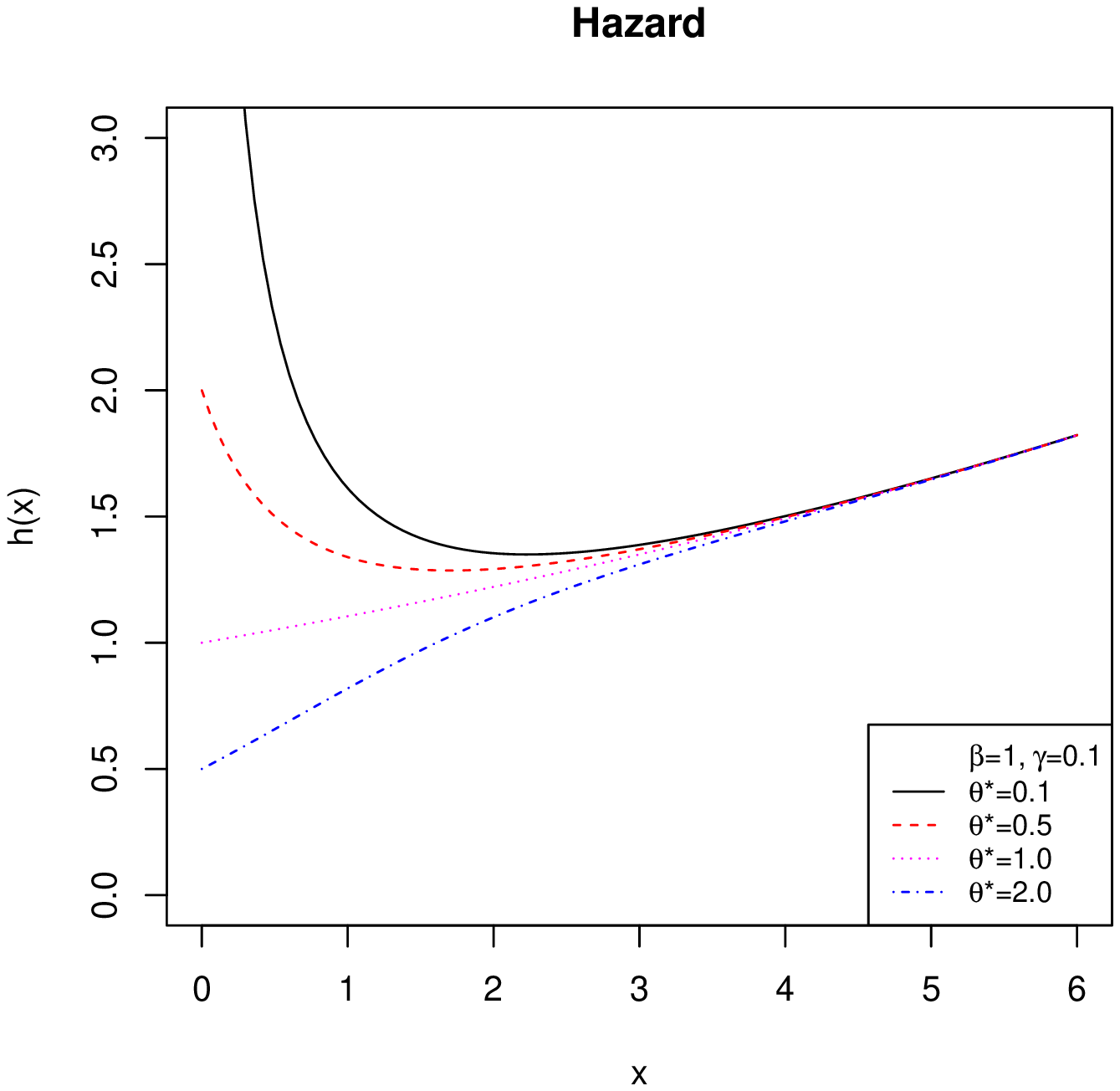}
\includegraphics[scale=0.35]{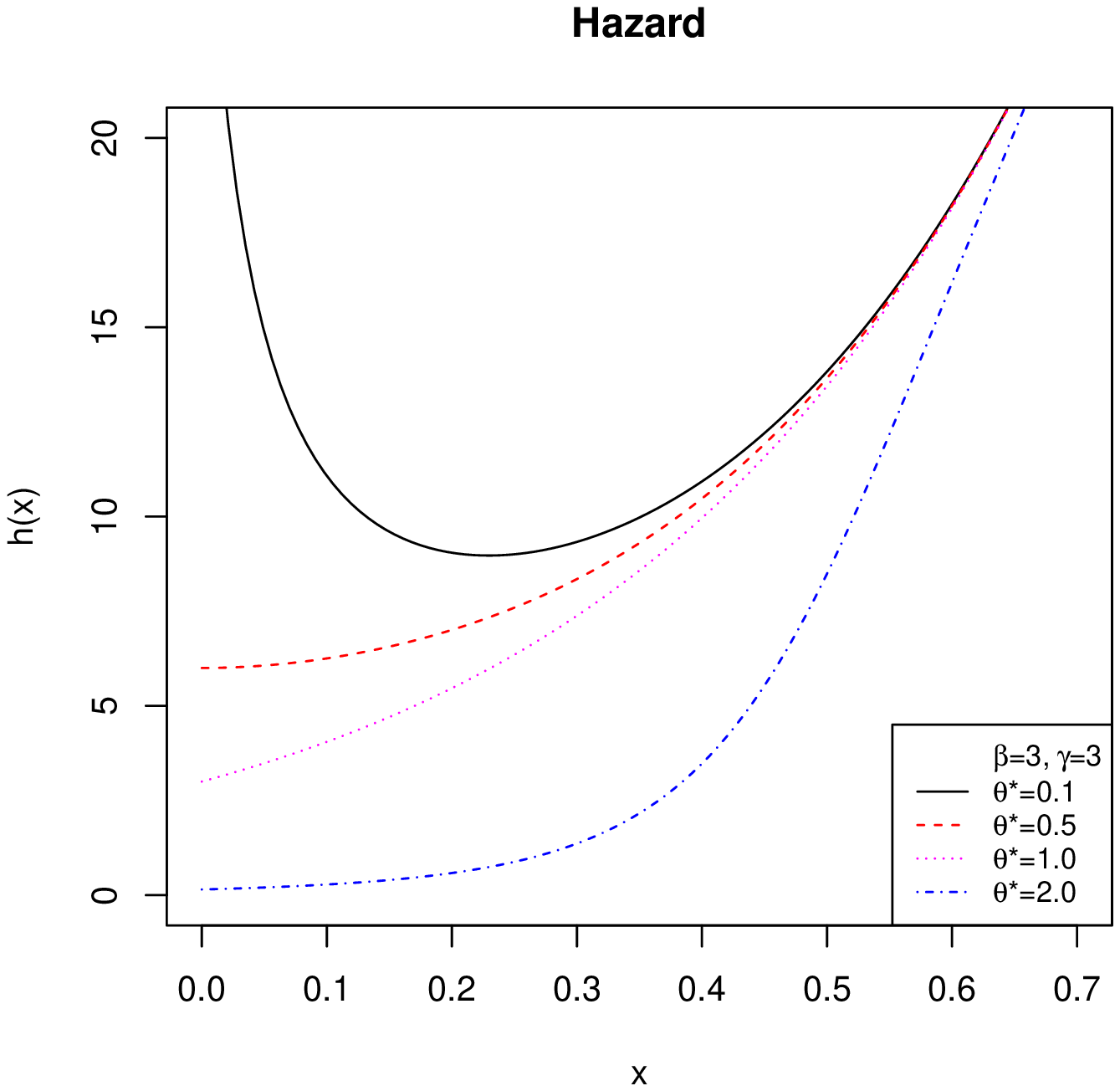}

\vspace{-0.8cm}
\caption[]{Plots of density and hazard rate functions of GG for  different values $\beta $, $\gamma$ and $\theta^*$.}\label{fig.GG}
\end{figure}

The geometric distribution (truncated at zero) is a special case of power series distributions with $a_{n}=1$ and $C(\theta)=\frac{\theta}{1-\theta} \ (0<\theta<1) $ . The  pdf and hazard rate function of Gompertz-geometric (GG) distribution is given respectively by
\begin{eqnarray}\label{eq.fGG}
f(x)=\frac{(1-\theta)\beta e^{\gamma x}e^{-\frac{\beta}{\gamma}(e^{\gamma x}-1)}}{(1-\theta e^{-\frac{\beta}{\gamma}(e^{\gamma x}-1)})^{2}},
\end{eqnarray}
and
\begin{eqnarray}
h(x)=\frac{\beta e^{\gamma x}}{1-\theta e^{-\frac{\beta}{\gamma}(e^{\gamma x}-1)}}.
\end{eqnarray}

\bigskip
\begin{remark}
When $\theta^{\ast}=1-\theta$, from \eqref{eq.fGG} we have
\begin{eqnarray}\label{eq.fGGe}
f(x)= \frac{\theta^*\beta e^{\gamma x}e^{-\frac{\beta}{\gamma}(e^{\gamma x}-1)}}{(1-(1-\theta^*) e^{-\frac{\beta}{\gamma}(e^{\gamma x}-1)})^{2}}.
\end{eqnarray}
Based on \cite{ma-ol-97}
$f(x)$ in \eqref{eq.fGGe} also is density for all $\theta^*>0$ ($\theta<1$).

Note that when $\gamma\rightarrow 0^+$, the pdf of extended exponential geometric (EEG) distribution
 \citep[see][]{ad-di-05}
 concludes from the pdf in \eqref{eq.fGGe} with $\theta^*>0$. The EEG hazard function is monotonically increasing for $\theta^*>1$; decreasing for $0<\theta^*<1$ and constant for $\theta^*=1$.
\end{remark}

\begin{remark}
If $\theta^*=1$, then the pdf in  \eqref{eq.fGGe} becomes the pdf of Gompertz distribution. Note that the hazard rate function of Gompertz distribution is increasing.
\end{remark}

The plots of density and hazard rate function of GG distribution for different values of $\beta$, $\gamma$ and $\theta^*$  are given in Figure \ref{fig.GG}. We can see that the hazard rate function of GG distribution is increasing  or bathtub.



%


\subsection{ Gompertz - Poisson distribution}

The Poisson distribution (truncated at zero) is a special case of power series distributions with $a_{n}=\frac{1}{n!}$ and $C(\theta)=e^{\theta}-1 \ (\theta>0).$ The pdf and hazard rate function of Gompertz-Poisson (GP) distribution are given respectively by
\begin{eqnarray}
f(x)= \frac{\theta \beta e^{\gamma x}e^{-\frac{\beta}{\gamma}(e^{\gamma x}-1)}e^{\theta e^{-\frac{\beta}{\gamma}(e^{\gamma x}-1)}}}{e^\theta-1},
\end{eqnarray}
and
\begin{eqnarray}
h(x)=\frac{\theta \beta e^{\gamma x}e^{-\frac{\beta}{\gamma}(e^{\gamma x}-1)}}{1-e^{-\theta e^{-\frac{\beta}{\gamma}(e^{\gamma x}-1)}}}.
\end{eqnarray}

The plots of density and hazard rate function of GP for different values of $\beta$, $\gamma$ and $\theta$  are given in Figure \ref{fig.GP}. We can see that the hazard rate function of GP distribution is increasing  or bathtub.


\begin{figure}[t]
\centering
\includegraphics[scale=0.35]{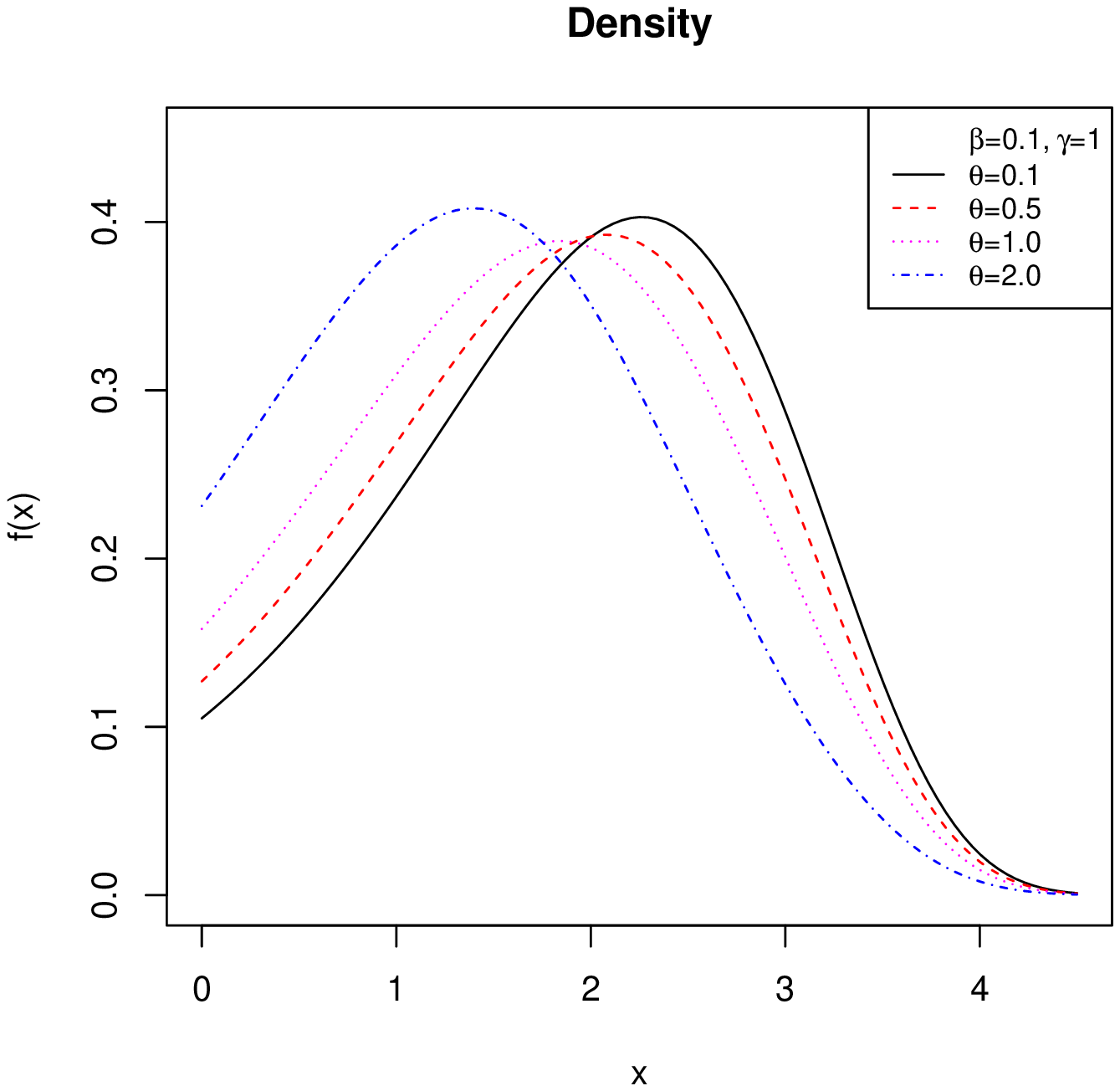}
\includegraphics[scale=0.35]{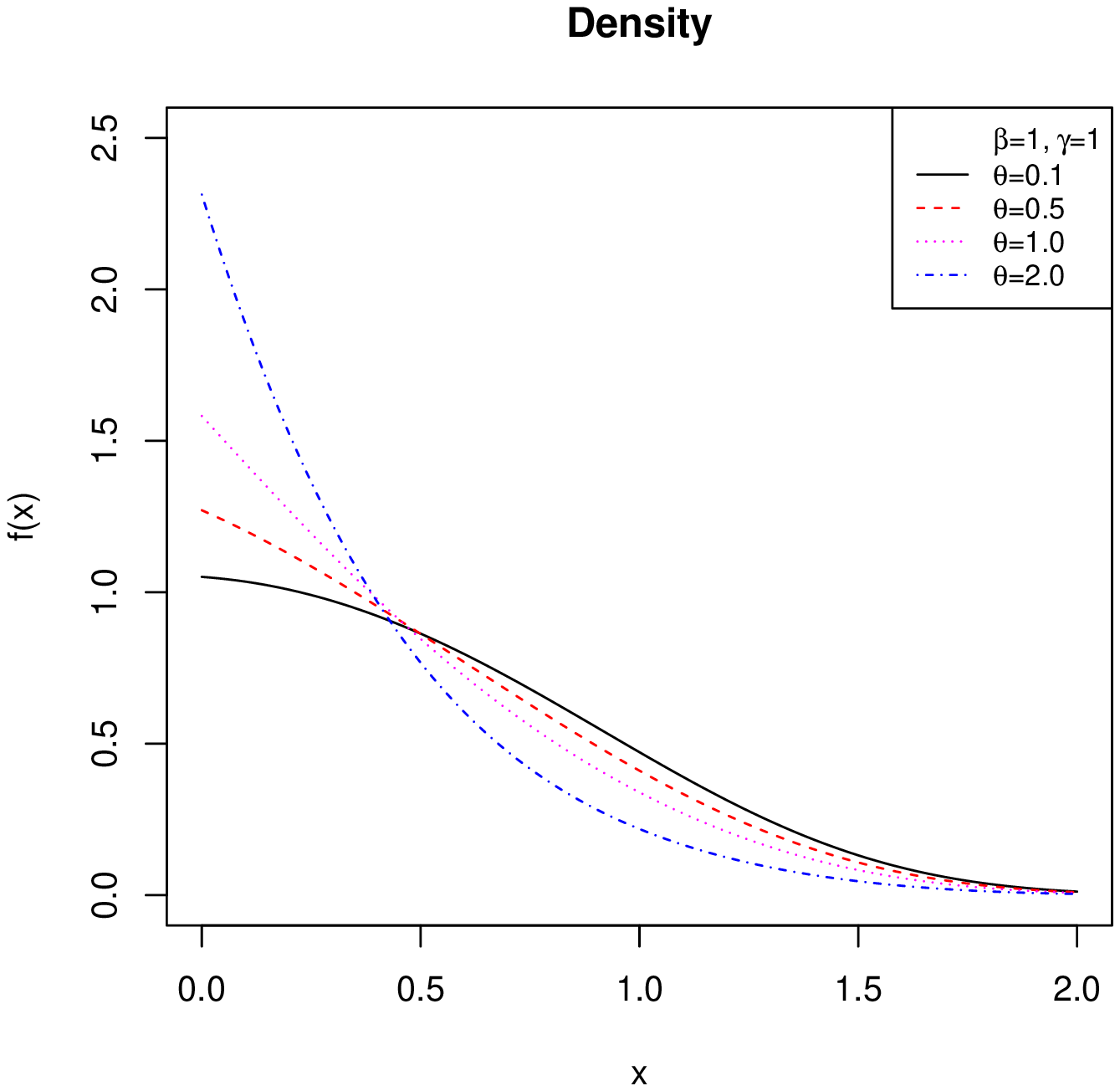}
\includegraphics[scale=0.35]{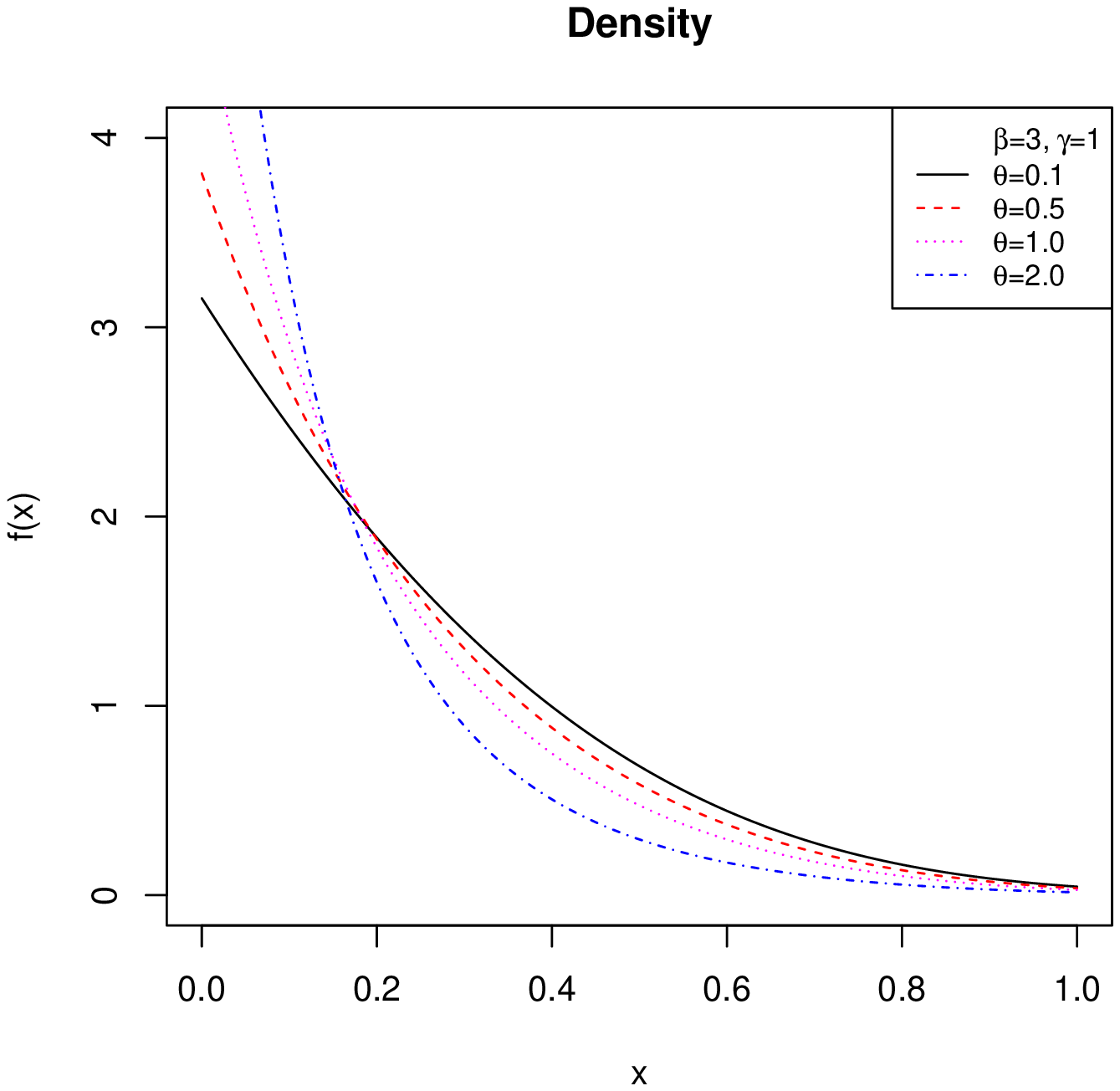}
%
%
\includegraphics[scale=0.35]{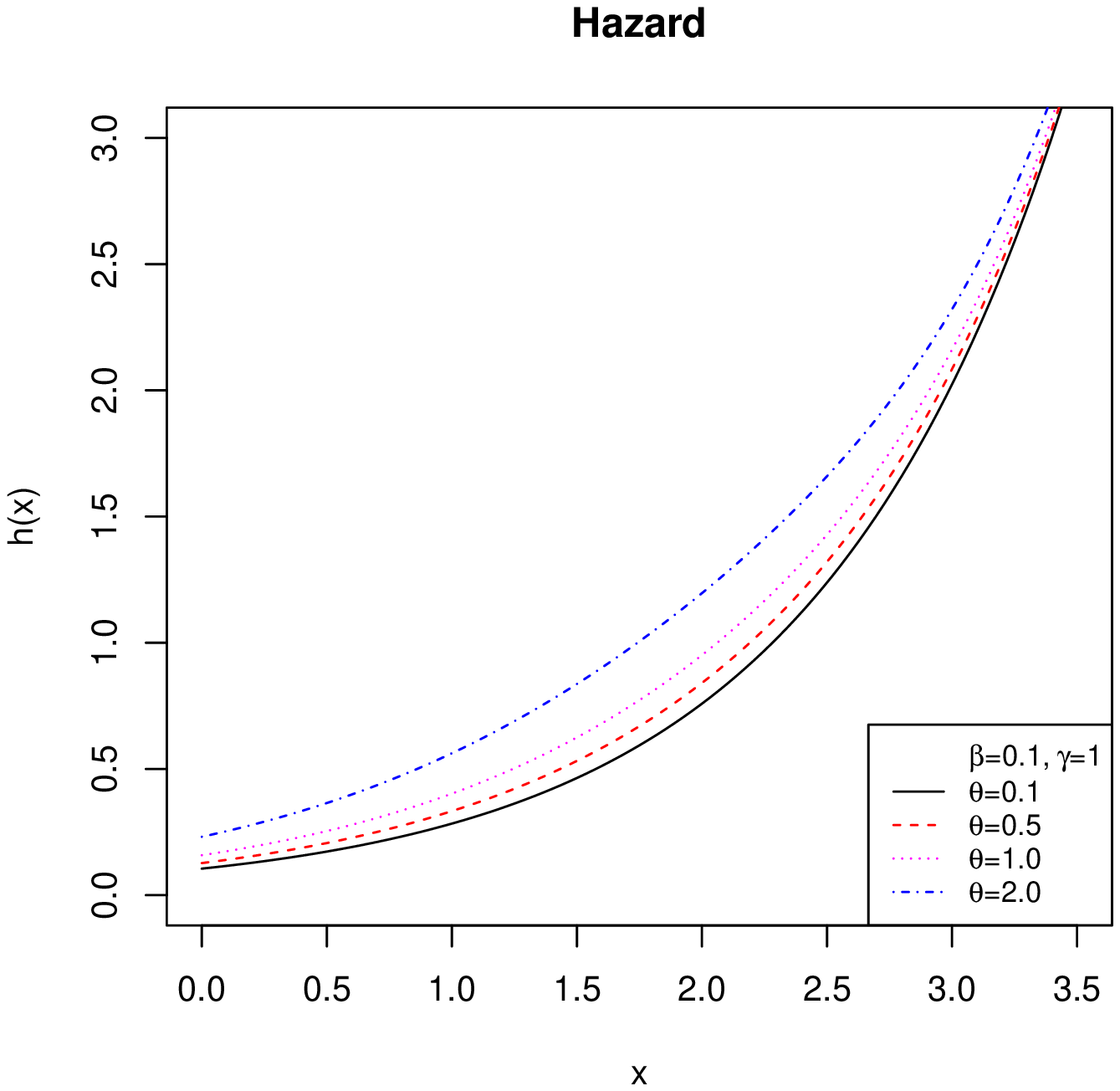}
\includegraphics[scale=0.35]{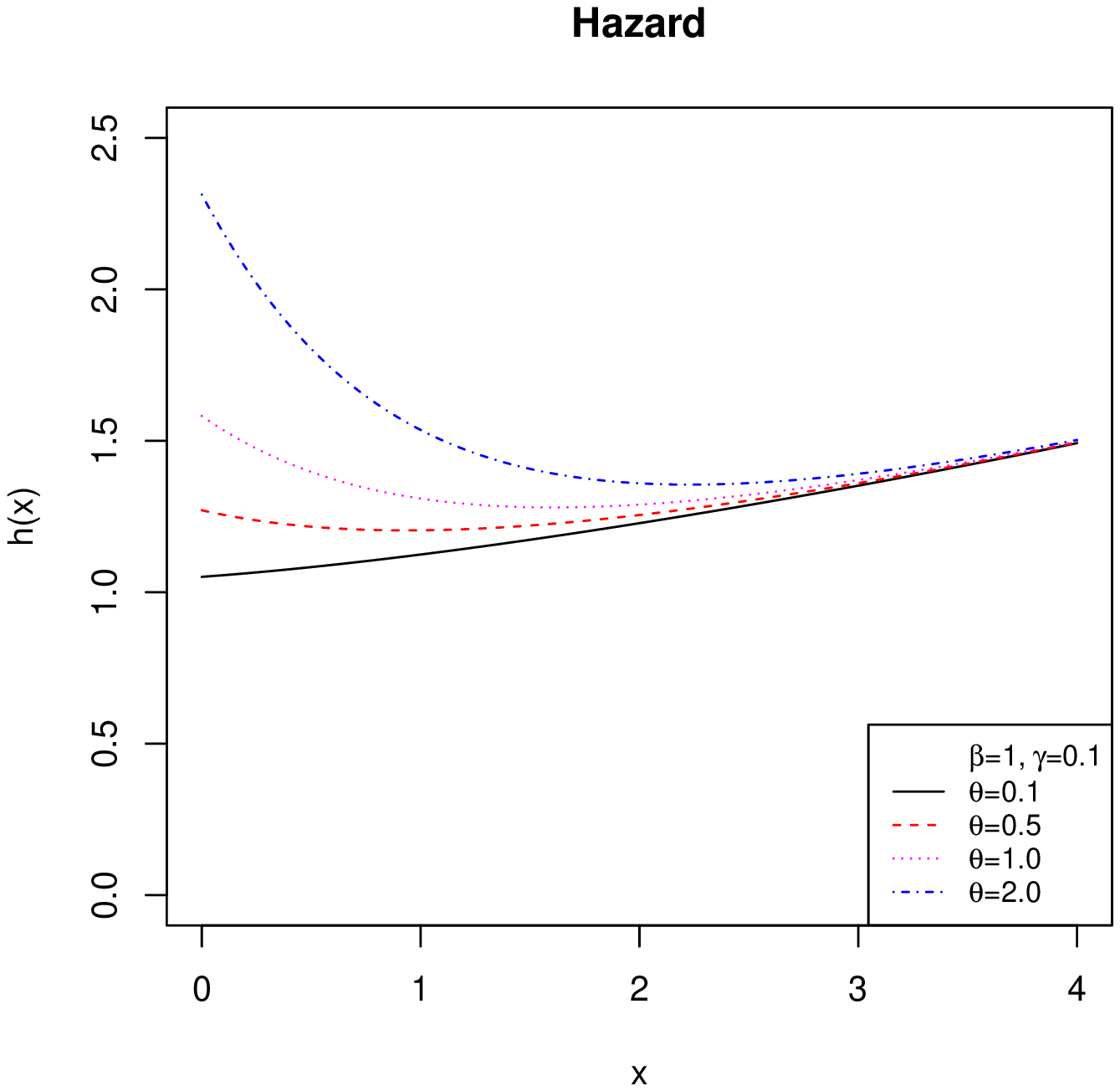}
\includegraphics[scale=0.35]{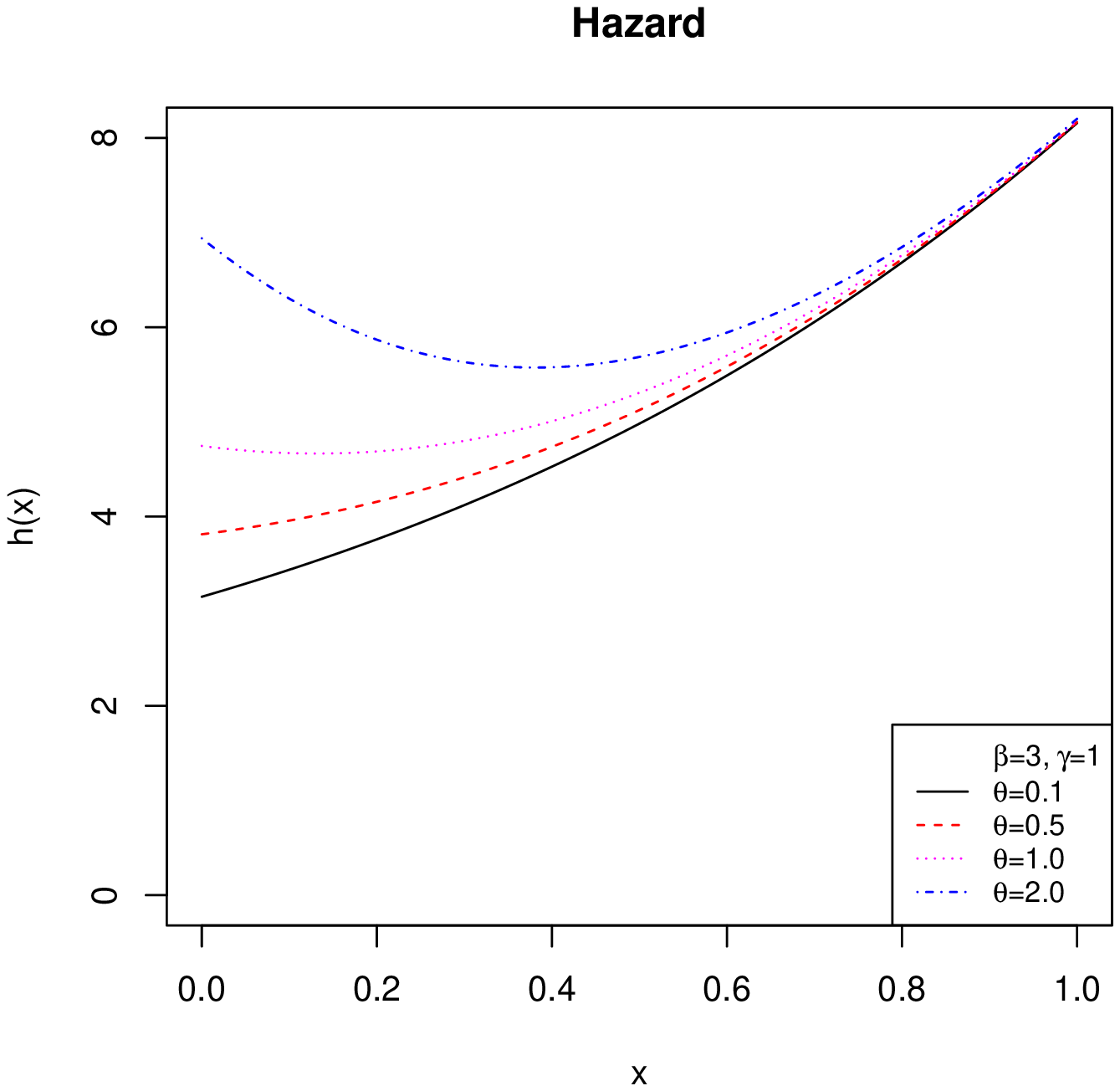}

\vspace{-0.8cm}
\caption[]{Plots of density and hazard rate functions of GP for  different values $\beta $, $\gamma$ and $\theta$.}\label{fig.GP}
\end{figure}


\subsection{Gompertz - binomial distribution}

The binomial distribution (truncated at zero) is a special case of power series distributions with $a_{n}=\dbinom{m}{n}$ and $C(\theta)=(\theta+1)^{m}-1 \ (\theta>0),$ where $m$ $(n\leq m)$  is the number of replicas. The pdf and hazard rate function of
 Gompertz - binomial (GB) distribution are given respectively by
\begin{eqnarray}
f(x)=\frac{m \theta \beta e^{\gamma x}e^{-\frac{\beta}{\gamma}(e^{\gamma x}-1)}(\theta e^{-\frac{\beta}{\gamma}(e^{\gamma x}-1)}+1)^{m-1}}{(\theta+1)^{m}-1},
\end{eqnarray}
and
\begin{eqnarray}
h(x)=\frac{m \theta \beta e^{\gamma x}e^{-\frac{\beta}{\gamma}(e^{\gamma x}-1)}(\theta e^{-\frac{\beta}{\gamma}(e^{\gamma x}-1)}+1)^{m-1}}{(\theta e^{-\frac{\beta}{\gamma}(e^{\gamma x}-1)}+1)^{m}-1}.
\end{eqnarray}

The plots of density and hazard rate function of GB for $m=5$, and different values of $\beta$, $\gamma$ and $\theta$  are given in Figure \ref{fig.GB}. We can see that the hazard rate function of GB distribution is increasing  or bathtub. We can find that the GP distribution can be obtained as limiting of GB distribution if $m\theta\longrightarrow \lambda>0$, when $m\longrightarrow \infty$.



\begin{figure}[t]
\centering
\includegraphics[scale=0.35]{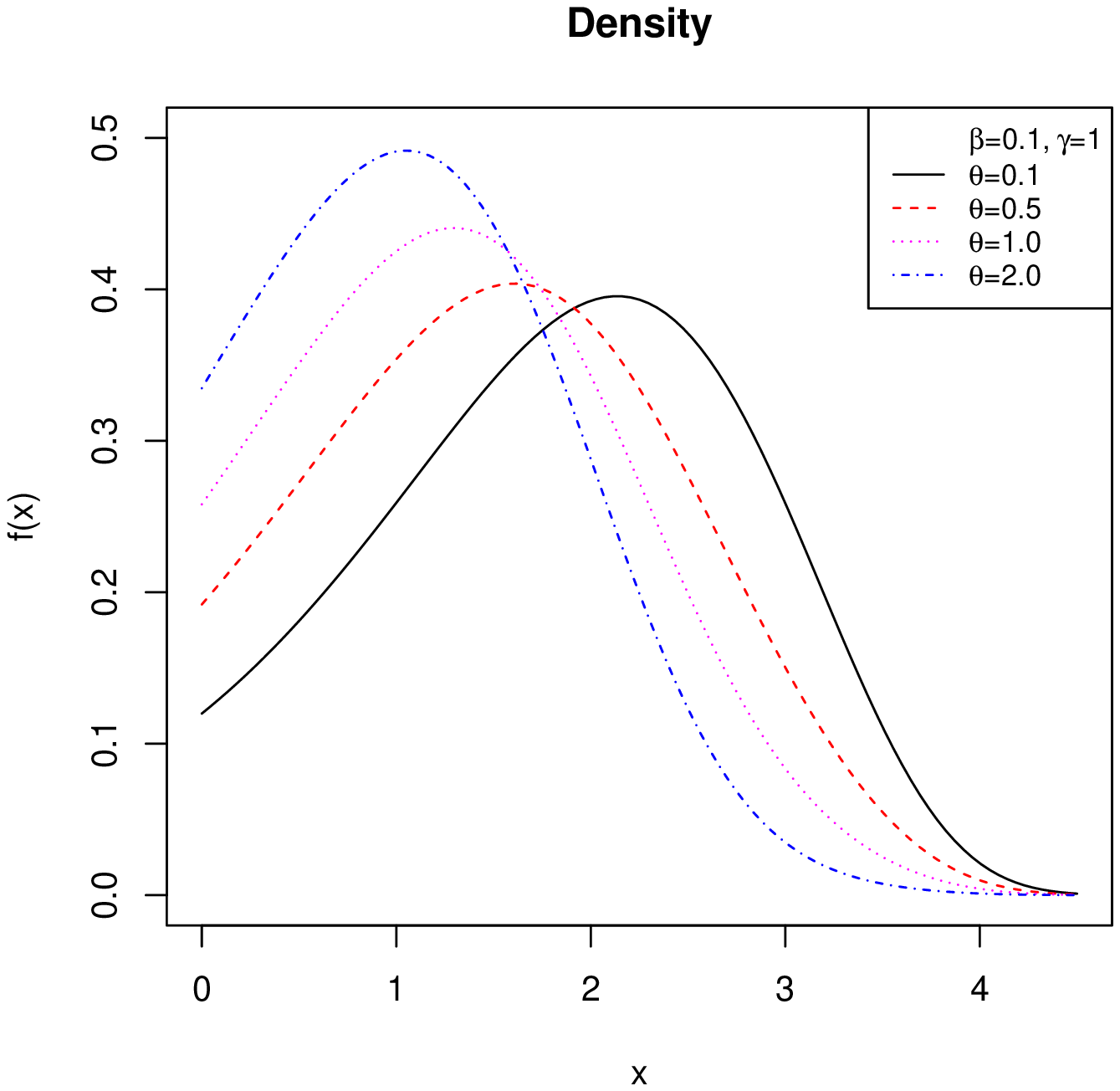}
\includegraphics[scale=0.35]{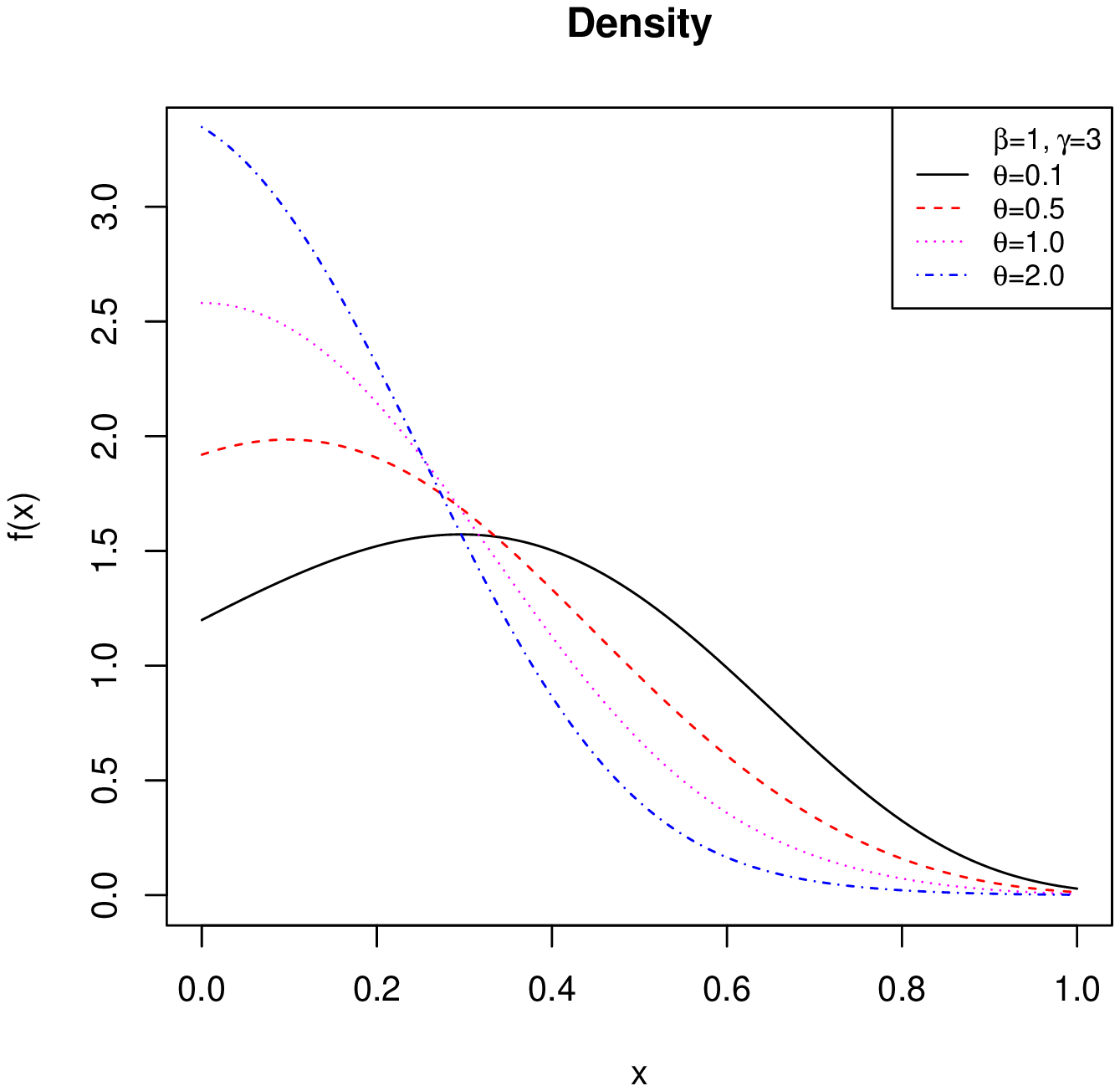}
\includegraphics[scale=0.35]{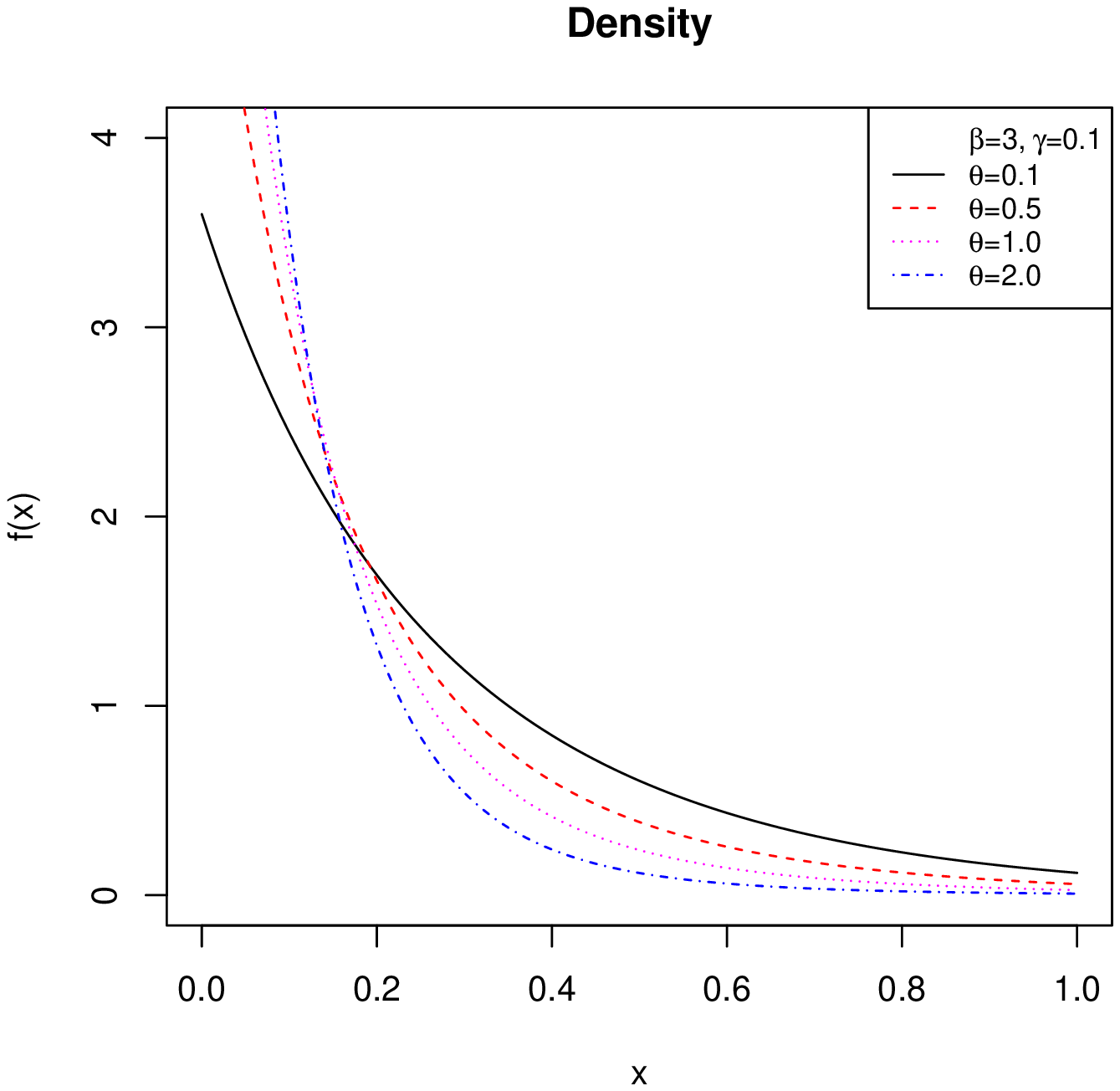}

%
\includegraphics[scale=0.35]{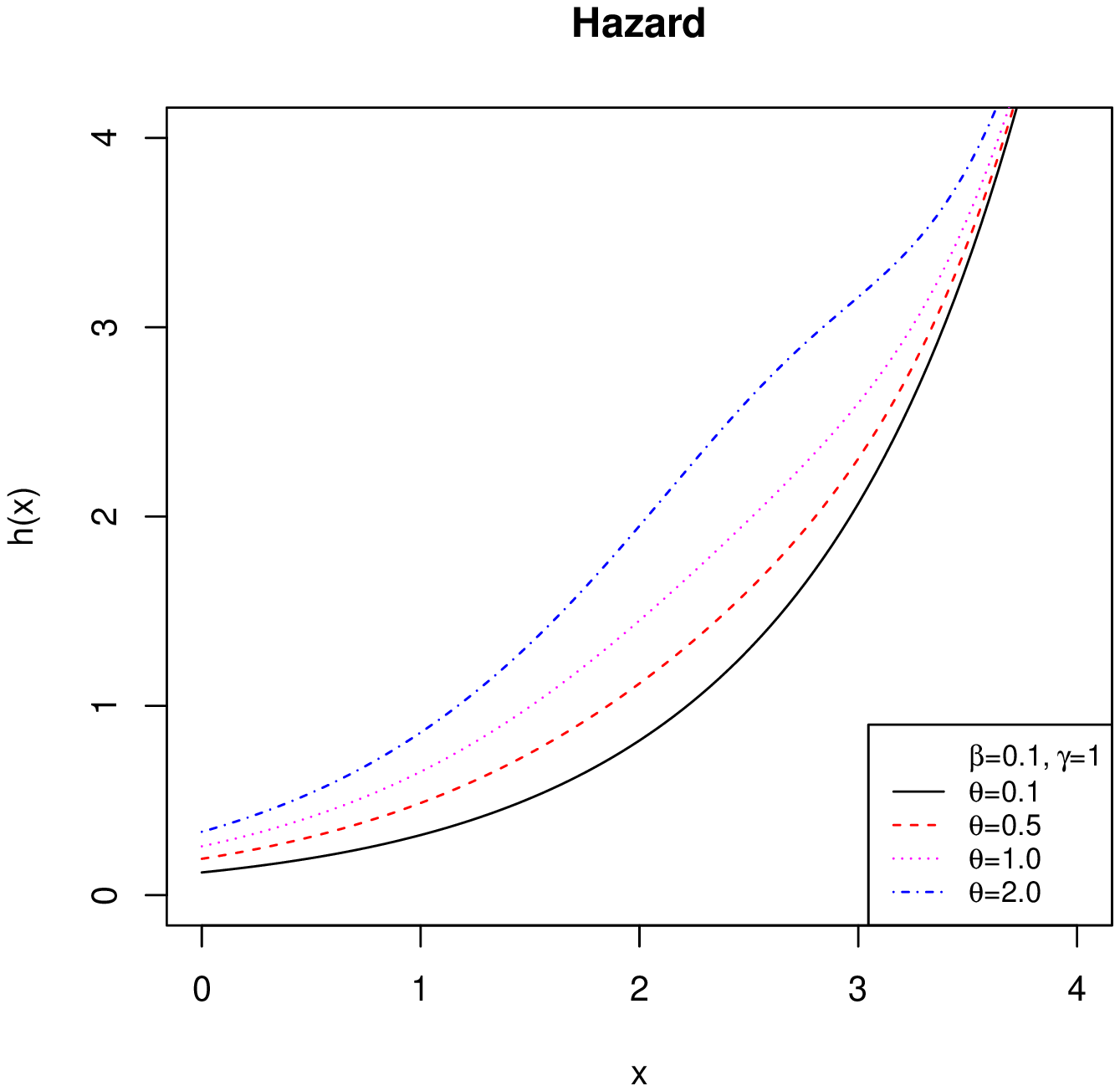}
\includegraphics[scale=0.35]{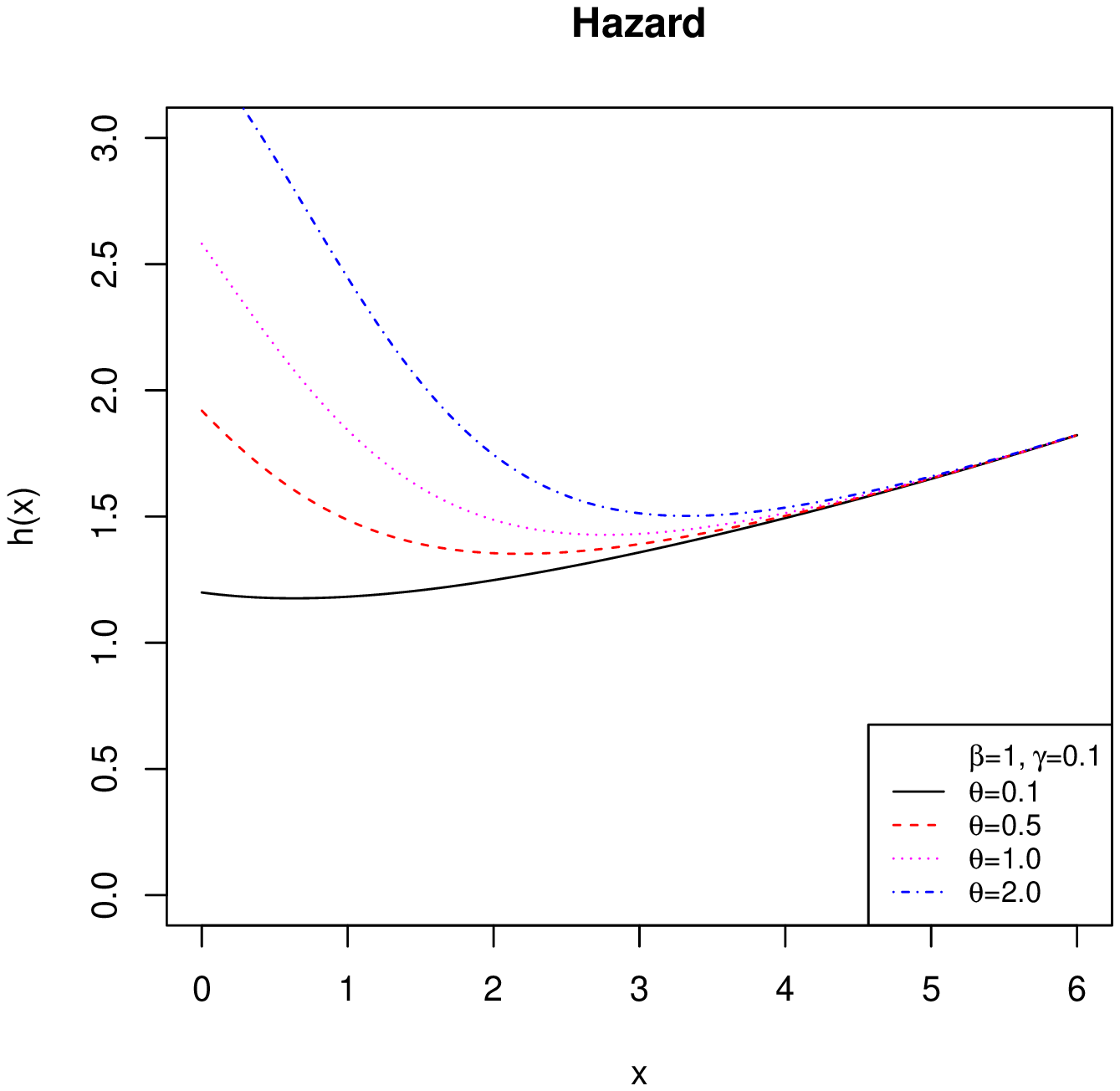}
\includegraphics[scale=0.35]{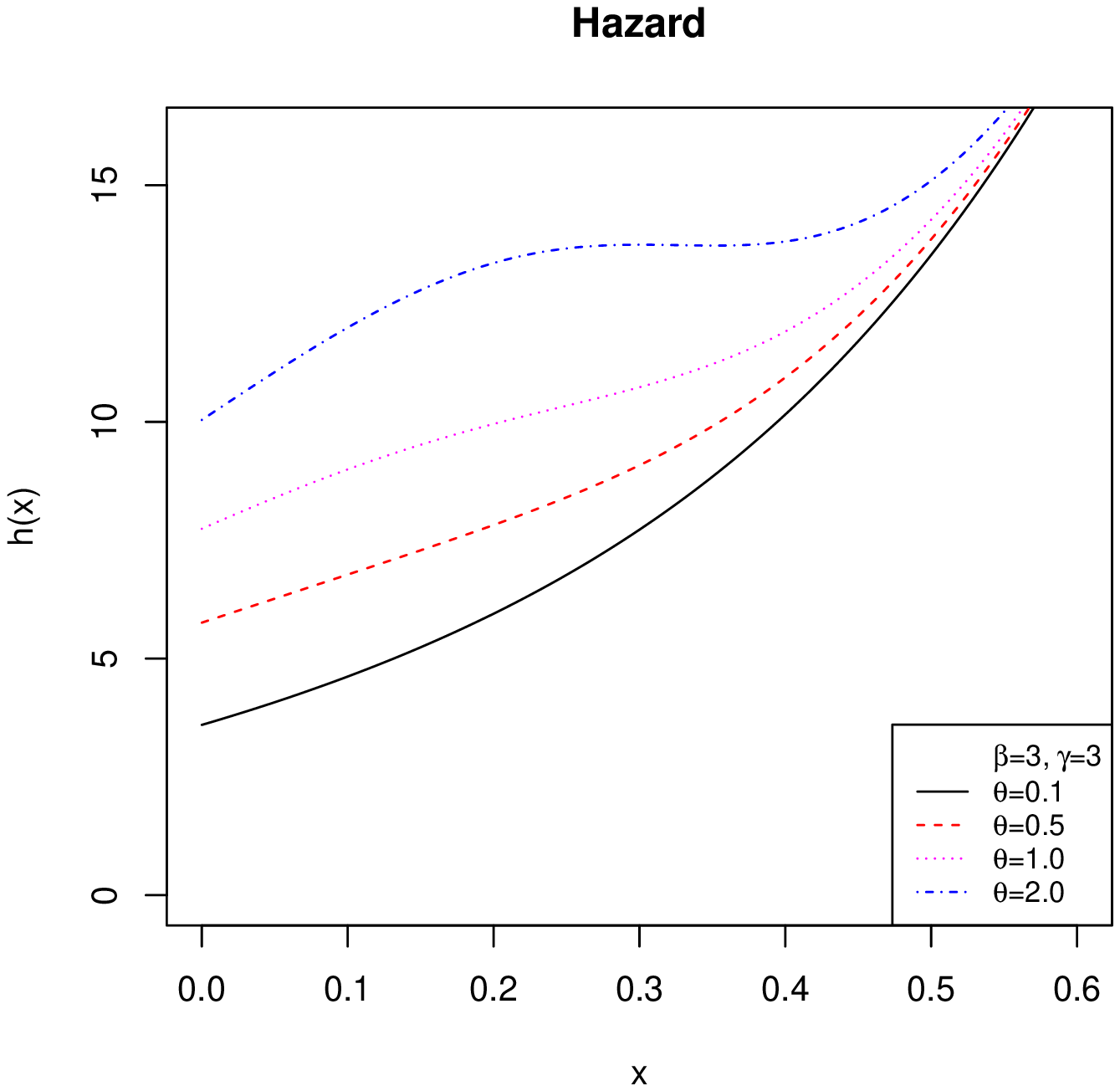}

\vspace{-0.8cm}
\caption[]{Plots of density and hazard rate functions of GB for $m=5$, and  different values $\beta $, $\gamma$ and $\theta$.}\label{fig.GB}
\end{figure}


\subsection{ Gompertz - logarithmic distribution}
The logarithmic distribution (truncated at zero) is also a special case of power series distributions with $a_{n}=\frac{1}{n}$ and $C(\theta)=-\log(1-\theta) \ (0<\theta<1)$. The  pdf and hazard rate function of Gompertz - logarithmic (GL) distribution are given respectively by
\begin{eqnarray}
f(x)=  \frac{\theta \beta e^{\gamma x}e^{-\frac{\beta}{\gamma}(e^{\gamma x}-1)}}{(\theta e^{-\frac{\beta}{\gamma}(e^{\gamma x}-1)}-1)\log(1-\theta)},
\end{eqnarray}
and
\begin{eqnarray}
h(x)=  \frac{\theta \beta e^{\gamma x}e^{-\frac{\beta}{\gamma}(e^{\gamma x}-1)}}{(\theta e^{-\frac{\beta}{\gamma}(e^{\gamma x}-1)}-1)\log(1-\theta e^{-\frac{\beta}{\gamma}(e^{\gamma x}-1)})}.
\end{eqnarray}

The plots of density and hazard rate function of GL for different values of $\beta$, $\gamma$ and $\theta$  are given in Figure \ref{fig.GL}. We can see that the hazard rate function of GL distribution is increasing  or bathtub.



\begin{figure}[t]
\centering
\includegraphics[scale=0.35]{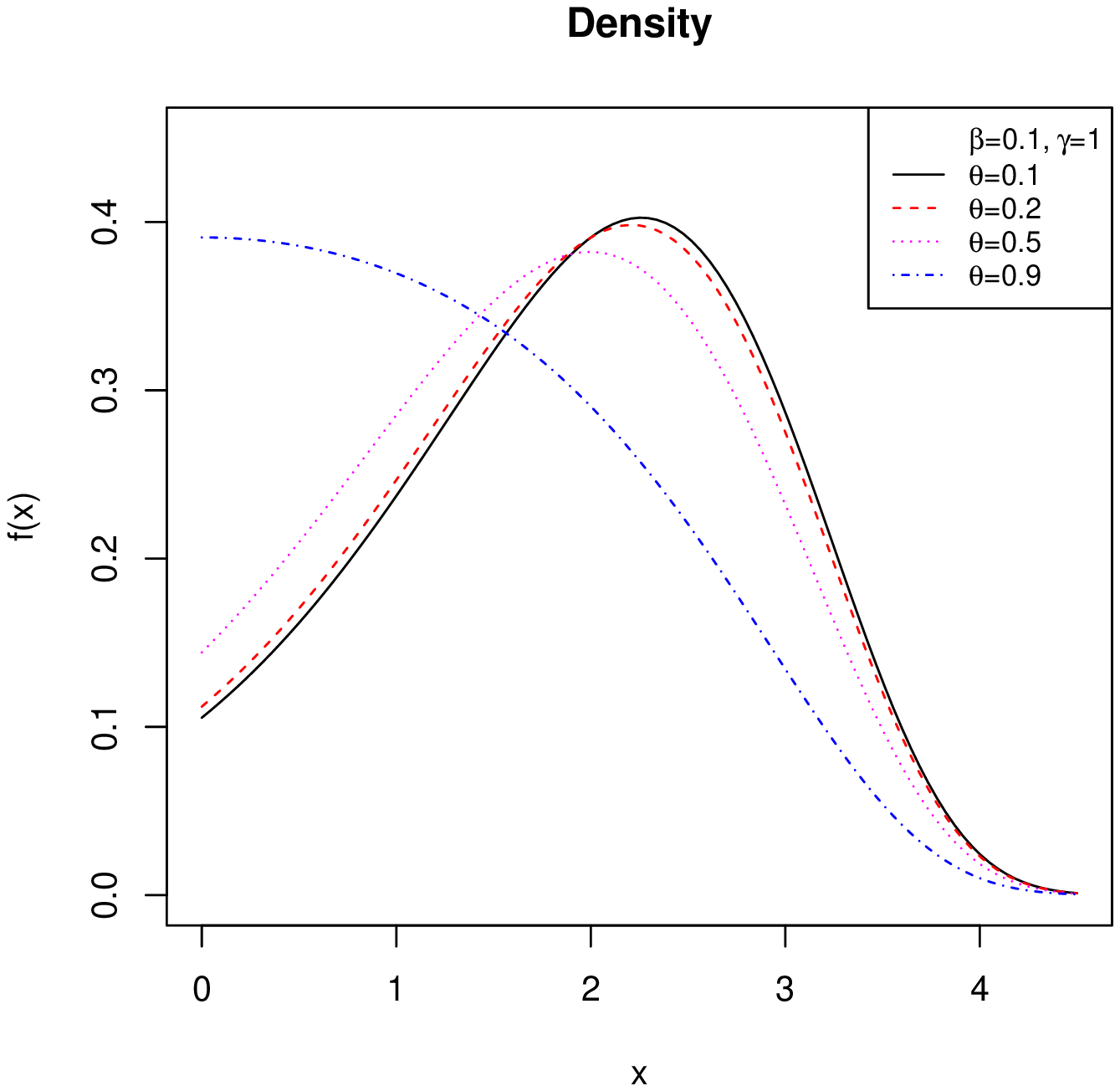}
\includegraphics[scale=0.35]{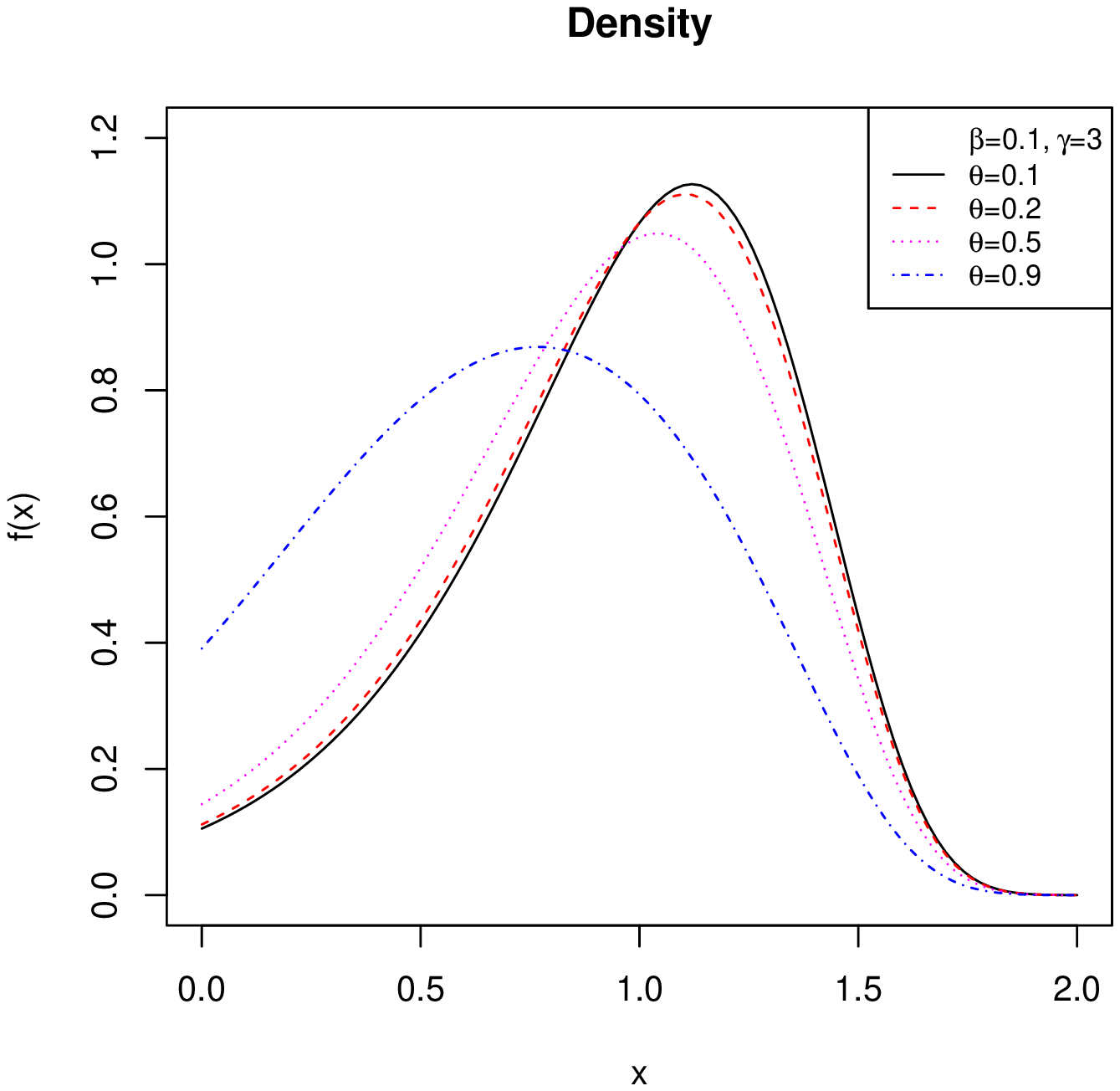}
\includegraphics[scale=0.35]{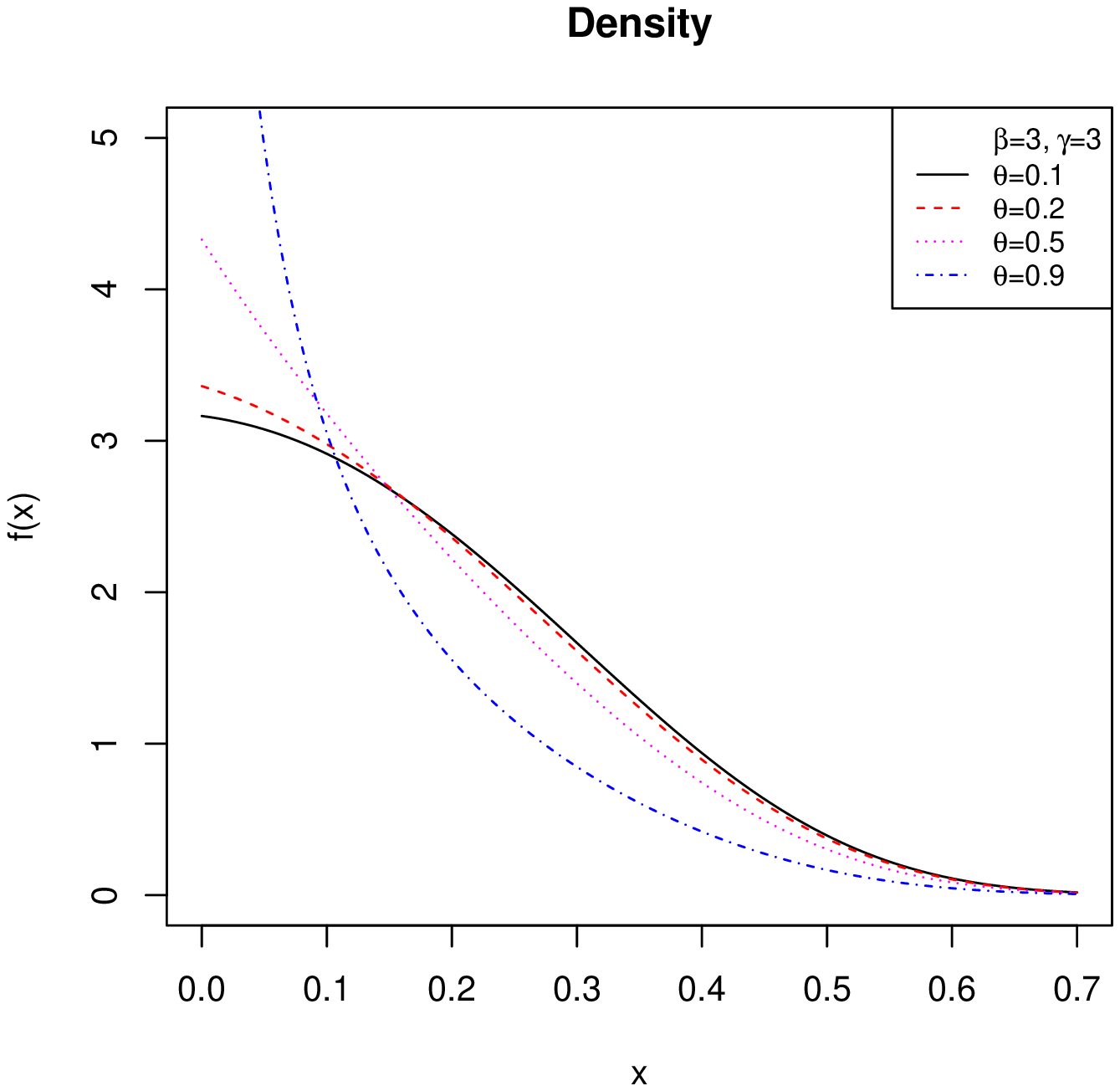}
%
%
\includegraphics[scale=0.35]{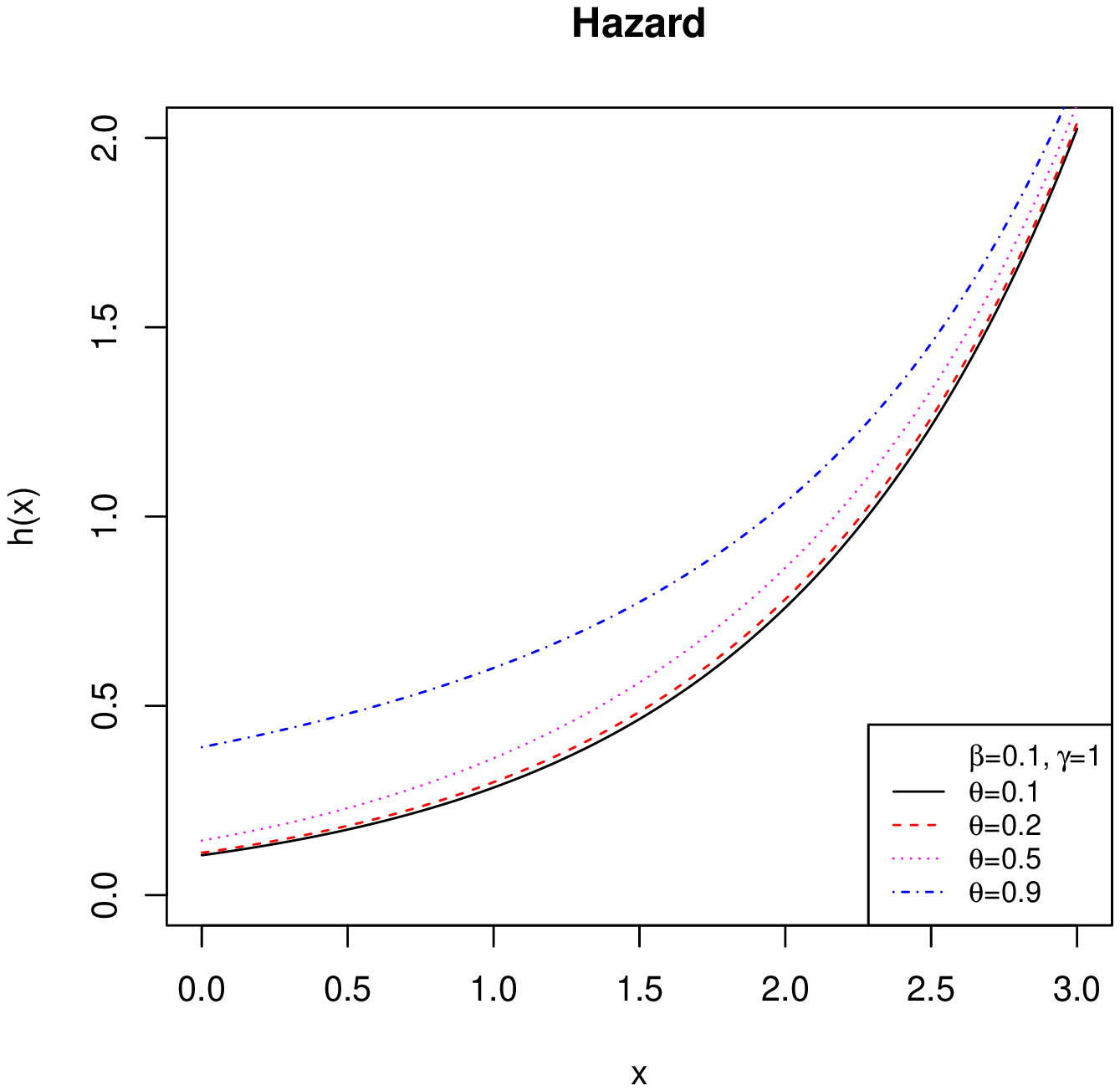}
\includegraphics[scale=0.35]{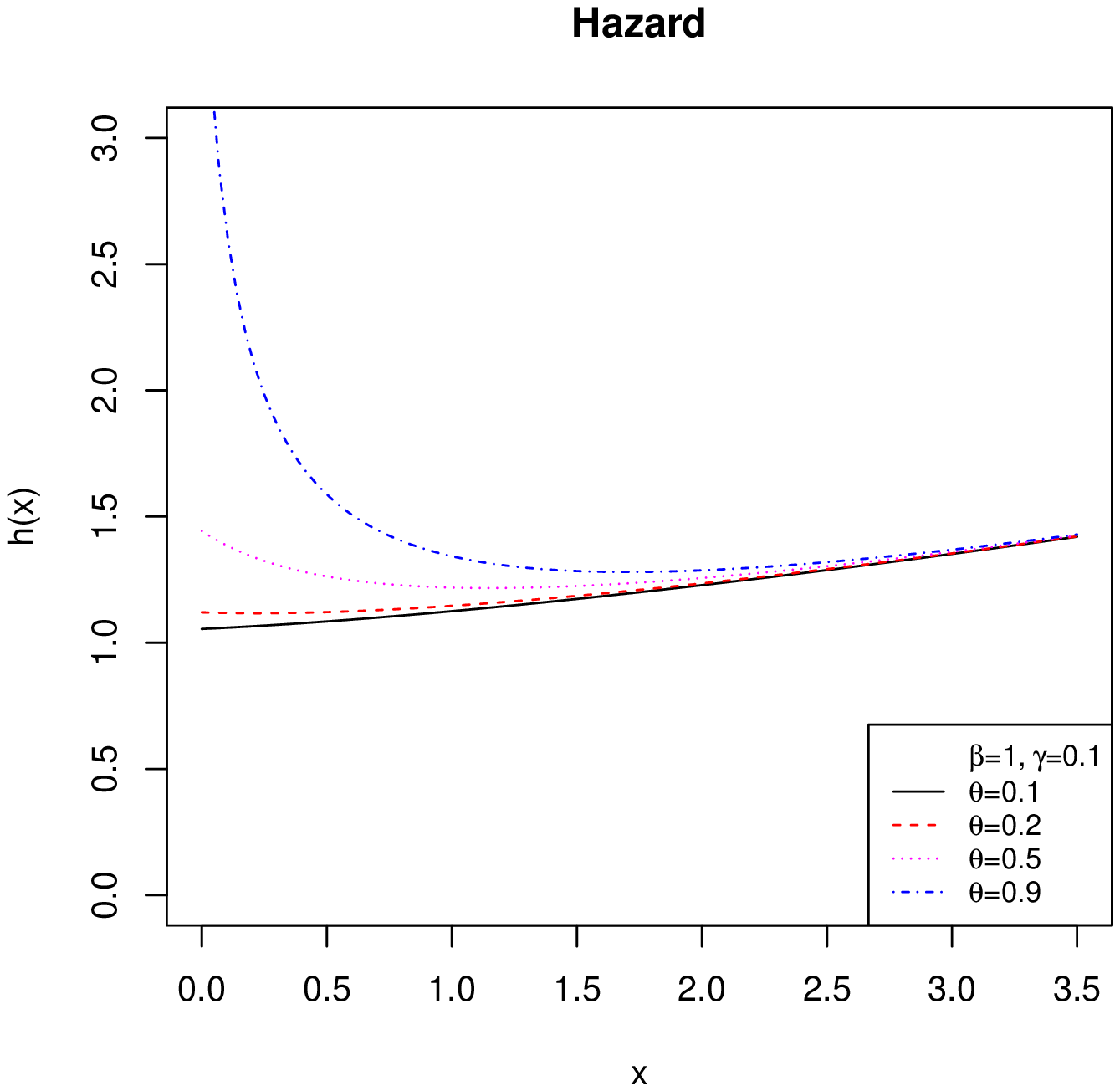}
\includegraphics[scale=0.35]{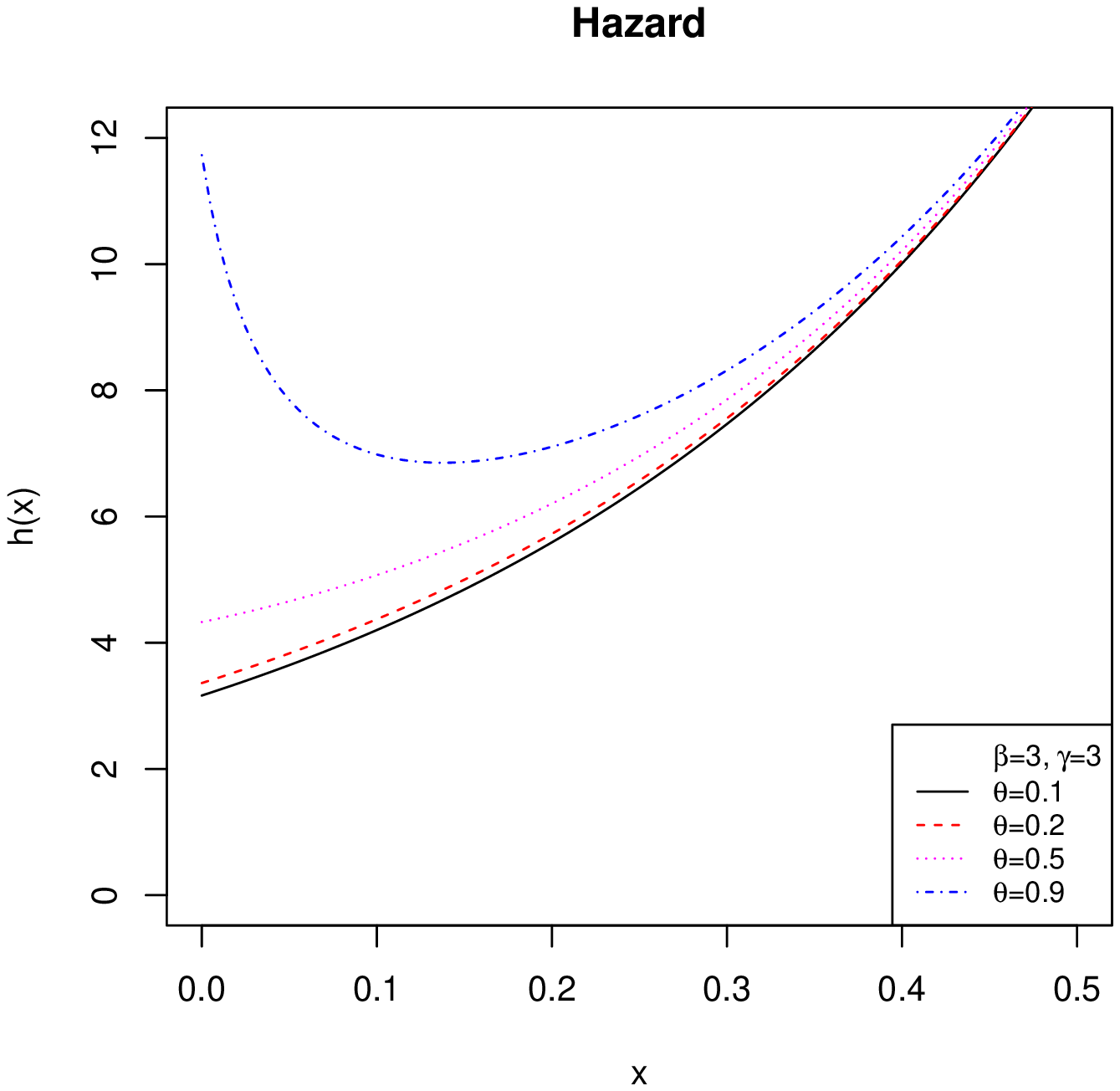}

\vspace{-0.8cm}
\caption[]{Plots of density and hazard rate functions of GL for different values $\beta $, $\gamma$ and $\theta$.}\label{fig.GL}
\end{figure}

\section{Estimation and inference}
\label{sec.est}
In this Section, we will derive the maximum likelihood estimators (MLE) of the unknown parameters  ${\boldsymbol \Theta}=(\beta, \gamma, \theta)^T$ of the $GPS(\beta, \gamma, \theta)$. Also, asymptotic confidence intervals of these parameters will be derived based on the Fisher information. At the end, we will propose an Expectation - Maximization (EM) algorithm for estimating the parameters.

\subsection{MLE for parameters}

 Let $X_{1},..., X_{n}$ be a random sample from  $GPS(\beta,\gamma,\theta)$, and let ${\boldsymbol x}=(x_{1},...,x_{n})$ be the observed values of this random sample. The log-likelihood function is given by
\begin{eqnarray*}
 l_{n} = l_{n}({\boldsymbol\Theta};{\boldsymbol x})= n\log(\theta)+n\log(\beta)+n\gamma \bar{x}+\sum\limits_{i=1}^{n}\log(t_i) +\sum\limits_{i=1}^{n}\log(C'(\theta t_i)) -n\log(C(\theta)),
\end{eqnarray*}
where $t_{i}=e^{-\frac{\beta}{\gamma}(e^{\gamma x_{i}}-1)}$.
Therefore, the  score function is given by $U({\boldsymbol\Theta};{\boldsymbol x})=(\frac{\partial l_{n}}{\partial\beta},\frac{\partial l_{n}}{\partial\gamma},\frac{\partial l_{n}}{\partial\theta})^{T}$, where
 \begin{eqnarray}
&& \frac{\partial l_{n}}{\partial \beta}=\frac{n}{\beta}+\frac{1}{\beta}\sum\limits_{i=1}^{n}\log(t_i)+\frac{\theta}{\beta}\sum\limits_{i=1}^{n}\frac{ t_i\log(t_i) C''(\theta t_{i})}{C'(\theta t_{i})},\label{eq.lb}\\                                                                                                                                                          && \frac{\partial l_{n}}{\partial \gamma}=n\bar{x}+\sum\limits_{i=1}^{n}d_i
+\theta\sum\limits_{i=1}^{n}\frac{b_i C''(\theta t_{i})}{C'(\theta t_{i})},\label{eq.lg}\\
&& \frac{\partial l_{n}}{\partial \theta}=\frac{n}{\theta}+\sum\limits_{i=1}^{n}\frac{
 t_{i}C''(\theta t_{i})}{C'(\theta t_{i})}-\frac{nC'(\theta)}{C(\theta)},\label{eq.lt}
  \end{eqnarray}
and $b_i=\frac{\partial t_{i}}{\partial \gamma}=t_i d_i$ and $d_i=\frac{\partial \log(t_{i})}{\partial \gamma}=\frac{1 }{\gamma}(-\log(t_i)+\gamma x_i\log(t_i)-\beta x_i)$.

 The MLE of ${\boldsymbol \Theta}$, say $\hat{\boldsymbol\Theta}$, is obtained by solving the nonlinear system $U({\boldsymbol\Theta};{\boldsymbol x})={\boldsymbol 0}$. We cannot get an explicit form for this nonlinear system of equations and they can be calculated by using a numerical method, like the Newton method or the bisection method.

For each element of the power series distributions (geometric, Poisson, logarithmic and binomial), we have the following theorems for the MLE's:

\begin{theorem}\label{th.lb}
Let ${\rm g_{1}}(\beta;\gamma,\theta,{\boldsymbol x})$ denote the function on RHS of the expression in (\ref{eq.lb}), where $\gamma$ and $\theta$ are the true values of the
parameters. Then, for a given $\gamma >0$, and $\theta >0$, the roots of ${\rm g}_1(\beta ;\gamma,\theta ,{\boldsymbol x}) =0$, lies in the interval
\[\left(\frac{n}{\frac{\theta C''(\theta)}{C'(\theta)}+1}{(-\sum^n_{i =1}\log(p_i))^{ -1}}\ \ ,\ \ \ n{(-\sum^n_{i =1}\log(p_i))^{ -1}}\right),\]
\end{theorem}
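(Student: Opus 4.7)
Starting from the first-order condition $g_1(\beta;\gamma,\theta,\boldsymbol x)=0$ given in \eqref{eq.lb}, I would multiply through by $\beta>0$ to clear denominators and collect terms proportional to $-\log t_i>0$. The score equation rearranges to
\begin{equation*}
n \;=\; \sum_{i=1}^{n}(-\log t_i)\!\left[1+\frac{\theta t_i\, C''(\theta t_i)}{C'(\theta t_i)}\right].
\end{equation*}
Writing $-\log t_i=\beta(-\log p_i)$ with $p_i:=\exp\!\bigl(-(e^{\gamma x_i}-1)/\gamma\bigr)$ (so $t_i=p_i^{\beta}\in(0,1)$) and setting $\phi(u):=\theta u C''(\theta u)/C'(\theta u)$, the identity becomes
\begin{equation*}
n \;=\; \beta\sum_{i=1}^{n}(-\log p_i)\,\bigl[\,1+\phi(t_i)\,\bigr].
\end{equation*}
This is the form that displays $\beta$ explicitly, with the remaining $\beta$-dependence hiding inside each $\phi(t_i)$.

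\textbf{Sandwich step.} The key observation is that, for each of the four $C(\theta)$ under consideration, $\phi$ is nondecreasing on $[0,1]$ with $\phi(0)=0$ and $\phi(1)=\theta C''(\theta)/C'(\theta)$. Granted this, $t_i\in(0,1)$ forces $1\le 1+\phi(t_i)\le 1+\theta C''(\theta)/C'(\theta)$ for every $i$, and summing gives
\begin{equation*}
\beta\sum_{i=1}^{n}(-\log p_i) \;\le\; n \;\le\; \beta\!\left[1+\frac{\theta C''(\theta)}{C'(\theta)}\right]\sum_{i=1}^{n}(-\log p_i).
\end{equation*}
Dividing the left and right inequalities by the positive quantities $\sum(-\log p_i)$ and $(1+\theta C''(\theta)/C'(\theta))\sum(-\log p_i)$ respectively yields the two-sided bound for $\beta$ stated in the theorem.

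\textbf{Main obstacle.} The only substantive ingredient is monotonicity of $\phi$ on $[0,1]$, and this is also exactly why the statement restricts to four specific $C$'s. Rather than argue in general, I would verify the four cases by direct computation, since $\phi$ admits a tidy closed form in each: $\phi(u)=2\theta u/(1-\theta u)$ (geometric), $\phi(u)=\theta u$ (Poisson), $\phi(u)=\theta u/(1-\theta u)$ (logarithmic), and $\phi(u)=(m-1)\theta u/(1+\theta u)$ (binomial). Each is visibly nondecreasing on the admissible range of $\theta u$, and each vanishes at $u=0$, which is all the sandwich step needs. A uniform proof is also available---recognizing $\phi(u)$ as $E[N-1]$ under the size-biased PSD with weights proportional to $n a_n(\theta u)^{n-1}$, so that monotonicity in $\theta u$ follows from the standard PSD-mean property---but it is not required for the four families named in the theorem.
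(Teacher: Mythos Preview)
Your argument is correct and follows essentially the same route as the paper: both hinge on the observation that $\phi(u)=\theta u\,C''(\theta u)/C'(\theta u)$ is nondecreasing on $(0,1)$ (equivalently, the paper's $w_1$ is increasing in $\beta$), so that this term is trapped between its values at $u=0$ and $u=1$, which immediately gives the two-sided bound on $\beta$. The only differences are cosmetic---the paper bounds $g_1$ itself and reads off where it changes sign, while you rearrange $g_1=0$ and sandwich $\beta$ directly---and you are in fact more explicit than the paper in verifying the monotonicity case by case, whereas the paper simply asserts it.
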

\begin{proof}
See Appendix A.1.
\end{proof}

\begin{theorem}\label{th.lg}
Let ${\rm g_{2}}(\gamma;\beta,\theta,x)$ denote the function on RHS of the expression in (\ref{eq.lg}), where $\beta$ and $\theta$ are the true values of the
parameters. Then, the equation  ${\rm g_{2}}(\gamma;\beta,\theta,{\boldsymbol x})=0$ has at least one root if
$$n\bar{x}-\frac{\beta }{2}\sum^n_{i=1}{x^2_i(1+\frac{\theta e^{-\beta x_i}C''(\theta e^{-\beta x_i})}{C'\left(\theta e^{-\beta x_i}\right)})}>0.$$
\end{theorem}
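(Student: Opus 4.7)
The plan is to apply the intermediate value theorem to $g_2(\gamma;\beta,\theta,{\boldsymbol x})$ regarded as a continuous function of $\gamma$ on $(0,\infty)$, with $\beta$ and $\theta$ held fixed at their true positive values. Continuity is immediate because every term is smooth in $\gamma$ and $C'(\theta t_i)>0$ throughout (the power series $C'$ has nonnegative coefficients and $a_1>0$ in every case of interest). The argument therefore reduces to exhibiting one value of $\gamma$ at which $g_2>0$ and one at which $g_2<0$, since a root must then lie in between.

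For the positive value, I would show that the quantity appearing in the hypothesis is exactly $\lim_{\gamma\to 0^{+}}g_2$. Expanding $e^{\gamma x_i}-1=\gamma x_i+\frac{\gamma^{2}x_i^{2}}{2}+O(\gamma^{3})$ gives $\log t_i=-\beta x_i-\frac{\beta\gamma x_i^{2}}{2}+O(\gamma^{2})$, whence $t_i\to e^{-\beta x_i}$. Substituting into $d_i=\gamma^{-1}(\log(t_i)(\gamma x_i-1)-\beta x_i)$, the $O(1)$ contributions in the numerator cancel and one is left with $d_i=-\frac{\beta x_i^{2}}{2}+O(\gamma)$. Since $b_i=t_i d_i$, passing to the limit yields
\[
\lim_{\gamma\to 0^{+}}g_2=n\bar{x}-\frac{\beta}{2}\sum_{i=1}^{n}x_i^{2}\left(1+\frac{\theta e^{-\beta x_i}C''(\theta e^{-\beta x_i})}{C'(\theta e^{-\beta x_i})}\right),
\]
which is strictly positive by hypothesis. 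For the opposite sign, as $\gamma\to\infty$ with $x_i>0$ one has $\log t_i\to-\infty$ and $d_i\sim -\beta x_i e^{\gamma x_i}/\gamma\to-\infty$, so $\sum_i d_i\to-\infty$. In the third summand, $t_i$ decays double-exponentially while $|d_i|$ grows only exponentially in $\gamma$, so $b_i=t_i d_i\to 0$, and $C''(\theta t_i)/C'(\theta t_i)\to 2a_{2}/a_{1}$ stays bounded; hence $g_2\to-\infty$.

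The main obstacle is the bookkeeping in the $\gamma\to 0^{+}$ Taylor expansion: one must carry enough terms in $\log t_i$ to see that the $-\log t_i$, $\gamma x_i\log t_i$, and $-\beta x_i$ pieces in the numerator of $d_i$ cancel through order $O(1)$, so that after division by $\gamma$ the limit of $d_i$ is finite and equals $-\beta x_i^{2}/2$. Once this is verified, the two sign statements are in place and the existence of at least one root follows immediately from continuity of $g_2$ on $(0,\infty)$ and the intermediate value theorem.
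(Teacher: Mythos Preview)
Your proposal is correct and follows exactly the paper's own argument: compute $\lim_{\gamma\to 0^{+}}g_{2}$ (which the paper simply asserts equals the hypothesis quantity) and $\lim_{\gamma\to\infty}g_{2}=-\infty$, then invoke the intermediate value theorem. You supply the Taylor-expansion details and the asymptotic analysis that the paper suppresses under ``it can be easily shown,'' but the strategy is identical.
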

\begin{proof}
See Appendix A.2.
\end{proof}

\begin{theorem}\label{th.lt}
Let ${\rm g}_{3}(\theta;\beta,\gamma,{\boldsymbol x})$ denote the function on RHS of the expression in (\ref{eq.lt}), where $\beta$ and $\gamma$ are the true values of the
parameters.\\
a. The equation  ${\rm g}_{3}(\theta;\beta,\gamma,{\boldsymbol x})=0$ has at least one root if for all GG, GP and GL  distributions $\sum\limits_{i=1}^{n}t_{i}>\frac{n}{2}$.\\
b. If  ${\rm g}_{3}(p;\beta,\gamma,{\boldsymbol x})=\frac{\partial l_{n}}{\partial p}$, where $p=\frac{\theta}{\theta+1}$ and $p\in (0,1)$ then the equation ${\rm g}_{3}(\theta;\beta,\gamma,{\boldsymbol x})=0$ has at least one root for GB  distribution if $\sum\limits_{i=1}^{n}t_{i}>\frac{n}{2}$ and  $\sum\limits_{i=1}^{n}\frac{1}{t_{i}}>\frac{nm}{1-m}$.
\end{theorem}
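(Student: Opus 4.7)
The plan is to apply the intermediate value theorem to $g_3$ on the appropriate parameter interval for each of the four power series families, by checking that under the stated hypotheses the function changes sign between the two endpoints. For each family I would substitute the specific $C(\theta)$ into \eqref{eq.lt}, simplify, and then take one-sided limits; continuity of the resulting expressions on the open parameter interval is clear.

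For part (a), at the lower endpoint $\theta \to 0^+$ the two dominant terms $n/\theta$ and $-nC'(\theta)/C(\theta)$ both blow up but cancel. A direct algebraic cancellation handles the GG case (where the difference collapses to $-n/(1-\theta)$ and has limit $-n$), while a second-order Taylor expansion of $C$ at the origin does the job for GP and GL (each yielding $-n/2$). In all three cases this cancellation, combined with the easy limit $\lim_{\theta\to 0^+}\sum t_i C''(\theta t_i)/C'(\theta t_i)$, leaves a positive multiple of $\sum_{i=1}^n t_i - n/2$, so the hypothesis $\sum t_i>n/2$ forces $g_3(0^+)>0$. At the upper endpoint, for GG and GL ($\theta\to 1^-$) the term $-nC'(\theta)/C(\theta)$ drives $g_3\to -\infty$ while the other terms stay bounded (since $t_i\in(0,1)$ keeps $\theta t_i$ away from $1$); for GP ($\theta\to\infty$) the expression reduces to $\sum t_i-n$, which is strictly negative because each $t_i<1$. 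The IVT then yields a root in the open parameter interval.

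For part (b), the binomial's natural parameter space $\theta\in(0,\infty)$ is awkward for a direct IVT argument, so I would reparametrize via $p=\theta/(\theta+1)\in(0,1)$, which makes both endpoints finite. Substituting $\theta=p/(1-p)$ into $l_n$ produces several stray $\log(1-p)$ contributions that cancel in pairs, leaving a tractable closed form for $\partial l_n/\partial p$ built from $n/p$, a binomial-specific term $nm(1-p)^{m-1}/(1-(1-p)^m)$, and a sum involving $(1-t_i)/(1-p(1-t_i))$. At $p\to 0^+$, the same Taylor-expansion mechanism as in part (a) yields a limit proportional to $(m-1)(\sum t_i-n/2)$, positive under the first hypothesis. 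At $p\to 1^-$, the binomial-specific term vanishes and the remaining pieces combine into an expression whose sign is determined by the comparison between $\sum 1/t_i$ and $nm/(1-m)$; the second hypothesis forces this to be negative, and another application of IVT finishes the argument.

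The hardest part will be the $\infty-\infty$ cancellation at the lower endpoint: one must expand $C(\theta)$ and $C'(\theta)$ carefully enough that the leading $1/\theta$ divergences cancel exactly and the finite constant $-n/2$ (or its GG analogue $-n$) is laid bare, then recognise this uniform mechanism as the source of the threshold $n/2$ appearing in the hypothesis. Once that common simplification is isolated, the remaining case-by-case work is routine substitution and sign checking.
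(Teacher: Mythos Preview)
Your proposal is correct and follows essentially the same approach as the paper: for each of the four families you substitute the specific $C(\theta)$ into \eqref{eq.lt}, compute the one-sided limits at the two endpoints of the parameter interval, and invoke the intermediate value theorem. Your treatment is in fact slightly more careful than the paper's in two respects: you correctly identify the upper endpoint for GG as $\theta\to 1^{-}$ (the paper writes $\theta\to\infty$, presumably a typo), and for GP you obtain the correct finite upper limit $\sum t_i-n<0$ rather than the paper's stated $-\infty$; neither discrepancy affects the sign, so the IVT conclusion is unchanged.
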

\begin{proof}
See Appendix A.3.
\end{proof}

\begin{theorem}
The pdf, $f(x|{\boldsymbol \Theta})$, of GPS distribution satisfies on the regularity condistions, i.e.
\begin{itemize}
  \item[i.] the support of $f(x|{\boldsymbol \Theta})$ does not depend on  ${\boldsymbol \Theta}$,
  \item[ii.] $f(x|{\boldsymbol \Theta})$ is twice continuously differentiable with respect to ${\boldsymbol \Theta}$,
  \item[iii.] the differentiation and integration are interchangeable in the sense that
  \end{itemize}
      $$
        \frac{\partial}{\partial {\boldsymbol \Theta}} \int_{-\infty}^{\infty}f(x|{\boldsymbol \Theta})dx= \int_{-\infty}^{\infty} \frac{\partial}{\partial {\boldsymbol \Theta}}f(x|{\boldsymbol \Theta})dx, \ \ \ \ \ \ \frac{\partial^2}{\partial {\boldsymbol \Theta}\partial{\boldsymbol \Theta}^T} \int_{-\infty}^{\infty}f(x|{\boldsymbol \Theta})dx= \int_{-\infty}^{\infty} \frac{\partial^2}{\partial {\boldsymbol \Theta}\partial{\boldsymbol \Theta}^T}f(x|{\boldsymbol \Theta})dx.
      $$

\end{theorem}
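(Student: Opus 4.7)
These are immediate. For (i), every admissible $(\beta,\gamma,\theta)\in(0,\infty)\times(0,\infty)\times(0,s)$ yields a density (\ref{fGP}) that is strictly positive on $(0,\infty)$ and zero on $(-\infty,0]$, so the support is $[0,\infty)$ independent of $\boldsymbol\Theta$. For (ii), I would decompose
$$ f(x\mid\boldsymbol\Theta)=\theta\beta\,e^{\gamma x}\,e^{-\frac{\beta}{\gamma}(e^{\gamma x}-1)}\,\frac{C'\!\left(\theta e^{-\frac{\beta}{\gamma}(e^{\gamma x}-1)}\right)}{C(\theta)} $$
and check smoothness factor by factor: the map $(\beta,\gamma)\mapsto\frac{\beta}{\gamma}(e^{\gamma x}-1)$ is $C^\infty$ on $\{\beta>0,\gamma>0\}$ for each fixed $x\ge 0$; $C(\theta)=\sum_{n\ge 1}a_n\theta^n$ is real-analytic and strictly positive on $(0,s)$, so $C,C',C'',C'''$ are smooth and $1/C(\theta)$ is smooth; and smoothness is preserved under products, quotients (with nonvanishing denominator) and compositions. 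Thus $f$ is $C^\infty$ in $\boldsymbol\Theta$ and in particular $C^2$.

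\textbf{Part (iii): setup.} Since $\int_0^\infty f(x\mid\boldsymbol\Theta)\,dx\equiv 1$, the left-hand sides of both displayed identities vanish, and the question reduces to justifying the interchange of derivative and integral. My plan is the standard Leibniz/dominated-convergence argument: it suffices to show that for every $\boldsymbol\Theta_0$ there is an open neighborhood $U$ and nonnegative integrable functions $h_1,h_2$ on $(0,\infty)$ with
$$ \left|\frac{\partial f(x\mid\boldsymbol\Theta)}{\partial\Theta_i}\right|\le h_1(x), \qquad \left|\frac{\partial^2 f(x\mid\boldsymbol\Theta)}{\partial\Theta_i\,\partial\Theta_j}\right|\le h_2(x), $$
uniformly over $\boldsymbol\Theta\in U$ and all indices $i,j\in\{1,2,3\}$.

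\textbf{Part (iii): the main obstacle.} Constructing these dominants is where the only real work lies. Fix a compact cube $K=[\beta_1,\beta_2]\times[\gamma_1,\gamma_2]\times[\theta_1,\theta_2]\subset(0,\infty)^2\times(0,s)$ containing $\boldsymbol\Theta_0$. A routine chain-rule expansion shows that every first- or second-order partial of $f$ is a finite sum of terms of the form
$$ q(\boldsymbol\Theta)\,P(x,e^{\gamma x})\,e^{-\frac{\beta}{\gamma}(e^{\gamma x}-1)}\,R\!\left(\theta e^{-\frac{\beta}{\gamma}(e^{\gamma x}-1)}\right), $$
where $q$ is continuous on $K$, $P$ is a polynomial in $x$ and $e^{\gamma x}$ of bounded degree, and $R$ is a linear combination of the functions $C^{(k)}/C(\theta)$ for $k\in\{1,2,3\}$. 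On $K$, $q$ is bounded, and the argument of $R$ lies in $(0,\theta_2]\subset(0,s)$ so $R$ is bounded. Since $\frac{\beta}{\gamma}(e^{\gamma x}-1)\ge \frac{\beta_1}{\gamma_2}(e^{\gamma_1 x}-1)$ for all $\boldsymbol\Theta\in K$ and $x\ge 0$, the double exponential is dominated by $\exp\!\bigl(-c(e^{\gamma_1 x}-1)\bigr)$ with $c=\beta_1/\gamma_2>0$. This double-exponential decay dominates $x^{j}e^{m\gamma_2 x}$ for every fixed $j,m\ge 0$, so each term in the chain-rule expansion is bounded above by an integrable function on $(0,\infty)$. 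Taking $h_1$ and $h_2$ to be the sum of these bounds over the (finitely many) terms produces the required dominants, and the interchange in (iii) follows. The bookkeeping of the chain rule and the observation that the outer double exponential overwhelms any polynomial-times-single-exponential factor produced by differentiation are the only non-trivial steps.
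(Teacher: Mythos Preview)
Your argument is correct and considerably more detailed than what the paper offers. The paper's own proof consists of a single sentence declaring the result ``obvious'' and referring the reader to Section~10 of Casella and Berger (2001); no specific bounds or dominants are constructed. In contrast, you actually carry out the Leibniz/dominated-convergence verification: you identify the general form of each term in the chain-rule expansion, observe that the outer double-exponential factor $\exp\bigl(-\frac{\beta}{\gamma}(e^{\gamma x}-1)\bigr)$ dominates every polynomial-times-single-exponential factor produced by differentiation, and bound the $C^{(k)}$-type factors using compactness of $K$ and analyticity of $C$ on $(0,s)$. One small imprecision: differentiating in $\theta$ also produces terms with $C(\theta)^{-2}$ in the denominator (from the quotient rule applied to $1/C(\theta)$), not only $C^{(k)}/C(\theta)$; these are nonetheless continuous and bounded on $K$ for the same reason, so the argument is unaffected. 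What your approach buys is an honest, self-contained justification of the interchange; what the paper's approach buys is brevity at the cost of leaving the verification entirely to the reader.
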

\begin{proof}
The proof is obvious and for more details, see \cite{ca-be-01} Section 10.
\end{proof}
%

\bigskip

Now, we derive  asymptotic confidence intervals for the parameters of GPS distribution. It is well-known that under regularity conditions \citep[see][Section 10]{ca-be-01}, the asymptotic distribution of
$\sqrt{n}(\hat{\boldsymbol\Theta}-{\boldsymbol\Theta})$ is multivariate normal with mean ${\boldsymbol 0}$ and variance-covariance matrix $J_n^{-1}({\boldsymbol\Theta})$, 
where $J_n({\boldsymbol\Theta})=\lim_{n\rightarrow 0} I_n({\boldsymbol\Theta})$, and $I_n({\boldsymbol\Theta})$ is
the $3\times3$  observed information matrix, i.e.
\[I_n\left({\boldsymbol\Theta} \right)=-\left[ \begin{array}{ccc}
I_{\beta \beta} & I_{\beta \gamma } & I_{\beta \theta } \\
I_{\beta \gamma  } & I_{\gamma \gamma } & I_{\gamma \theta } \\
I_{\beta \theta  } & I_{\gamma \theta } & I_{\theta \theta}
\end{array} \right],\]
whose elements are given in Appendix B. Therefore, an $100(1-\alpha)$ asymptotic confidence interval for each parameter, ${\boldsymbol\Theta}_{r}$, is given by
\begin{equation}\label{eq.CI}
ACI_{r}=(\hat{\boldsymbol\Theta}_{r} -Z_{\alpha/2}\sqrt{\hat{I}_{rr}}, \hat{\boldsymbol\Theta}_{r}+Z_{\frac{\alpha}{2}}\sqrt{\hat{I}_{rr}}),
\end{equation}
 where $ \hat{I}_{rr}$ is the $(r,r)$  diagonal element of $I_{n}^{-1}(\hat{\boldsymbol\Theta})$ for $r=1,2,3$ and $Z_{\alpha/2}$ is the quantile
$\frac{\alpha}{2}$ of the standard normal distribution.

In some cases, a censoring time $C_i$ is assumed in collecting the lifetime data $X_i$, where  $C_i$ and   $X_i$ are independent. Suppose that the data consist of $n$ independent observations $x_i=\min(X_i, C_i)$ and $\delta_i=I(X_i\leq C_i)$ is such that $\delta_i=1$  if $X_i$ is a time to event and  $\delta_i=0$ if it is right censored for $i=1,\dots,n$. The censored likelihood function is
\begin{equation}\label{eq.Lce}
L_S({\boldsymbol\Theta})\varpropto \prod_{i=1}^n [f(x_i|{\boldsymbol\Theta})]^{\delta_i}[S(x_i|{\boldsymbol\Theta})]^{1-\delta_i},
\end{equation}
where $f(x_i|{\boldsymbol\Theta})$ and $S(x_i|{\boldsymbol\Theta})$ are the density function and survival function of GPS distribution. A similar  procedure to the above  can be used for constructing confidence interval for the parameters of the GPS model with a censoring  time.

\subsection{ EM-algorithm}

The EM algorithm is a very powerful tool in handling the incomplete data problem
 \citep[see][]{de-la-ru-77}.
It is an iterative method, and  there are two steps in each iteration: Expectation step or the E-step and the Maximization step or the M-step. The EM algorithm is especially useful if the complete data set is easy to analyze. In this Section, we develop an EM-algorithm for obtaining the MLE's for the  parameters of GPS distribution.

We define a hypothetical complete-data distribution with a joint probability density function in the form
$$
g(x_i,z_i;{\boldsymbol \Theta})=z_i \beta e^{\gamma x_i}e^{-\frac{z_i\beta}{\gamma}(e^{\gamma x_i}-1)}\frac{a_{z_i}\theta^{z_i}}{C(\theta)},
$$
where $\beta$, $\gamma$, $\theta>0$, $x_i>0$ and $z_i\in N$. Therefore, the log-likelihood for the complete-data is
\begin{eqnarray}\label{eq.ls}
l^{\ast}({\boldsymbol y};{\boldsymbol \Theta})\propto n\bar{z}\log(\theta)+n\log(\beta)+n\gamma \bar{x}-\frac{\beta}{\gamma}\sum\limits_{i=1}^{n}z_{i}( e^{\gamma x_{i}}-1)-n\log(C(\theta)),
\end{eqnarray}
where ${\boldsymbol y}=(x_1,...,x_n, z_1,...,z_n)$, $\bar{z}=n^{-1}\sum\limits_{i=1}^{n}z_{i}$, and $\bar{x}=n^{-1}\sum\limits_{i=1}^{n}x_{i}$.
On differentiation  \eqref{eq.ls} with respect to parameters $\beta$, $\gamma$, and $\theta$, we obtain the components of the score function,
$U({\boldsymbol y};{\boldsymbol\Theta})=(\frac{\partial l^{\ast}_{n}}{\partial\beta},\frac{\partial l^{\ast}_{n}}{\partial\gamma},\frac{\partial l^{\ast}_{n}}{\partial\theta})^{\prime}$, as
\begin{eqnarray*}
\frac{\partial l^{\ast}_{n}}{\partial\beta}&=&\frac{n}{\beta}-\frac{1}{\gamma}\sum\limits_{i=1}^{n}z_{i}( e^{\gamma x_{i}}-1),\\
\frac{\partial l^{\ast}_{n}}{\partial\gamma}&=&n\bar{x}+\frac{\beta}{\gamma^{2}}\sum\limits_{i=1}^{n} z_{i}(e^{\gamma x_{i}}-1)-\frac{\beta}{\gamma}\sum\limits_{i=1}^{n} z_{i}x_{i}e^{\gamma x_{i}},\\
\frac{\partial l^{\ast}_{n}}{\partial\theta}&=&\frac{n\bar{z}}{\theta}-n\frac{C'(\theta)}{C(\theta)}.
\end{eqnarray*}

From a nonlinear system of equations $U({\boldsymbol y};{\boldsymbol\Theta})={\boldsymbol 0}$, we obtain the iterative procedure of the EM-algorithm as
\begin{eqnarray*}
&&\hat{\beta}^{(t+1)}=\frac{n\gamma^{(t)}}{\sum\limits_{i=1}^{n}\hat{z}_{i}^{(t)}(e^{\hat{\gamma }^{(t)} x_{i}}-1)},
\qquad \hat{\theta}^{(t+1)}=\frac{C(\hat{\theta}^{(t+1)})}{nC'(\hat{\theta}^{(t+1)})}\sum\limits_{i=1}^{n}\hat{z}_{i}^{(t)},\\
&&n\bar{x}(\hat{\gamma}^{(t+1)})^{2}+\hat{\beta}^{(t)}\sum\limits_{i=1}^{n}\hat{z}^{(t)}_{i}(e^{\hat{\gamma}^{(t+1)} x_{i}}-1)-\hat{\gamma}^{(t+1)}\hat{\beta}^{(t)}\sum\limits_{i=1}^{n}{\hat z}^{(t)}_{i} x_{i}e^{\hat{\gamma}^{(t+1)} x_{i}}=0,
\end{eqnarray*}
where $\hat{\theta}^{(t+1)}$ and $\hat{\gamma}^{(t+1)}$ are found numerically.
Here, for $i=1,2,...,n$, we have that
$$
\hat{z}_{i}^{(t)}=1+\frac{\hat{\theta}^{(t)}e^{-\frac{\hat{\beta}^{(t)}}{\hat{\gamma}^{(t)}}(e^{\hat{\gamma}^{(t)} x_{i}}-1)}C''(\hat{\theta}^{(t)}e^{-\frac{\hat{\beta}^{(t)}}{\hat{\gamma}^{(t)}}(e^{\hat{\gamma}^{(t)} x_{i}}-1)})}{C'(\hat{\theta}^{(t)}e^{-\frac{\hat{\beta}^{(t)}}{\hat{\gamma}^{(t)}}(e^{\hat{\gamma}^{(t)} x_{i}}-1)})}.
$$

In this part, we use the results of \cite{lou-82}
to obtain the standard errors of the estimators from the EM-algorithm. The elements of the $3\times 3$ observed information matrix $I_{c}({\boldsymbol\Theta};{\boldsymbol y})=-[\frac{\partial U({\boldsymbol y};{\boldsymbol\Theta})}{\partial{\boldsymbol\Theta}}]$ are given by
\begin{eqnarray*}
&&-\frac{\partial^{2} l^{\ast}_{n}}{\partial\beta^{2}}=\frac{n}{\beta^{2}},
\quad-\frac{\partial^{2} l^{\ast}_{n}}{\partial\beta \partial\gamma}=-\frac{\partial^{2} l^{\ast}_{n}}{\partial\gamma \partial \beta}=-\frac{1}{\gamma^{2}}\sum\limits_{i=1}^{n} z_{i}(e^{\gamma x_{i}}-1)+\frac{1}{\gamma}\sum\limits_{i=1}^{n} z_{i}x_{i}e^{\gamma x_{i}},\\
&&\ \frac{\partial^{2} l^{\ast}_{n}}{\partial\beta \partial\theta}=\frac{\partial^{2} l^{\ast}_{n}}{\partial\theta\partial\beta}=\frac{\partial^{2} l^{\ast}_{n}}{\partial\theta\partial\gamma}=\frac{\partial^{2} l^{\ast}_{n}}{\partial\gamma\partial\theta}=0, \quad -\frac{\partial^{2} l^{\ast}_{n}}{\partial\theta^{2}}=\frac{ n\bar{z}}{\theta^{2}}+\frac{nC''(\theta)}{C(\theta)}-\frac{n(C'(\theta))^{2}}{(C(\theta))^{2}},\\
&&-\frac{\partial^{2} l^{\ast}_{n}}{\partial\gamma^{2}}=\frac{2\beta}{\gamma^{3}}\sum\limits_{i=1}^{n} z_{i}(e^{\gamma x_{i}}-1)-\frac{2\beta}{\gamma^2}\sum\limits_{i=1}^{n} z_{i}x_{i}e^{\gamma x_{i}}+\frac{\beta}{\gamma}\sum\limits_{i=1}^{n} z_{i}x_{i}^{2}e^{\gamma x_{i}}.
\end{eqnarray*}
Taking the conditional expectation of $I_c({\boldsymbol\Theta};{\boldsymbol y})$ given ${\boldsymbol x}$, we obtain the $3\times3$ matrix
\begin{eqnarray*}
{\mathcal I}_{c}({\boldsymbol\Theta};{\boldsymbol x})=E(I_{c}({\boldsymbol\Theta};{\boldsymbol y})|{\boldsymbol x})=[c_{ij}],
\end{eqnarray*}
where
\begin{eqnarray*}
&&c_{11}=\frac{n}{\beta^{2}}, \quad c_{12}=c_{21}=-\frac{1}{\gamma^{2}}\sum\limits_{i=1}^{n}E(Z_{i}|x_i)(e^{\gamma x_{i}}-1)+\frac{1}{\gamma}\sum\limits_{i=1}^{n}E(Z_{i}|x_i)x_{i}e^{\gamma x_{i}},\\
&&c_{13}=c_{31}=c_{23}=c_{32}=0, \quad
c_{33}=\frac{1}{\theta^{2}}\sum\limits_{i=1}^{n}E(Z_{i}|x_i)+\frac{nC''(\theta)}{C(\theta)}-\frac{n(C'(\theta))^{2}}{(C(\theta))^{2}},\\
&&c_{22}=\frac{2\beta}{\gamma^{3}}\sum\limits_{i=1}^{n} E(Z_{i}|x_i)(e^{\gamma x_{i}}-1)-\frac{2\beta}{\gamma^2}\sum\limits_{i=1}^{n}
E(Z_{i}|x_i)x_{i}e^{\gamma x_{i}}+\frac{\beta}{\gamma}\sum\limits_{i=1}^{n}E(Z_{i}|x_i)x_{i}^{2}e^{\gamma x_{i}},
\end{eqnarray*}
and
 $$E(Z_{i}|x_i)=1+\frac{\theta e^{-\frac{\beta}{\gamma}(e^{\gamma x_{i}}-1)} C''(\theta e^{-\frac{\beta}{\gamma}(e^{\gamma x_{i}}-1)})}{C'(\theta e^{-\frac{\beta}{\gamma}(e^{\gamma x_{i}}-1)})}.$$

Moving now to the computation of ${\mathcal I}_{m}(\Theta;\boldsymbol{x})$ as
$$
{\mathcal I}_{m}({\boldsymbol\Theta};{\boldsymbol x})=Var[U({\boldsymbol y};{\boldsymbol\Theta})|{\boldsymbol x}]=[v_{ij}],
$$
where
\begin{eqnarray*}
&&v_{11}=\frac{1}{\gamma^2}\sum\limits_{i=1}^{n}(e^{\gamma x_{i}}-1)^{2}Var(Z_{i}|x_i), \quad v_{13}=v_{31}=-\frac{1}{\gamma\theta}\sum\limits_{i=1}^{n}(e^{\gamma x_{i}}-1)Var(Z_{i}|x_i),\\
&&v_{12}=v_{21}=-\frac{\beta}{\gamma^{3}}\sum\limits_{i=1}^{n}(e^{\gamma x_{i}}-1)(e^{\gamma x_{i}}-1-\gamma x_{i}e^{\gamma x_{i}})
Var(Z_{i}|x),\\
&&v_{22}=\frac{\beta^2}{\gamma^{4}}\sum\limits_{i=1}^{n}(
e^{\gamma x_{i}}-1-\gamma x_{i}e^{\gamma x_{i}})^2Var(Z_{i}|x_i),\\
&&v_{23}=v_{32}=\frac{\beta}{\theta\gamma^{2}}\sum\limits_{i=1}^{n}(e^{\gamma x_{i}}-1-\gamma x_{i}e^{\gamma x_{i}})Var(Z_{i}|x_i), \quad
v_{33}=\frac{1}{\theta^{2}}\sum\limits_{i=1}^{n}Var(Z_{i}|x_i),
\end{eqnarray*}
and
\begin{eqnarray}
Var(Z|x)&=&E(Z^{2}|x)-(E(Z|x))^{2}\nonumber\\
&=&\frac{1}{C'(\theta_{\ast})}\sum\limits_{z=1}^{\infty}a_{z}z^{3}\theta_{\ast}^{z-1}-\frac{1}{[C'(\theta_{\ast})]^{2}}[C'(\theta_{\ast})+\theta_{\ast}C''(\theta_{\ast})]^{2}\nonumber\\
&=&\frac{1}{C'(\theta_{\ast})}[\theta_{\ast}^{2}C'''(\theta_{\ast})+C'(\theta_{\ast})+3\theta_{\ast}
C''(\theta_{\ast})]-\frac{1}{[C'(\theta_{\ast})]^{2}}[C'(\theta_{\ast})+\theta_{\ast}C''(\theta_{\ast})]^{2},
\nonumber
\end{eqnarray}
in which $\theta_{\ast}=\theta e^{-\frac{\beta}{\gamma}(e^{\gamma x}-1)}$. Therefore, we obtain the observed information as
$$
I(\hat{{\boldsymbol\Theta}};{\boldsymbol x})=\mathcal{I}_{c}(\hat{\boldsymbol\Theta};{\boldsymbol x})-\mathcal{I}_{m}(\hat{\boldsymbol\Theta};{\boldsymbol x}).
$$
The standard errors of the MLE's of the EM-algorithm are the square root of the diagonal elements of the $I^{-1}(\hat{\boldsymbol\Theta};{\boldsymbol x})$.

\section{Simulation}
\label{sec.sim}

This section presents the results of three simulation studies. First,
 a simulation study is performed for evaluation of  parameter estimation based on the EM algorithm. No restriction has been imposed on the maximum number of iterations and convergence is assumed when the absolute difference between successive estimates are less that $10^{-4}$.

Here, we consider the GG distribution and generate $N=1000$ random samples with different set of parameters  for $n=30, 50, 100, 200$. In each random sample, the estimation of parameters as well as the Fisher information matrix are obtained. Then, the average value of estimations (AE), mean square errors (MSE), variance of estimations (VS), the average value of inverse of Fisher information (EF) matrices, and coverage probabilities (CP) of the 95\% confidence interval in \eqref{eq.CI}  are computed.   The results are given in Table \ref{tab.sim1}, and we can conclude that

\begin{sidewaystable}
\begin{center}
\caption{The average MLE's, mean square errors, variance of estimations, the average value of Fisher information, and coverage probability based on
 EM estimators for GG distribution}\label{tab.sim1}
\begin{tabular}{|c|ccc|ccc|ccc|ccc|ccc|ccc|} \hline
  &      \multicolumn{3}{|c|}{Parameter}       &  \multicolumn{3}{|c|}{AE}  & \multicolumn{3}{|c|}{MSE}
   &\multicolumn{3}{|c|}{VS} & \multicolumn{3}{|c|}{EF} &  \multicolumn{3}{|c|}{CP}  \\ \hline
$n$ & $\beta$ & $\gamma$ & $\theta$ & $\hat{\beta}$ & $\hat{\gamma}$ & $\hat{\theta}$  & $\hat{\beta}$ & $\hat{\gamma}$ & $\hat{\theta}$ &
 $\hat{\beta}$ & $\hat{\gamma}$ & $\hat{\theta}$ & $\hat{\beta}$ & $\hat{\gamma}$ & $\hat{\theta}$ &  ${\beta}$ & ${\gamma}$ & ${\theta}$  \\ \hline
30 & 0.5 & 2.0 & 0.9 & 0.490 & 2.760 & 0.891 & 0.914 & 4.601 & 0.466 & 0.102 & 1.553 & 0.008 & 1.748 & 6.111 & 0.089 & 0.90 & 0.94 & 0.91 \\
50 & 0.5 & 2.0 & 0.9 & 0.458 & 2.582 & 0.903 & 0.939 & 3.538 & 0.460 & 0.092 & 1.084 & 0.005 & 1.593 & 4.836 & 0.076 & 0.92 & 0.94 & 0.91 \\
100 & 0.5 & 2.0 & 0.9 & 0.446 & 2.451 & 0.908 & 1.004 & 2.796 & 0.461 & 0.108 & 0.666 & 0.005 & 0.882 & 2.723 & 0.041 & 0.95 & 0.95 & 0.95 \\
200 & 0.5 & 2.0 & 0.9 & 0.470 & 2.283 & 0.904 & 0.941 & 2.170 & 0.454 & 0.112 & 0.442 & 0.005 & 0.679 & 1.716 & 0.027 & 0.95 & 0.95 & 0.96 \\ \hline

30 & 0.5 & 2.0 & 0.1 & 0.406 & 2.711 & 0.207 & 1.020 & 4.774 & 0.994 & 0.039 & 0.745 & 0.107 & 1.774 & 5.318 & 5.722 & 0.89 & 0.96 & 0.87 \\
50 & 0.5 & 2.0 & 0.1 & 0.427 & 2.587 & 0.187 & 1.004 & 4.149 & 1.004 & 0.039 & 0.531 & 0.102 & 1.143 & 3.119 & 3.222 & 0.90 & 0.94 & 0.88 \\
100 & 0.5 & 2.0 & 0.1 & 0.457 & 2.418 & 0.131 & 0.951 & 3.371 & 1.030 & 0.032 & 0.311 & 0.086 & 1.790 & 5.490 & 6.589 & 0.92 & 0.96 & 0.90 \\
200 & 0.5 & 2.0 & 0.1 & 0.485 & 2.300 & 0.192 & 0.914 & 2.948 & 1.036 & 0.027 & 0.211 & 0.076 & 0.835 & 2.103 & 2.602 & 0.92 & 0.95 & 0.92 \\ \hline

30 & 1.0 & 2.0 & 0.9 & 0.859 & 3.441 & 0.915 & 0.764 & 8.083 & 0.401 & 0.213 & 3.115 & 0.005 & 4.490 & 14.220 & 0.060 & 0.91 & 0.93 & 0.93 \\
50 & 1.0 & 2.0 & 0.9 & 0.911 & 3.123 & 0.913 & 0.924 & 5.636 & 0.399 & 0.466 & 2.097 & 0.006 & 4.339 & 9.036 & 0.051 & 0.91 & 0.94 & 0.92 \\
100 & 1.0 & 2.0 & 0.9 & 0.913 & 2.684 & 0.903 & 1.223 & 3.474 & 0.417 & 0.854 & 1.385 & 0.011 & 3.445 & 5.108 & 0.042 & 0.92 & 0.96 & 0.92 \\
200 & 1.0 & 2.0 & 0.9 & 1.033 & 2.378 & 0.893 & 1.234 & 2.369 & 0.420 & 0.964 & 1.024 & 0.011 & 2.393 & 3.437 & 0.027 & 0.92 & 0.95 & 0.93 \\ \hline

30 & 1.0 & 2.0 & 0.1 & 0.261 & 2.972 & 0.274 & 0.962 & 5.359 & 1.006 & 0.128 & 0.998 & 0.088 & 6.823 & 10.343 & 6.991 & 0.89 & 0.93 & 0.91 \\
50 & 1.0 & 2.0 & 0.1 & 0.214 & 2.814 & 0.228 & 0.912 & 4.528 & 1.057 & 0.133 & 0.759 & 0.089 & 4.393 & 5.565 & 3.258 & 0.89 & 0.92 & 0.91 \\
100 & 1.0 & 2.0 & 0.1 & 0.185 & 2.556 & 0.179 & 0.824 & 3.360 & 1.103 & 0.125 & 0.462 & 0.083 & 2.599 & 3.426 & 2.167 & 0.91 & 0.94 & 0.93 \\
200 & 1.0 & 2.0 & 0.1 & 0.155 & 2.411 & 0.117 & 0.771 & 2.829 & 1.173 & 0.107 & 0.336 & 0.076 & 1.841 & 2.406 & 1.618 & 0.92 & 0.93 & 0.93 \\ \hline
\end{tabular}
\end{center}
\end{sidewaystable}

\begin{sidewaystable}
\begin{center}
\caption{The average MLE's, mean square errors, variance of estimations, the average value of Fisher information, and coverage probability based on
 MLE estimators for GG distribution with censored data}\label{tab.sim2}
\begin{tabular}{|c|ccc|ccc|ccc|ccc|ccc|ccc|} \hline
  &      \multicolumn{3}{|c|}{Parameter}       &  \multicolumn{3}{|c|}{AE}  & \multicolumn{3}{|c|}{MSE}
   &\multicolumn{3}{|c|}{VS} & \multicolumn{3}{|c|}{EF} &  \multicolumn{3}{|c|}{CP}  \\ \hline
$n$ & $\beta$ & $\gamma$ & $\theta$ & $\hat{\beta}$ & $\hat{\gamma}$ & $\hat{\theta}$  & $\hat{\beta}$ & $\hat{\gamma}$ & $\hat{\theta}$ &
 $\hat{\beta}$ & $\hat{\gamma}$ & $\hat{\theta}$ & $\hat{\beta}$ & $\hat{\gamma}$ & $\hat{\theta}$ &  ${\beta}$ & ${\gamma}$ & ${\theta}$  \\ \hline

30 & 0.5 & 2.0 & 0.9 & 1.141 & 2.536 & 0.705 & 1.613 & 2.092 & 0.132 & 1.203 & 1.807 & 0.094 & 16.115 & 14.718 & 5.692 & 0.86 & 0.93 & 0.84 \\
50 & 0.5 & 2.0 & 0.9 & 0.898 & 2.300 & 0.778 & 0.866 & 1.431 & 0.071 & 0.709 & 1.342 & 0.056 & 8.758 & 6.669 & 2.715 & 0.88 & 0.94 & 0.87 \\
100 & 0.5 & 2.0 & 0.9 & 0.789 & 2.062 & 0.822 & 0.651 & 0.862 & 0.040 & 0.568 & 0.859 & 0.034 & 7.289 & 5.956 & 2.069 & 0.88 & 0.95 & 0.88 \\
200 & 0.5 & 2.0 & 0.9 & 0.670 & 2.042 & 0.856 & 0.326 & 0.633 & 0.018 & 0.297 & 0.632 & 0.016 & 4.143 & 5.705 & 1.008 & 0.91 & 0.95 & 0.90 \\ \hline

30 & 0.5 & 2.0 & 0.1 & 0.399 & 2.367 & 0.232 & 0.059 & 0.568 & 0.138 & 0.049 & 0.434 & 0.121 & 4.323 & 8.031 & 6.697 & 0.91 & 0.92 & 0.90 \\
50 & 0.5 & 2.0 & 0.1 & 0.389 & 2.343 & 0.260 & 0.061 & 0.485 & 0.154 & 0.049 & 0.368 & 0.128 & 3.002 & 6.738 & 4.966 & 0.91 & 0.92 & 0.91 \\
100 & 0.5 & 2.0 & 0.1 & 0.386 & 2.304 & 0.293 & 0.066 & 0.422 & 0.174 & 0.053 & 0.330 & 0.137 & 3.390 & 4.996 & 4.715 & 0.89 & 0.93 & 0.89 \\
200 & 0.5 & 2.0 & 0.1 & 0.377 & 2.316 & 0.307 & 0.063 & 0.363 & 0.179 & 0.048 & 0.263 & 0.136 & 1.195 & 4.790 & 3.744 & 0.89 & 0.95 & 0.90 \\ \hline

30 & 1.0 & 2.0 & 0.9 & 1.995 & 2.849 & 0.746 & 4.086 & 3.014 & 0.091 & 3.098 & 2.294 & 0.068 & 16.988 & 18.095 & 3.081 & 0.87 & 0.90 & 0.85 \\
50 & 1.0 & 2.0 & 0.9 & 1.667 & 2.594 & 0.799 & 2.727 & 2.201 & 0.052 & 2.284 & 1.851 & 0.042 & 15.772 & 17.097 & 2.365 & 0.86 & 0.91 & 0.84 \\
100 & 1.0 & 2.0 & 0.9 & 1.308 & 2.208 & 0.851 & 1.428 & 1.120 & 0.025 & 1.335 & 1.078 & 0.022 & 13.458 & 15.288 & 1.799 & 0.85 & 0.94 & 0.82 \\
200 & 1.0 & 2.0 & 0.9 & 1.191 & 2.064 & 0.871 & 0.902 & 0.552 & 0.013 & 0.866 & 0.548 & 0.012 & 10.675 & 9.313 & 1.069 & 0.88 & 0.93 & 0.87 \\ \hline

30 & 1.0 & 2.0 & 0.1 & 0.616 & 2.994 & 0.367 & 0.296 & 2.448 & 0.215 & 0.149 & 1.460 & 0.144 & 8.004 & 9.443 & 6.819 & 0.84 & 0.93 & 0.84 \\
50 & 1.0 & 2.0 & 0.1 & 0.628 & 2.882 & 0.389 & 0.308 & 2.110 & 0.232 & 0.170 & 1.333 & 0.148 & 6.905 & 4.961 & 5.584 & 0.80 & 0.95 & 0.81 \\
100 & 1.0 & 2.0 & 0.1 & 0.630 & 2.816 & 0.407 & 0.325 & 1.762 & 0.253 & 0.188 & 1.098 & 0.159 & 5.158 & 4.707 & 4.130 & 0.74 & 0.89 & 0.75 \\
200 & 1.0 & 2.0 & 0.1 & 0.631 & 2.723 & 0.413 & 0.326 & 1.331 & 0.260 & 0.190 & 0.809 & 0.162 & 5.089 & 3.507 & 3.174 & 0.73 & 0.91 & 0.74 \\ \hline

\end{tabular}
\end{center}
\end{sidewaystable}

\begin{table}[]
\begin{center}
\caption{The number of cases that the criteria value of fitted distribution is smaller than the criteria value of fitted GG distribution}\label{tab.sim3}
\begin{tabular}{|c|ccc|cccc|cccc|cccc|} \hline
 & \multicolumn{3}{|c|}{Parameter} & \multicolumn{4}{|c|}{AIC} & \multicolumn{4}{|c|}{AICC} & \multicolumn{4}{|c|}{BIC} \\ \hline
$n$ & $\beta$ & $\gamma$ & $\theta$ & Gompertz & GP & GB & GL & Gompertz & GP & GB & GL & Gompertz & GP & GB & GL \\ \hline
30 & 0.5 & 2.0 & 0.9 & 761 & 81 & 71 & 457 & 821 & 81 & 71 & 457 & 902 & 81 & 71 & 457 \\
50 & 0.5 & 2.0 & 0.9 & 648 & 95 & 84 & 419 & 703 & 95 & 84 & 419 & 901 & 95 & 84 & 419 \\
100 & 0.5 & 2.0 & 0.9 & 360 & 60 & 48 & 375 & 387 & 60 & 48 & 375 & 776 & 60 & 48 & 375 \\
200 & 0.5 & 2.0 & 0.9 & 146 & 39 & 30 & 363 & 152 & 39 & 30 & 363 & 541 & 39 & 30 & 363 \\ \hline

30 & 0.5 & 2.0 & 0.1 & 945 & 18 & 24 & 350 & 959 & 18 & 24 & 350 & 978 & 18 & 24 & 350 \\
50 & 0.5 & 2.0 & 0.1 & 933 & 19 & 41 & 386 & 946 & 19 & 41 & 386 & 986 & 19 & 41 & 386 \\
100 & 0.5 & 2.0 & 0.1 & 917 & 23 & 52 & 418 & 924 & 23 & 52 & 418 & 990 & 23 & 52 & 418 \\
200 & 0.5 & 2.0 & 0.1 & 894 & 33 & 73 & 408 & 899 & 33 & 73 & 408 & 988 & 33 & 73 & 408 \\ \hline

30 & 1.0 & 2.0 & 0.9 & 492 & 123 & 102 & 460 & 588 & 123 & 102 & 460 & 706 & 123 & 102 & 460 \\
50 & 1.0 & 2.0 & 0.9 & 279 & 139 & 120 & 397 & 308 & 139 & 120 & 397 & 539 & 139 & 120 & 397 \\
100 & 1.0 & 2.0 & 0.9 & 84 & 112 & 82 & 363 & 89 & 112 & 82 & 363 & 282 & 112 & 82 & 363 \\
200 & 1.0 & 2.0 & 0.9 & 12 & 61 & 43 & 328 & 15 & 61 & 43 & 328 & 80 & 61 & 43 & 328 \\ \hline

30 & 1.0 & 2.0 & 0.1 & 965 & 25 & 28 & 333 & 973 & 25 & 28 & 333 & 984 & 25 & 28 & 333 \\
50 & 1.0 & 2.0 & 0.1 & 954 & 16 & 32 & 352 & 965 & 16 & 32 & 352 & 997 & 16 & 32 & 352 \\
100 & 1.0 & 2.0 & 0.1 & 954 & 25 & 64 & 364 & 958 & 25 & 64 & 364 & 992 & 25 & 64 & 364 \\
200 & 1.0 & 2.0 & 0.1 & 921 & 36 & 64 & 387 & 927 & 36 & 64 & 387 & 994 & 36 & 64 & 387 \\ \hline

\end{tabular}
\end{center}
\end{table}

\begin{table}
\begin{center}
\caption{The number of cases that the criteria value of fitted distribution is smaller than the criteria value of fitted Gompertz distribution}\label{tab.sim4}
\begin{tabular}{|c|ccc|cccc|cccc|cccc|} \hline
 & \multicolumn{3}{|c|}{Parameter} & \multicolumn{4}{|c|}{AIC} & \multicolumn{4}{|c|}{AICC} & \multicolumn{4}{|c|}{BIC} \\ \hline
$n$ & $\beta$ & $\gamma$ & $\theta$ & GG & GP & GB & GL & GG & GP & GB & GL & GG & GP & GB & GL \\ \hline

30 & 0.5 & 2.0 & 0.9 & 54 & 13 & 6 & 231 & 34 & 7 & 5 & 180 & 16 & 5 & 1 & 92 \\ 
50 & 0.5 & 2.0 & 0.9 & 71 & 39 & 21 & 173 & 58 & 25 & 14 & 156 & 20 & 3 & 1 & 68 \\ 
100 & 0.5 & 2.0 & 0.9 & 59 & 39 & 34 & 161 & 53 & 37 & 32 & 150 & 12 & 3 & 4 & 47 \\ 
200 & 0.5 & 2.0 & 0.9 & 68 & 68 & 69 & 145 & 65 & 64 & 62 & 141 & 4 & 3 & 1 & 25 \\ \hline

30 & 0.5 & 2.0 & 0.1 & 47 & 8 & 1 & 108 & 28 & 3 & 0 & 81 & 12 & 0 & 0 & 44 \\ 
50 & 0.5 & 2.0 & 0.1 & 57 & 6 & 3 & 122 & 49 & 5 & 1 & 110 & 13 & 0 & 0 & 42 \\ 
100 & 0.5 & 2.0 & 0.1 & 97 & 13 & 21 & 116 & 86 & 10 & 15 & 115 & 13 & 0 & 1 & 25 \\ 
200 & 0.5 & 2.0 & 0.1 & 129 & 35 & 31 & 95 & 122 & 35 & 30 & 95 & 17 & 1 & 0 & 20 \\ \hline

30 & 1.0 & 2.0 & 0.9 & 52 & 7 & 3 & 48 & 39 & 0 & 0 & 43 & 18 & 0 & 0 & 29 \\ 
50 & 1.0 & 2.0 & 0.9 & 53 & 20 & 10 & 37 & 44 & 10 & 5 & 34 & 13 & 0 & 0 & 17 \\ 
100 & 1.0 & 2.0 & 0.9 & 74 & 30 & 40 & 25 & 68 & 24 & 32 & 24 & 10 & 1 & 0 & 8 \\ 
200 & 1.0 & 2.0 & 0.9 & 76 & 42 & 55 & 7 & 71 & 37 & 52 & 7 & 5 & 2 & 1 & 3 \\ \hline

30 & 1.0 & 2.0 & 0.1 & 34 & 0 & 1 & 93 & 20 & 0 & 0 & 79 & 5 & 0 & 0 & 51 \\ 
50 & 1.0 & 2.0 & 0.1 & 47 & 3 & 0 & 88 & 42 & 2 & 0 & 73 & 15 & 0 & 0 & 26 \\ 
100 & 1.0 & 2.0 & 0.1 & 65 & 0 & 1 & 100 & 61 & 0 & 1 & 89 & 8 & 0 & 0 & 21 \\ 
200 & 1.0 & 2.0 & 0.1 & 86 & 4 & 4 & 91 & 80 & 4 & 4 & 86 & 13 & 0 & 0 & 8 \\ \hline

\end{tabular}
\end{center}
\end{table}

\newpage

\noindent i) convergence has been achieved in all cases and this emphasizes the
numerical stability of the EM-algorithm,
ii) the differences between the average estimates and the true values are almost small,
iii) the MSE,  variance of estimations, and variance based on Fisher information matrices  decrease when the sample size increases,
iv) the coverage probabilities of the confidence intervals for the parameters based on asymptotic approach are satisfactory and especially are close to the confidence coefficient, $0.95$ when the
sample size large.

In the second simulation, we  consider the GG distribution and generate $N=1000$ random samples with different set of parameters  for $n=30, 50, 100, 200$ and censoring percentage $p=0.3$. Using the censored likelihood function in \eqref{eq.Lce}, we obtained the MLE of parameters as well as the Fisher information matrix. Then, the AE, MSE, VS, EF matrices, and CP of the 95\% confidence interval are computed.   The results are given in Table \ref{tab.sim2}, and conclusions are similar to the first simulation. Only,  the variances based the average value of Fisher information matrix are very large.

At the end,   we performed a simulation study directed to model misspecification. We  consider the GG distribution and generate $N=1000$ random samples with different set of parameters  for $n=30, 50, 100, 200$. In each sample, considered distributions (Gompertz, GG, GP, GB with $m=5$, GL)  were fitted.
 The MLE of parameters, and then AIC (Akaike Information Criterion), AICC (AIC with correction)  and BIC (Bayesian Information Criterion) are calculated.
Using each criteria (AIC, AICC, BIC), the preferred distribution is the one with the smaller value. We computed the cases that the Gompertz, GP, GB, and GL distributions were preferred with respect to GG distribution. The results are given in Table \ref{tab.sim3} and we can conclude that when the real model is GG distribution 
i) it is usually possible to discriminate between GG distribution and three subclasses of GPS (GP, GB and GL), ii) when the sample size is large and the parameter $\theta$ far from away from 0, we can discriminate between GG distribution and Gompertz distribution. In fact, when $\theta$ is close to 0, the GPS model becomes to the Gompertz distribution (See Proposition 2).

Also, we study model misspecification using generating random sample from the Gompertz distribution and  computed the cases that the GG, GP, GB, and GL distributions were preferred with respect to Gompertz distribution. The results are given in Table \ref{tab.sim4} and we can conclude that
 it is usually possible to discriminate between Gompertz distribution and the subclasses of GPS (GG, GP, GB and GL) when the real model is Gompertz distribution.

\section{A numerical example}
\label{sec.ex}

In this Section, we consider a real data set and  fit the Gompertz, GG, GP, GB (with $m=5$), and GL distributions. The data obtained from
\cite{sm-na-87}
represent the strengths of 1.5 cm glass fibres, measured at the National Physical Laboratory, England. This data is also studied by
\cite{ba-sa-co-10}:

            0.55, 0.93, 1.25, 1.36, 1.49, 1.52, 1.58, 1.61, 1.64, 1.68, 1.73, 1.81, 2.00, 0.74, 1.04, 1.27,

            1.39, 1.49, 1.53, 1.59, 1.61, 1.66, 1.68, 1.76, 1.82, 2.01, 0.77, 1.11, 1.28, 1.42, 1.50, 1.54,

            1.60, 1.62, 1.66, 1.69, 1.76, 1.84, 2.24, 0.81, 1.13, 1.29, 1.48, 1.50, 1.55, 1.61, 1.62, 1.66,

            1.70, 1.77, 1.84, 0.84, 1.24, 1.30, 1.48, 1.51, 1.55, 1.61, 1.63, 1.67, 1.70, 1.78, 1.89.

The MLE's of the parameters (with standard errors) for the distributions are given in Table \ref{table.EX}. Note that the MLE of $\theta$ for GL distribution is very close to 0. Therefore, the MLE's of the GL and Gompertz distributions are very close. In this table, we also consider the estimation of parameters for three parameters Weibull distribution (TW) with the following density function which is considered by \cite{sm-na-87}
$$
f_{TW}(x)=\lambda \gamma (x-\theta)^{\gamma-1} \exp(-\lambda(x-\theta)^\gamma), \ \ \ \ x>\theta,  \ \ \lambda>0, \ \  \gamma>0, \ \ \theta\in R.
$$
We give the estimation of $\beta=\log(\lambda)$ for the TW distribution because the MLE of $\lambda$ is very close to 0.

To test the goodness-of-fit of the distributions, we calculated the maximized log-likelihood, the Kolmogorov-Smirnov (K-S) statistic
with its respective p-value, the AIC, 
 AICC 
 and BIC 
 for the six distributions. The results show that the GG distribution yields the best fit among the TW,  GP, GB, GL, and Gompertz distributions. Also, the GG, GP, and GB distribution are better than Gompertz and TW distributions. The plots of the densities (together with the data histogram)  and cumulative distribution functions in Figure \ref{plot.EX} confirm this conclusion. Also, Plots of the QQ-plot of fitted distributions are given in  Figure \ref{fig.qex}.

\begin{figure}
\vspace{-0.0cm}
\centering
\includegraphics[scale=0.53]{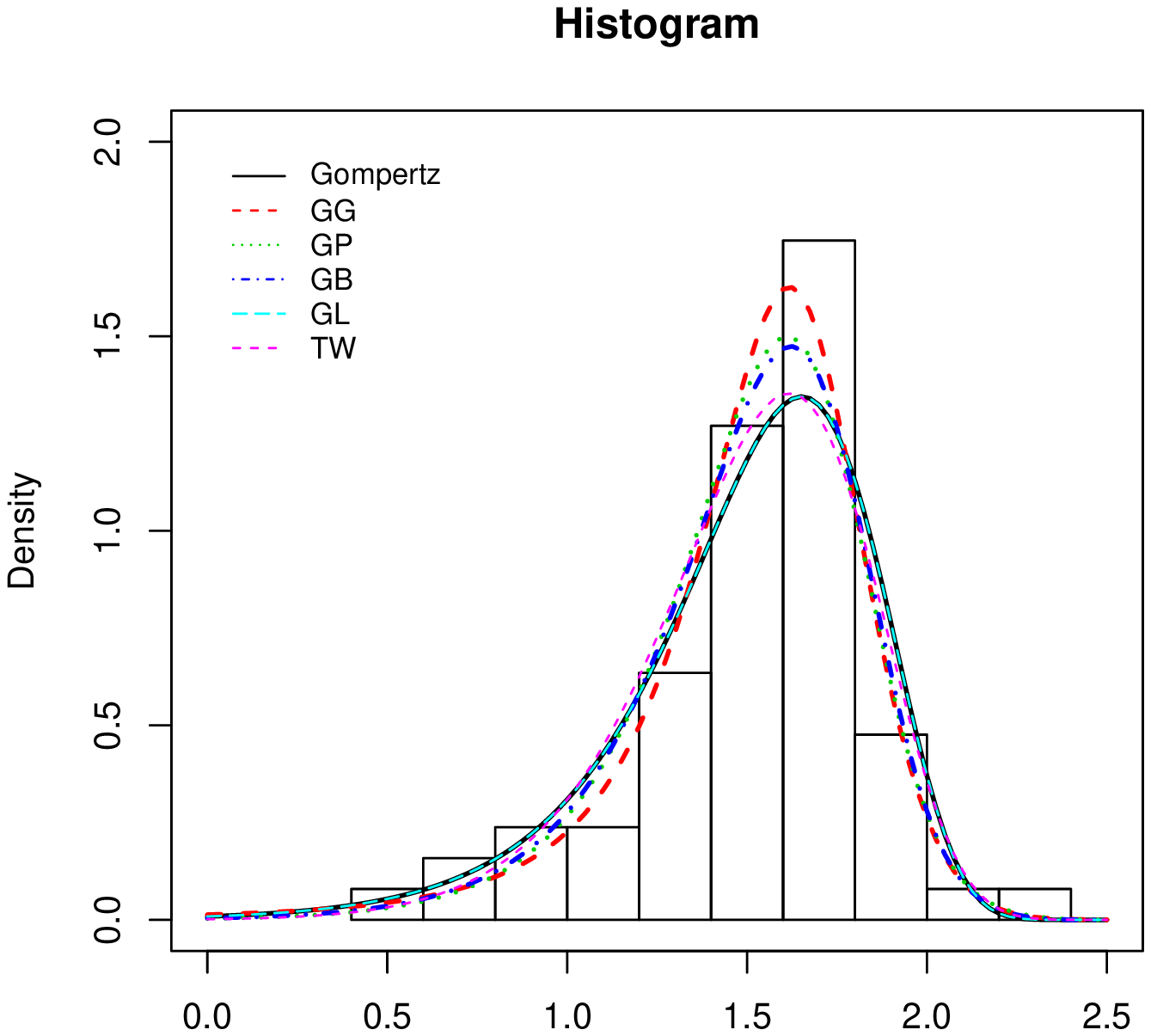}
\includegraphics[scale=0.53]{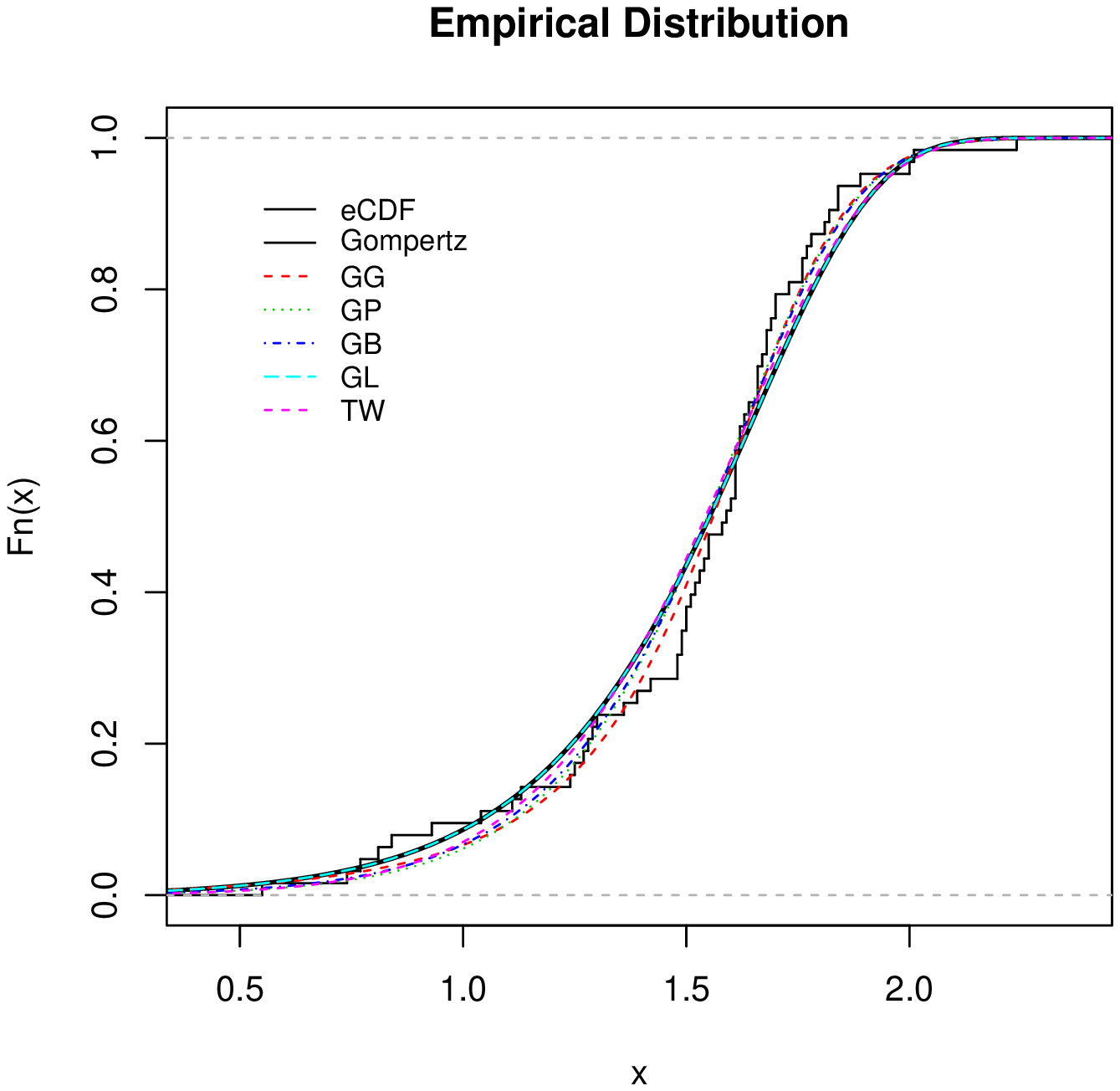}
\vspace{-0.8cm}
\caption[]{Plots (density and distribution) of  fitted  Gompertz, GG, GP, GB, GL and TW  distributions for the data set.}\label{plot.EX}

\end{figure}

\begin{figure}
\centering
\includegraphics[scale=0.35]{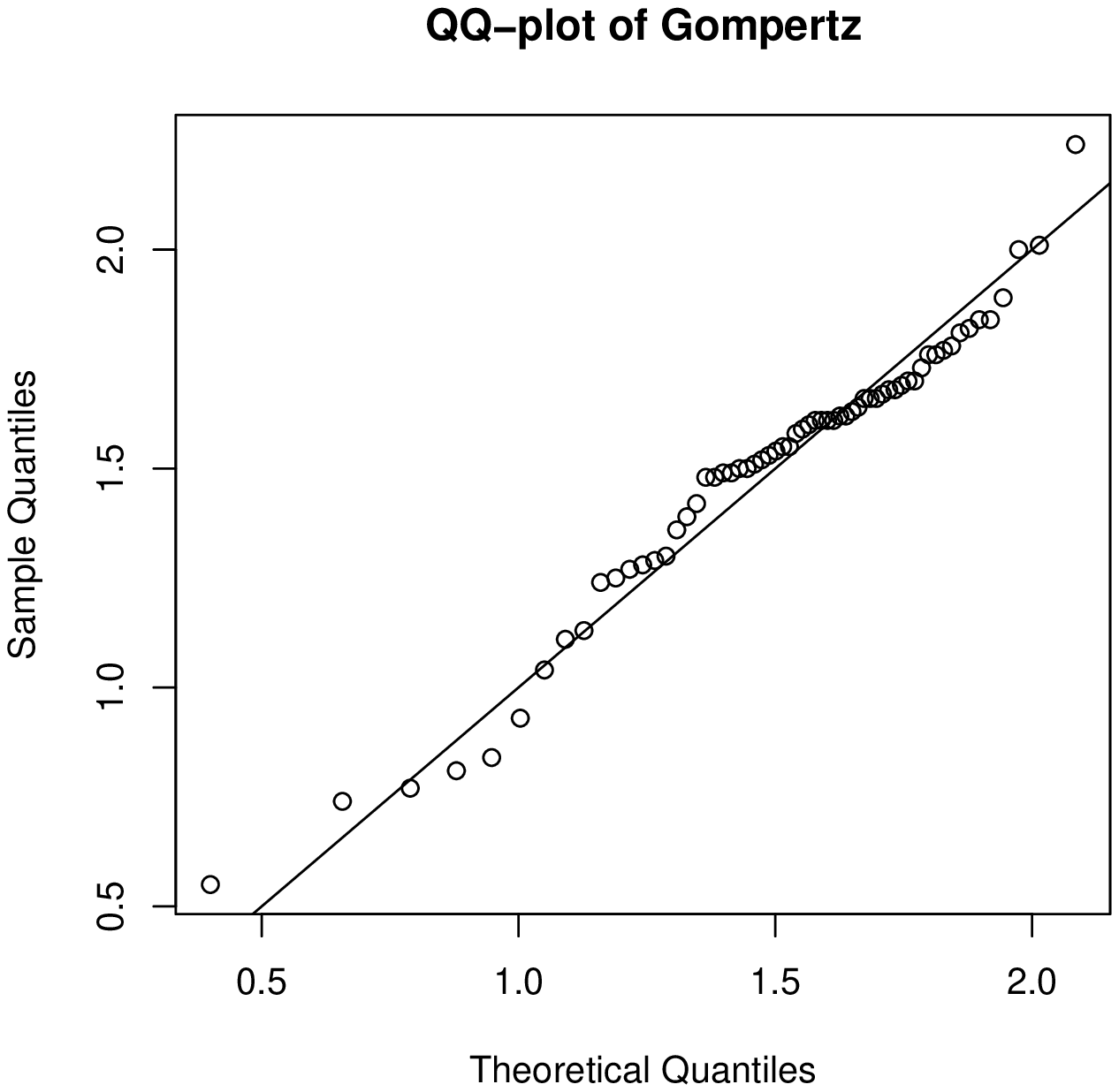}
\includegraphics[scale=0.35]{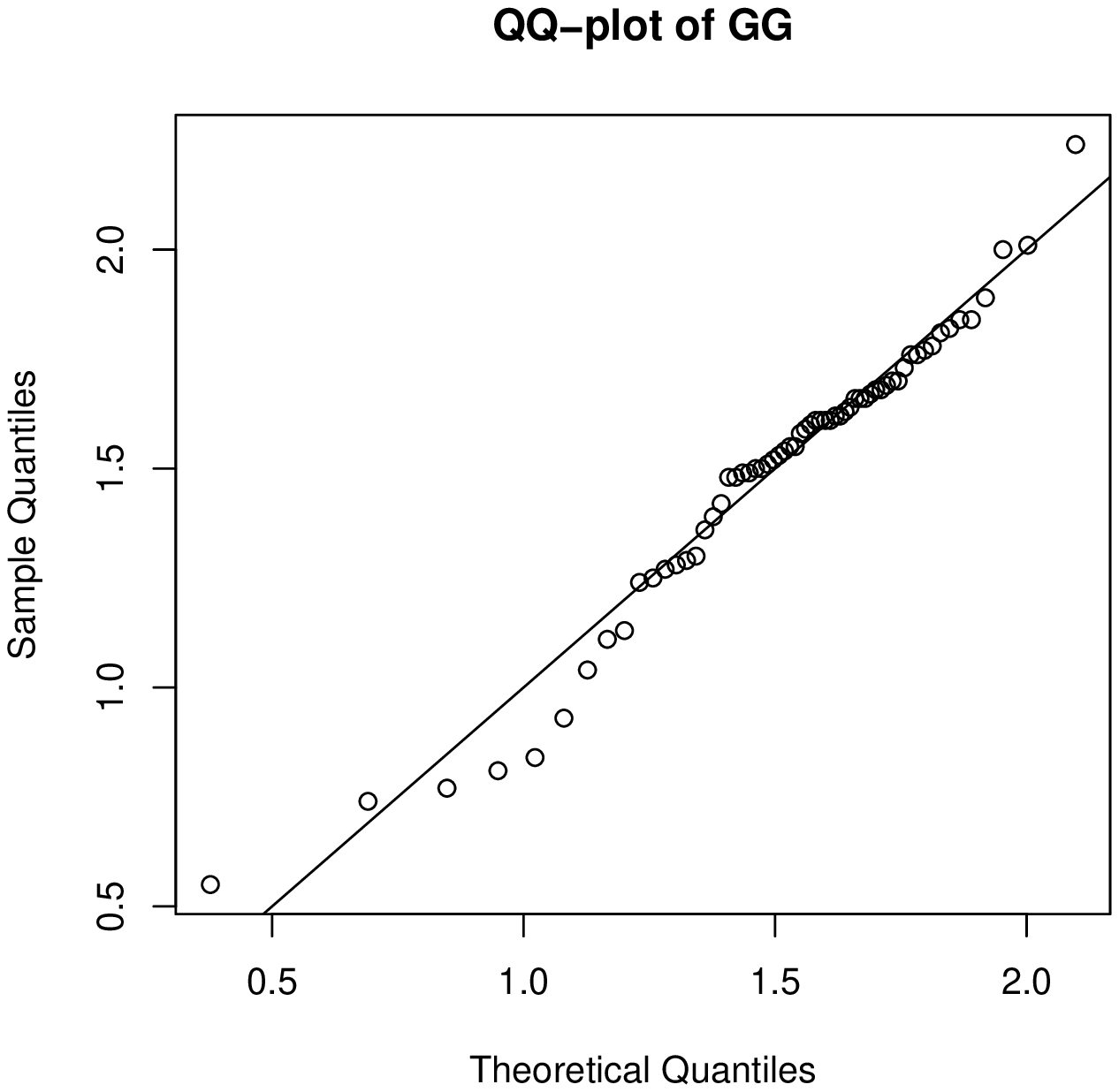}
\includegraphics[scale=0.35]{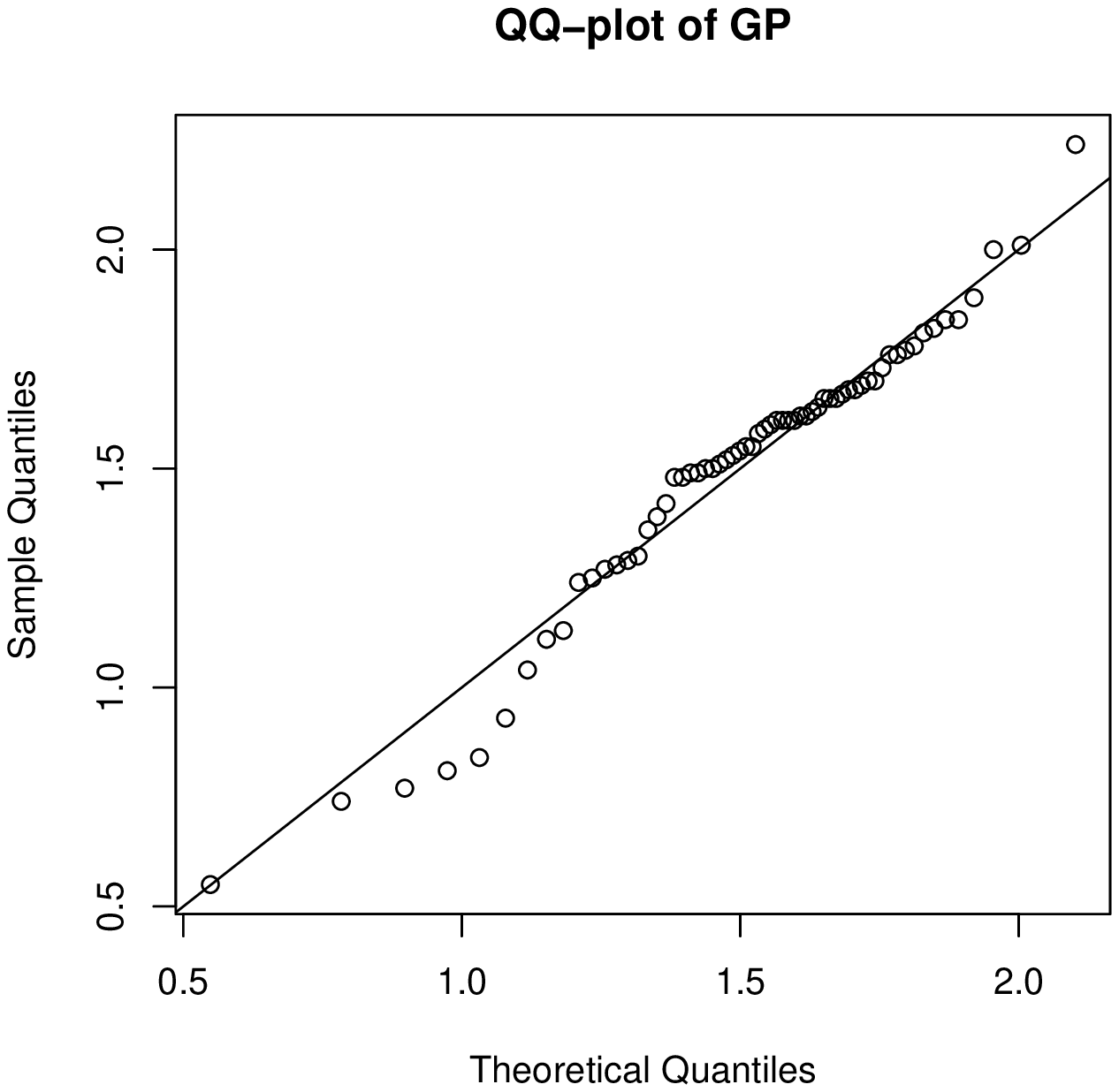}
\includegraphics[scale=0.35]{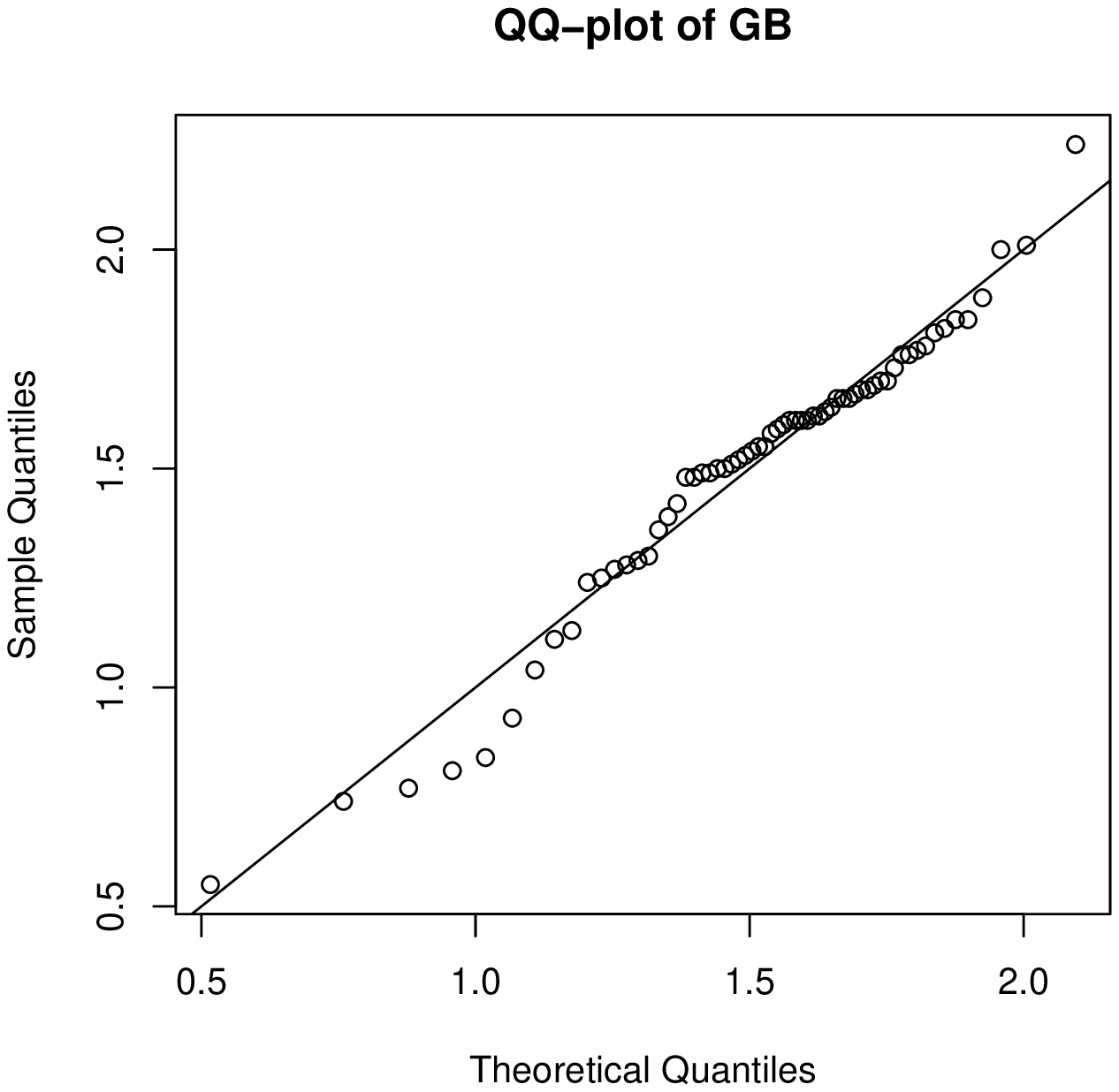}
\includegraphics[scale=0.35]{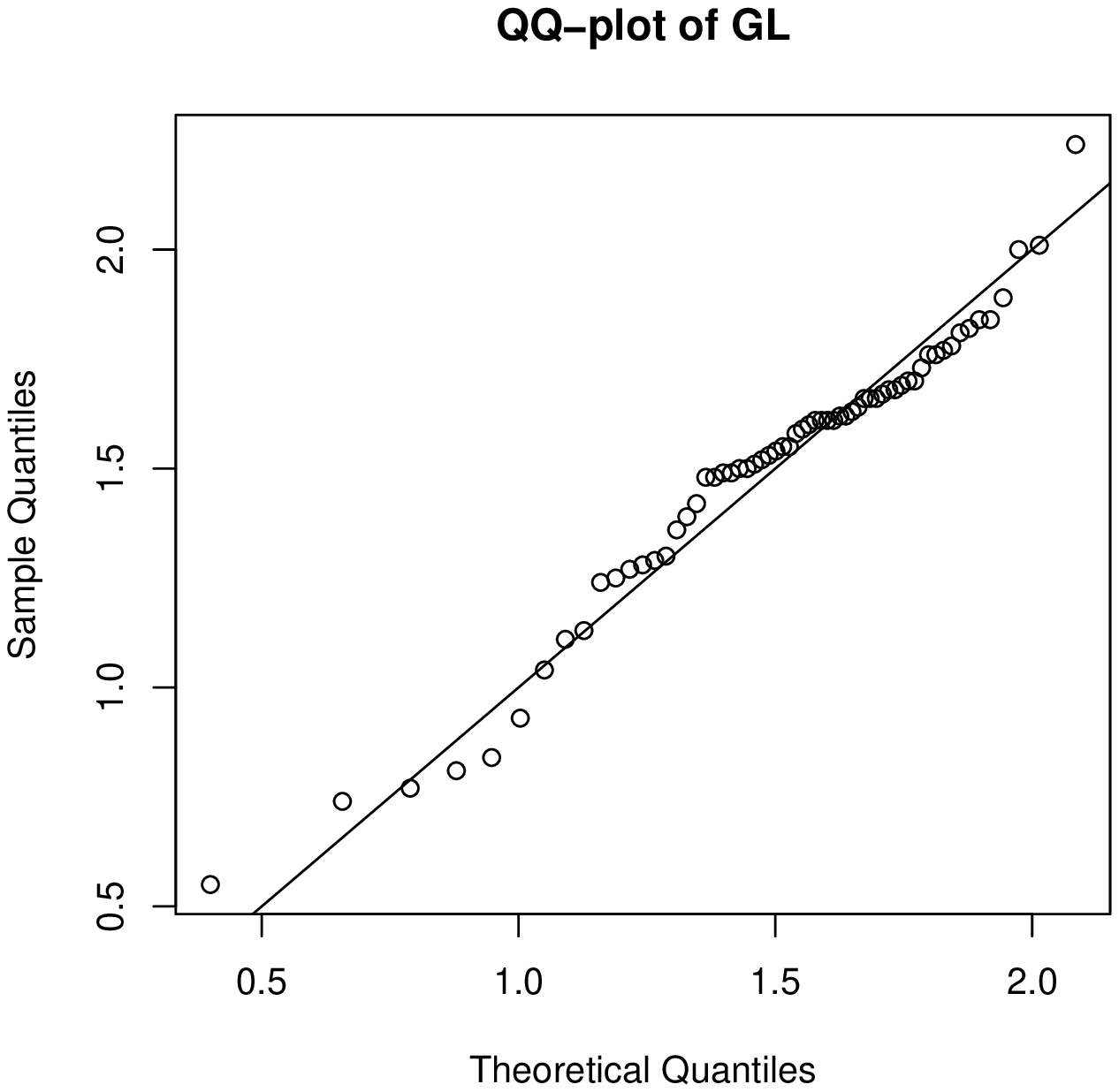}
\includegraphics[scale=0.35]{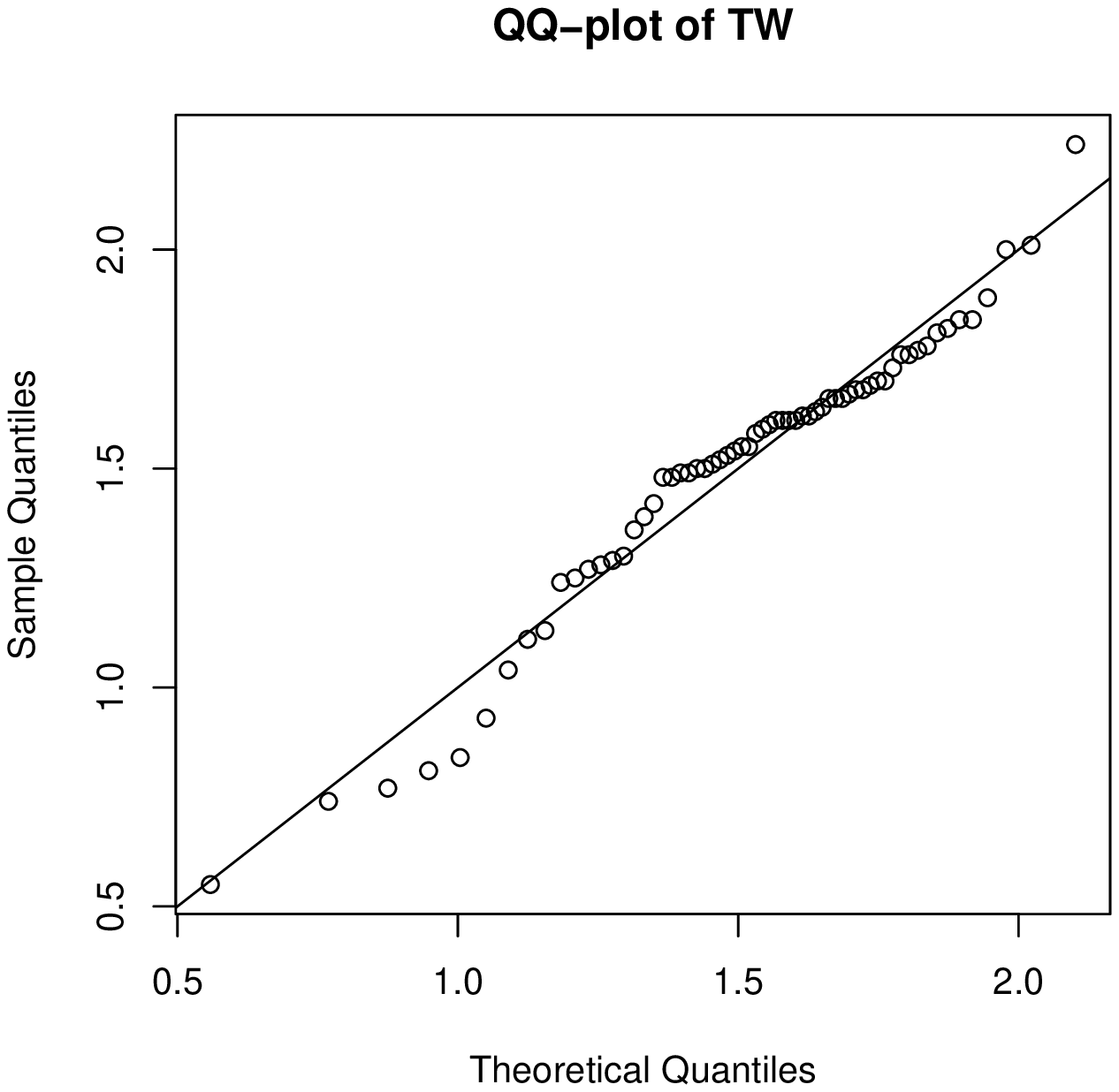}
\vspace{-0.4cm}
\caption[]{\label{fig.qex} QQ plots of the Gompertz, GG, GP, GB, GL and TW models.}
\end{figure}

\begin{table}[h]
\begin{center}
{\small
\caption{Parameter estimates (with std.), K-S statistic,
\textit{p}-value, AIC, AICC and BIC for the data set.}\label{table.EX}
\begin{tabular}{|l|cccccc|} \hline
Dis.           & Gompertz        & GG & GP & GB & GL & TW\\ \hline
$\hat{\beta }$ (std.)  & 0.0088 (0.001)  & 0.8023   (0.772) & 0.0006 (0.001) & 0.0013 (0.001) & 0.0088 (0.011)& -13.9192(-----) \\
$\hat{\gamma }$ (std.) & 3.6474 (0.069)  & 1.3082   (0.586) & 4.4611 (0.566) & 4.2406 (0.404) & 3.6474 (0.593)& 11.8558 (9.795)\\
$\hat{\theta }$ (std.) & ---             & -58.8912 (91.83) & 5.5965 (3.224) & 1.8740 (1.268) & 0.0001 (1.310)& -1.5934 (2.637) \\ \hline
$-\log(L)$             & 14.8081         & 12.2288 & 12.8702 & 13.0212 & 14.8067 & 14.2853 \\
K-S                    & 0.1268          & 0.0962 & 0.1207 & 0.1217 & 0.1267 & 0.0001\\
p-value                & 0.2636          & 0.6040 & 0.3177 & 0.3085 & 0.2636 & 0.9869\\
AIC                    & 33.6162         & 30.4576 & 31.7404 & 32.0424 & 35.6134 & 34.5712 \\
AICC                   & 33.8162         & 30.8644 & 32.1472 & 32.4491 & 36.0202 & 34.9774 \\
BIC                    & 37.9025         & 36.8870 & 38.1698 & 38.4718 & 42.0428 & 40.9999 \\ \hline
\end{tabular}
}
\end{center}
\end{table}

\newpage

\section*{ Appendix }

\subsection*{A.}
Here, we give the proof of Theorems \ref{th.lb}, \ref{th.lg}, and \ref{th.lt}. Consider $p_i=\exp(-\frac{1}{\gamma}(e^{-\beta x_i}-1))$.

\subsubsection*{A.1}
\label{Ap.g1}
Let $w_{1}\left(\beta;\gamma,\theta ,{\boldsymbol x}\right)=\sum^n_{i=1}\frac{\theta p^{\beta}_i \log(p_i) C''(\theta p^{\beta}_i)}{C'(\theta p^{\beta}_i)}=\frac{\partial}{\partial \beta}\sum^n_{i=1}\log (C'(\theta p^{\beta}_i))$. Then, $w_{1}\left(\beta;\gamma,\theta ,{\boldsymbol x}\right)$ is strictly increasing in $\beta $  and
\[{\mathop{\lim }_{\beta \rightarrow 0^+} w_{1}}\left(\beta;\gamma,\theta ,{\boldsymbol x}\right)=\frac{\theta C''(\theta)}{C'(\theta)}\sum^n_{i=1}\log\left(p_i\right),\ \ \ \ \ \ \ \ \ \ {\mathop{\lim}_{\beta \rightarrow \infty}} w_{1}\left(\beta;\gamma,\theta ,{\boldsymbol x}\right)=0.\]
Therefore,
\[{\mathop{\lim }_{\beta \rightarrow 0^+} {{\rm g}}_{1}}\left(\beta;\gamma,\theta ,{\boldsymbol x}\right)=\infty ,\ \ \ \ \ \ \ \ {\mathop{\lim }_{\beta \rightarrow \infty} {\rm g}_1}\left(\beta;\gamma,\theta ,{\boldsymbol x}\right)=\sum^n_{i{\rm =1}}\log\left(p_i\right)<0.\]
Also,
\[{\rm g}_1\left(\beta;\gamma,\theta ,{\boldsymbol x}\right)<\frac{n}{\beta}+\sum^n_{i=1}\log\left(p_i\right), \ \ \ \ \ \ {\rm g}_1\left(\beta;\gamma,\theta ,{\boldsymbol x}\right)>\frac{n}{\beta }+\left(\frac{\theta C''\left(\theta \right)}{C'\left(\theta\right)}+1\right)\sum^n_{i=1}\log\left(p_i\right).\]
Therefore,  ${\rm g}_1\left(\beta;\gamma,\theta ,{\boldsymbol x}\right)<0$ when $\frac{n}{\beta }+\sum^n_{i=1}\log\left(p_i\right)<0$, and ${\rm g}_1\left(\beta;\gamma,\theta ,{\boldsymbol x}\right)>0$ when
$\frac{n}{\beta }+\left(\frac{\theta C''(\theta )}{C'(\theta)}+1\right)\sum^n_{i=1}\log\left(p_i\right)>0$. Hence, the proof is completed.

\subsubsection*{A.2}
\label{Ap.g2}
It can be easily shown that
\[ {\mathop{\lim }_{\gamma\rightarrow 0^+} {\rm g_{2}}(\gamma;\beta,\theta,{\boldsymbol x} )}=n\bar{x}-\frac{\beta }{2}\sum^n_{i=1}{x^2_i\left(1+\frac{\theta e^{-\beta x_i}C''\left(\theta e^{-\beta x_i}\right)}{C'\left(\theta e^{-\beta x_i}\right)}\right)}, \; \ \ \ \
 {\mathop{\lim }_{\gamma\rightarrow +\infty} {\rm g_{2}}(\gamma;\beta,\theta,{\boldsymbol x} )}=-\infty. \]
Since the limits have different signs, the equation ${\rm g_{2}}(\gamma;\beta,\theta,{\boldsymbol x})=0$ has at least one root with respect to $\gamma$ for fixed values
$\beta$ and $\theta$. The proof is completed.

\subsubsection*{A.3}
\label{Ap.g3}
(i) For GP, it is clear that
 \[ {\mathop{\lim }_{\theta \rightarrow 0^+}} {\rm g}_3\left(\theta;\beta,\gamma ,{\boldsymbol x}\right)=\sum\limits_{i=1}^{n}t_{i}-\frac{n}{2},
 \qquad \qquad {\mathop{\lim }_{\theta \rightarrow \infty} }{\rm g}_3\left(\theta;\beta,\gamma ,{\boldsymbol x}\right)=-\infty.\]  Therefore, the equation ${\rm g}_3\left(\theta;\beta,\gamma ,{\boldsymbol x}\right)=0$ has at least one root for $\theta>0$, if $\sum\limits_{i=1}^{n}t_{i}-\frac{n}{2}>0$  or $ \sum\limits_{i=1}^{n}t_{i}>\frac{n}{2}$.\\
(ii) For GG, it is clear that
 \[ {\mathop{\lim }_{\theta \rightarrow \infty} }{\rm g}_3\left(\theta;\beta,\gamma ,{\boldsymbol x}\right)=-\infty, \qquad \qquad
{\mathop{\lim }_{\theta \rightarrow 0^+} }{\rm g}_3\left(\theta;\beta,\gamma ,{\boldsymbol x}\right)=-n+2\sum\limits_{i=1}^{n}t_{i}.\] Therefore, the
equation  ${\rm g}_3\left(\theta;\beta,\gamma ,{\boldsymbol x}\right)=0$ has at least one root for $0<\theta<1$, if  $-n+2\sum\limits_{i=1}^{n}t_{i} >0$ or $\sum\limits_{i=1}^{n}t_{i} >\frac{n}{2}$. \\
(iii) For GL, it is clear that
\[ {\mathop{\lim }_{\theta \rightarrow 0^+} }{\rm g}_3\left(\theta;\beta,\gamma ,{\boldsymbol x}\right)=\sum\limits_{i=1}^{n}t_{i}-\frac{n}{2}, \qquad\qquad
 {\mathop{\lim }_{\theta \rightarrow 1^-} }{\rm g}_3\left(\theta;\beta,\gamma ,{\boldsymbol x}\right)=-\infty.\]
 Therefore, the equation ${\rm g}_3\left(\theta;\beta,\gamma ,{\boldsymbol x}\right)=0$ has at least one root for $0<\theta<1$, if $\sum\limits_{i=1}^{n}t_{i}-\frac{n}{2}>0$  or $ \sum\limits_{i=1}^{n}t_{i}>\frac{n}{2}$.\\
(iv) It is clear that
\[ {\mathop{\lim }_{p \rightarrow 0^+} }{\rm g}_3\left(p;\beta,\gamma ,{\boldsymbol x}\right)=\sum\limits_{i=1}^{n}t_{i}(m-1)-\frac{n(m-1)}{2}, \quad \quad
 {\mathop{\lim }_{p \rightarrow 1^{-}} }{\rm g}_3\left(p;\beta,\gamma ,{\boldsymbol x}\right)=\sum\limits_{i=1}^{n}\frac{-m+1+m t_{i}}{t_{i}}. \]
   Therefore, the equation ${\rm g}_3\left(p;\beta,\gamma ,{\boldsymbol x}\right)=0$ has at least one root for $0<p<1$, if $\sum\limits_{i=1}^{n}t_{i}(m-1)-\frac{n(m-1)}{2}>0$  and  $\sum\limits_{i=1}^{n}\frac{-m+1+m t_{i}}{t_{i}}<0$ or $\sum\limits_{i=1}^{n}t_{i}>\frac{n}{2}$ and $\sum\limits_{i=1}^{n}t_{i}^{-1}>\frac{nm}{1-m}$.

\subsection*{B.}
\label{Ap.IF}
Consider
\begin{eqnarray*}
&&t_{i}=e^{-\frac{\beta}{\gamma}(e^{\gamma x_{i}}-1)}, \qquad b_i=\frac{\partial t_{i}}{\partial \gamma}=t_i d_i,
\qquad d_i=\frac{\partial \log(t_{i})}{\partial \gamma}=\frac{1 }{\gamma}(-\log(t_i)+\gamma x_i\log(t_i)-\beta x_i),\\
&&q_i=\frac{\partial d_{i}}{\partial \gamma}=d_i( x_i-\frac{2}{\gamma})+\frac{x_i}{\gamma}\log(t_i), \qquad A_{2i}=\frac{C''(\theta t_{i})}{C'(\theta t_{i})},
\qquad A_{3i}=\frac{C'''(\theta t_{i})}{C'(\theta t_{i})}.
\end{eqnarray*}
 Then, the elements of $3\times 3$  observed information matrix $I_{n}(\Theta)$  are given by
\begin{eqnarray*}
 I_{\beta\beta}&=& \frac{\partial^{2}l_{n}}{\partial \beta^{2}}=-\frac{n}{\beta^{2}}
 +\frac{\theta}{\beta^{2}}
 \sum\limits_{i=1}^{n}t_i(\log(t_i))^{2}A_{2i}+\frac{\theta^2}{\beta^{2}}
 \sum\limits_{i=1}^{n}t_{i}^{2}(\log(t_i))^{2}A_{3i}
 - \frac{\theta^2}{\beta^{2}}
 \sum\limits_{i=1}^{n}t_{i}^{2}(\log(t_i))^{2}A_{2i}^2,\\
  I_{\beta\gamma}&=& \frac{\partial^{2}l_{n}}{\partial \beta\partial\gamma}=\frac{1}{\beta}\sum\limits_{i=1}^{n}d_i
  +\frac{\theta}{\beta}\sum\limits_{i=1}^{n}b_i\log(t_i)A_{2i}
  +\frac{\theta}{\beta}\sum\limits_{i=1}^{n}b_iA_{2i}\\
  &&+\frac{\theta^2}{\beta}\sum\limits_{i=1}^{n}b_it_i\log(t_i)A_{3i} -\frac{\theta^2}{\beta}\sum\limits_{i=1}^{n} b_it_i\log(t_i)A_{2i}^2,\\
    I_{\beta\theta}&=& \frac{\partial^{2}l_{n}}{\partial \beta \partial \theta}=\frac{1}{\beta}\sum\limits_{i=1}^{n}t_{i}\log(t_i)A_{2i}
    +\frac{\theta}{\beta}\sum\limits_{i=1}^{n}t_{i}^{2}\log(t_i)
  A_{3i}
  -\frac{\theta}{\beta}\sum\limits_{i=1}^{n}t_{i}^{2}\log(t_i)A_{2i}^2,\\
    I_{\gamma\gamma}&=& \frac{\partial^{2}l_{n}}{\partial \gamma^{2}}=
  \sum\limits_{i=1}^{n}q_i+\theta\sum\limits_{i=1}^{n}(b_i d_i+t_i q_i)A_{2i}
  +\theta^2\sum\limits_{i=1}^{n}b_i^2A_{3i}
  -\theta^2\sum\limits_{i=1}^{n}b_i^2A_{2i}^2,\\
 I_{\gamma\theta}&=& \frac{\partial^{2}l_{n}}{\partial \theta \partial \gamma }=\sum\limits_{i=1}^{n}b_i A_{2i}+\theta\sum\limits_{i=1}^{n}t_ib_iA_{3i}
 -\theta\sum\limits_{i=1}^{n}t_ib_iA_{2i}^2
,\\
I_{\theta\theta}&=& \frac{\partial^{2}l_{n}}{\partial \theta^{2}}=-\frac{n}{\theta^{2}}+\sum\limits_{i=1}^{n}t_{i}^{2}
A_{3i}-\sum\limits_{i=1}^{n}t_{i}^{2}A_{2i}^2-\frac{nC''(\theta)}{C(\theta)}+\frac{n(C'(\theta))^{2}}{(C(\theta))^{2}}.
 \end{eqnarray*}

\section*{Acknowledgements}
The authors would like to thank the referees for their comments and suggestions which have contributed to
improving the manuscript.


\begin{thebibliography}{}

\bibitem[\protect\astroncite{Adamidis et~al.}{2005}]{ad-di-05}
Adamidis, K., Dimitrakopoulou, T., and Loukas, S. (2005).
\newblock On an extension of the exponential--geometric distribution.
\newblock {\em Statistics and Probability Letters}, 73(3):259--269.

\bibitem[\protect\astroncite{Adamidis and Loukas}{1998}]{ad-lo-98}
Adamidis, K. and Loukas, S. (1998).
\newblock A lifetime distribution with decreasing failure rate.
\newblock {\em Statistics and Probability Letters}, 39(1):35--42.

\bibitem[\protect\astroncite{Barreto-Souza et~al.}{2010}]{ba-sa-co-10}
Barreto-Souza, W., Santos, A. H.~S., and Cordeiro, G.~M. (2010).
\newblock The beta generalized exponential distribution.
\newblock {\em Journal of Statistical Computation and Simulation},
  80(2):159--172.

\bibitem[\protect\astroncite{Cancho et~al.}{2011}]{ca-lo-fr-ba-11}
Cancho, V.~G., Louzada-Neto, F., and Barriga, G. D.~C. (2011).
\newblock The {P}oisson--exponential lifetime distribution.
\newblock {\em Computational Statistics and Data Analysis}, 55(1):677--686.

\bibitem[\protect\astroncite{Casella and Berger}{2001}]{ca-be-01}
Casella, G. and Berger, R. (2001).
\newblock {\em Statistical Inference}.
\newblock Duxbury, Pacific Grove, California, USA.

\bibitem[\protect\astroncite{Chahkandi and Ganjali}{2009}]{ch-ga-09}
Chahkandi, M. and Ganjali, M. (2009).
\newblock On some lifetime distributions with decreasing failure rate.
\newblock {\em Computational Statistics and Data Analysis}, 53(12):4433--4440.

\bibitem[\protect\astroncite{Dempster et~al.}{1977}]{de-la-ru-77}
Dempster, A.~P., Laird, N.~M., and Rubin, D.~B. (1977).
\newblock Maximum likelihood from incomplete data via the {EM} algorithm.
\newblock {\em Journal of the Royal Statistical Society. Series B
  (Methodological)}, 39(1):1--38.

\bibitem[\protect\astroncite{Flores et~al.}{2011}]{fl-bo-ca-11}
Flores, J., Borges, P., Cancho, V.~G., and Louzada, F. (2011).
\newblock The complementary exponential power series distribution.
\newblock {\em Brazilian Journal of Probability and Statistics}, (accepted).

\bibitem[\protect\astroncite{Johnson et~al.}{2005}]{jo-ke-ko-05}
Johnson, N.~L., Kemp, A.~W., and Kotz, S. (2005).
\newblock {\em Univariate discrete distributions}.
\newblock Wiley-Interscience, third edition.

\bibitem[\protect\astroncite{Ku\c{s}}{2007}]{kus-07}
Ku\c{s}, C. (2007).
\newblock A new lifetime distribution.
\newblock {\em Computational Statistics and Data Analysis}, 51(9):4497--4509.

\bibitem[\protect\astroncite{Lenart}{2012}]{lenart-12}
Lenart, A. (2012).
\newblock The moments of the {G}ompertz distribution and maximum likelihood
  estimation of its parameters.
\newblock {\em Scandinavian Actuarial Journal}, 10.1080/03461238.2012.687697.

\bibitem[\protect\astroncite{Louis}{1982}]{lou-82}
Louis, T.~A. (1982).
\newblock Finding the observed information matrix when using the {EM}
  algorithm.
\newblock {\em Journal of the Royal Statistical Society. Series B
  (Methodological)}, 44(2):226--233.

\bibitem[\protect\astroncite{Louzada et~al.}{2011}]{lo-ro-ca-11}
Louzada, F., Roman, M., and Cancho, V.~G. (2011).
\newblock The complementary exponential geometric distribution: Model,
  properties, and a comparison with its counterpart.
\newblock {\em Computational Statistics and Data Analysis}, 55(8):2516--2524.

\bibitem[\protect\astroncite{Mahmoudi and Jafari}{2012}]{ma-ja-12}
Mahmoudi, E. and Jafari, A.~A. (2012).
\newblock Generalized exponential--power series distributions.
\newblock {\em Computational Statistics and Data Analysis}, 56(12):4047--4066.

\bibitem[\protect\astroncite{Marshall and Olkin}{1997}]{ma-ol-97}
Marshall, A.~W. and Olkin, I. (1997).
\newblock A new method for adding a parameter to a family of distributions with
  application to the exponential and {W}eibull families.
\newblock {\em Biometrika}, 84(3):641--652.

\bibitem[\protect\astroncite{Morais and Barreto-Souza}{2011}]{mo-ba-11}
Morais, A.~L. and Barreto-Souza, W. (2011).
\newblock A compound class of {W}eibull and power series distributions.
\newblock {\em Computational Statistics and Data Analysis}, 55(3):1410--1425.

\bibitem[\protect\astroncite{Noack}{1950}]{noack-50}
Noack, A. (1950).
\newblock A class of random variables with discrete distributions.
\newblock {\em The Annals of Mathematical Statistics}, 21(1):127--132.

\bibitem[\protect\astroncite{Shannon}{1948}]{shan-48}
Shannon, C. (1948).
\newblock A mathematical theory of communication.
\newblock {\em Bell System Technical Journal}, 27:379--432.

\bibitem[\protect\astroncite{Smith and Naylor}{1987}]{sm-na-87}
Smith, R.~L. and Naylor, J.~C. (1987).
\newblock A comparison of maximum likelihood and bayesian estimators for the
  three-parameter {W}eibull distribution.
\newblock {\em Applied Statistics}, 36(3):358--369.

\bibitem[\protect\astroncite{Tahmasbi and Rezaei}{2008}]{ta-re-08}
Tahmasbi, R. and Rezaei, S. (2008).
\newblock A two-parameter lifetime distribution with decreasing failure rate.
\newblock {\em Computational Statistics and Data Analysis}, 52(8):3889--3901.

\end{thebibliography}

\end{document}